\documentclass[onecolumn, 12pt, journal]{IEEEtran}  

\IEEEoverridecommandlockouts                              
\overrideIEEEmargins
\usepackage[utf8]{inputenc}
\usepackage{amsmath,bm}
\usepackage{amsfonts}
\usepackage{amssymb}
\usepackage{graphicx}
\usepackage{amsfonts}
\usepackage{hyperref}
\usepackage{caption}
\usepackage{subcaption}
\usepackage{multirow}
\usepackage{algorithm}
\usepackage[noend]{algpseudocode}

\usepackage{mathtools}
\usepackage{xcolor}
\usepackage{cite}
\usepackage[normalem]{ulem}

\newtheorem{Theorem}{Theorem}



\newtheorem{Remark}{Remark}
\newtheorem{Assumption}{Assumption}
\usepackage{tikz}
	\usetikzlibrary{shadows,shapes,arrows}
	\tikzstyle{frame} = [draw, -latex]
	\tikzstyle{line} = [draw]
	\tikzstyle{line2} = [draw, dashdotted]
	\tikzstyle{line3} = [draw, dashed]
	\tikzstyle{line3UD} = [draw, dashed]
	\tikzstyle{place} = [circle, draw=black, fill=white, thick, inner sep=2pt, minimum size=1mm]
	\tikzstyle{place2} = [circle, draw=black, fill=black, thick, inner sep=2pt, minimum size=1mm]
	\tikzstyle{placeRed} = [circle, draw=red, fill=red, thick, inner sep=2pt, minimum size=1mm]
	\tikzstyle{vertex} = [circle, draw=black, fill=black, thick, inner sep=2pt, minimum size=1mm]
\usepackage{tkz-euclide}
\usetikzlibrary{shapes.geometric}

\makeatletter
\def\algbackskip{\hskip-\ALG@thistlm}
\makeatother

\usepackage[switch,columnwise]{lineno}
\usepackage{setspace}
\usepackage{enumerate}
\allowdisplaybreaks

\title{\LARGE \bf Distributed Stochastic Model Predictive Control for an Urban Traffic Network}
\author{Viet Hoang Pham$^{1}$ and Hyo-Sung Ahn$^{1}$ 
\thanks{\small $^{1}$School of Mechanical Engineering, Gwangju Institute of Science and Technology, Gwangju, Korea. E-mails: {vietph@gist.ac.kr}; {hyosung@gist.ac.kr.}}
}

\begin{document}
\doublespacing
\maketitle 
\thispagestyle{empty}
\pagestyle{empty}

\begin{abstract}
In this paper, we design a stochastic model predictive control (MPC)-based traffic signal control method for urban networks when the uncertainties of the traffic model parameters (including the exogenous traffic flows and the turning ratios of downstream traffic flows) are taken into account.
Considering that the traffic model parameters are random variables with known expectations and variances, the traffic signal control and coordination problem is formulated as a quadratic program with linear and second-order cone constraints.
In order to reduce computational complexity, we suggest a way to decompose the optimization problem corresponding to the whole network into multiple subproblems.
By applying an Alternating Direction Method of Multipliers (ADMM) scheme, the optimal stochastic traffic signal splits are found in a distributed manner.
The effectiveness of the designed control method is validated via some simulations using VISSIM and MATLAB.
\end{abstract}

\section{Introduction}
Controlling traffic signals in junctions is always a crucial request in order to coordinate traffic flows, which enhances the safety and smoothness for the movement of vehicles in urban networks. Specially, the traffic demand has increased dramatically over past decades while the road infrastructures were rarely extended. Various control strategies \cite{MarkosPapageorgiou2003, LeiChen2016} have been developed to utilize radically the existing traffic resource to avoid traffic congestion.
By using on-line traffic data detected by road sensors, many coordinated traffic-responsive methods have been developed in order to obtain the optimal traffic signal settings. Some famous ones can be named as SCOOT \cite{PBHunt1982}, SCATS \cite{PRLowrie1982}, PRODYN \cite{JLFarges1983}, OPAC \cite{NHGartner1983} and RHODES \cite{PMirchandani1998}. However, these methods are not suitable to a network consisting of multiple junctions since they suffer from exponential complexity with either exhaustive search or dynamic programming optimization.
To cope with this issue, Linear-Quadratic Regulator (LQR)-based \cite{ChristinaDiakaki2002, KonstantinosAmpountolas2009} and Model Predictive Control (MPC)-based \cite{KonstantinosAmpountolas2009, KonstantinosAmpountolas2010, ShuLin2011, LucasBarcelosdeOliveira2010, EduardoCamponogara2011, AndyHFChow2020, RRNegenborn2008, SteliosTimotheou2015, ZhaoZhou2017, BaoLinYe2016, PietroGrandinetti2018} methods transformed the traffic signal control problems into easily solvable forms such as constrained quadratic or linear programs.
In these approaches, traffic models based on vehicles conservation law, such as store-and-forward model \cite{ChristinaDiakaki2002, KonstantinosAmpountolas2009, KonstantinosAmpountolas2010}, S-model \cite{ShuLin2011, ZhaoZhou2017} and cell-transmission model \cite{SteliosTimotheou2015, PietroGrandinetti2018}, are widely used to predict future traffic behaviors since they are simple but accurate enough to describe the traffic evolution over time and space.

In MPC-based approach, by using up-to-date traffic measurements and estimations, a constrained optimization problem corresponding to the traffic signal control problem over a finite prediction horizon of the considered urban network is formulated in every control time step. The objective function to be minimized is a combination of some network-related performance indexes (such as vehicle distribution, total time spent, etc) while the constraints, which are derived from the capacities of road links and junctions, are to guarantee the smooth operation of the whole network. By solving the traffic signal control problem, an optimal traffic signal timing plan over some time steps ahead is obtained. But only the elements corresponding to the current traffic signal cycle are implemented for junctions. Then the horizon is shifted by one sample to repeat the overall procedure with new updated traffic data. This repetition enhances the reliability of the computed control decisions.
Compare to LQR methods, MPC-based methods are considered to have a better reaction to variation of traffic dynamics, for example, the re-routing of vehicles \cite{Heydecker2005, AndyHFChow2020}.
In LQR approach, a feedback control law based on traffic dynamics of the considered network is derived to solve an infinite-horizontal and unconstrained optimization problem. However, some limitation constraints such as the capacity of the roads are not considered. Moreover, it is difficult to implement an LQR method in decentralized or distributed manners since the feedback law requires the information of the whole network.

Due to the promising performance and the applicability in real-time implementation, MPC-based traffic signal control approach has attracted a great attention \cite{BaoLinYe2019}.
Depending on the underlying optimization methods to solve optimal control problems, MPC traffic signal control methods can be grouped into centralized \cite{KonstantinosAmpountolas2009, KonstantinosAmpountolas2010, ShuLin2011}, decentralized \cite{LucasBarcelosdeOliveira2010, EduardoCamponogara2011, AndyHFChow2020} and distributed \cite{RRNegenborn2008, SteliosTimotheou2015, ZhaoZhou2017, BaoLinYe2016, PietroGrandinetti2018} ones.
In centralized framework, only one central controller is used to solve traffic signal control problems. However, the computational load may become too huge to complete in short time.
Thus, decentralized and distributed control schemes are proposed in order to reduce the computational complexity and execution time, which makes the application of MPC controllers feasible for a large-scale urban network.
To attain this objective, the whole network is decomposed into multiple smaller subnetworks and each of them is controlled by one controller, i.e., an agent in multiagent system. As a result, the traffic signal control problem corresponding to the overall network are reformulated as the union of multiple smaller subproblems, each corresponds to one subnetwork.
Multiple agents cooperate to find the optimal solution when each agent uses its own information and shares/exchanges information with its neighbors.
The main difference between decentralized and distributed methods is the role of interactions among subnetworks in the reformulated control problems.
In decentralized methods \cite{LucasBarcelosdeOliveira2010, EduardoCamponogara2011, AndyHFChow2020}, these interactions are considered as the known inputs and outputs of subnetworks in each iteration. They are computed by the local estimated solutions of subproblems in the previous iteration. Since the decentralized control problems are not equivalent to the centralized ones, only suboptimal solutions can be achieved in these methods.
Meanwhile, distributed methods aim to design frameworks, in which agents work cooperatively to find the optimal solution of the centralized control problems. By considering the interaction among subnetworks as the coupled constraints among subproblems \cite{RRNegenborn2008, SteliosTimotheou2015, ZhaoZhou2017, BaoLinYe2016, PietroGrandinetti2018}, the distributed control problems could be considered equivalent to the centralized ones.

In \cite{RRNegenborn2008}, Negenborn et. al. use auxiliary problem principle to design a multi-agent model predictive control framework for transportation networks. This method is also applied in two-level hierarchical framework for MPC traffic signal control of urban networks. Although the execution time is reduced significantly in distributed setup, this amount is still long for real-time application because of the slow convergence.
Dual-decomposition based methods are also frequently applied in distributed MPC traffic signal control \cite{BaoLinYe2016, PietroGrandinetti2018}. However, these methods require that the control problems are strongly convex optimization problems.
Meanwhile, Alternating Direction Methods of Multipliers (ADMM) schemes can guarantee the convergence to the optimal solution for general convex optimization problems \cite{StephenBoyd2011, BingshengHe2015, XinxinLi2015, WeiDeng2017}.
In traffic control literature, different ADMM schemes have been applied to design distributed control methods \cite{JackReilly2015, SteliosTimotheou2015}.
Following Asynchronous Distributed ADMM in \cite{ErminWei2013}, the authors in \cite{JackReilly2015} design a distributed optimization method and apply it to decentralized freeway control. However, the designed method of this paper requires that only one coupled constraint is considered in each iteration. This asynchronous scheme makes a distributed implementation difficult and slow down the convergence rate.
Algorithm 1 in \cite{SteliosTimotheou2015} is based on Jacobi ADMM scheme \cite{WeiDeng2017}. Although this algorithm allows agents to work in parallel, the convergence to the optimal solution is not proved theoretically.

Although MPC traffic signal control methods are widely studied, most of them only focus on nominal control problems under the assumption that all traffic model parameters, including exogenous in/out-flows and turning ratios of traffic flows, are predetermined precisely.
There are few works considering the uncertainties in the traffic model parameters.
The authors in \cite{TamasTettamanti2014} develop a constrained mini-max approach to achieve robust optimal signal splits for urban networks when the fluctuations of traffic demands are assumed to be bounded.
In \cite{VietHoangPham2020,VietHoangPham2020TITS}, we assume the uncertainties in exogenous traffic flows and turning ratios are small and added more constraints to guarantee the smooth operation of roads.
In traffic control literature, there are some further approaches that investigate the uncertainties in traffic demands and turning ratios. For isolated junctions, \cite{YafengYin2008} proposes three models to minimize the average delays with variant traffic demands, and \cite{LihuiZhang2010} includes probability information of day-to-day demand variations.
Liu et al. \cite{HaoLiu2022} apply distributionally robust optimization concept \cite{Calafiore2006} and Lax-Hopf solution to propose a traffic control model considering the uncertainties in turning ratios of the downstream traffic flows. In \cite{ZCSu2021}, an adaptive traffic controller combining approximate dynamic programming and model-based optimization is presented for urban networks with stochastic demands.

In this paper, we propose a stochastic MPC traffic signal control method for urban networks when the uncertainties in the traffic model parameters (i.e., exogenous in/out-flows and the turning ratios of the downstream traffic flows corresponding to some future cycles) are taken into account.
Under the assumption that the traffic model parameters are random variables with given expected values and variances, we formulate a stochastic version of the nominal MPC traffic signal control problem.
The optimal traffic signal splits are sought to minimize the expectation of the nominal cost function while guaranteeing the probability that all constraints corresponding to the smooth operations of road links and junctions are satisfied.
By borrowing the result of distributionally robust chance constraint \cite{Calafiore2006}, we reformulate the control problem as a quadratic program with linear and second-order cone constraints.
As the traffic network can be considered as the union of multiple subnetworks, this spatially structural property is utilized to design a distributed method for reducing the executing time in finding the optimal solution.
By using a simple operator splitting technique for inequality constraints and coupled constraints, the distributed control problem has a separated form which is applicable to ADMM methods and is equivalent to the stochastic MPC traffic signal control problem of the whole network.
Then we use the proximal ADMM scheme \cite{BingshengHe2015, XinxinLi2015, WeiDeng2017} to design a distributed method for agents using only their local information.
Moreover, specially formulaic properties of the linear and second-order cone constraints are exploited to reduce the complexity in local computational works. Agents are not required to solve local minimization problems. Instead, they use only simple arithmetic calculations in their update procedures.

Consequently, the main contributions of this paper are as follows:
\begin{itemize}
\item We employ the distributionally robust chance constraint concept in MPC framework to formulate a stochastic MPC traffic signal control problem for an urban network. It is a convex optimization problem which can be solved effectively by different optimization methods. This makes it possible to have a real-time implementation in both centralized and distributed manners. 
\item We design a fully distributed method for solving the stochastic MPC traffic signal control problem. The convergence to the optimal solution is guaranteed in both theoretical proof and simulations. In each iteration of the designed method, agents can work in parallel and use only simple arithmetic calculations. Moreover, the communication is required among neighboring agents.
\end{itemize}

The remainder of this paper is organized as follows.
Section II provides some mathematical backgrounds, which will be used in the analysis in the later sections.
In Section III, we use the store-and-forward model to formulate the nominal MPC traffic signal control problem.
Then its stochastic version is reformulated in Section IV.
In Section V, we apply ADMM to solve the stochastic MPC traffic signal control problem in a distributed manner.
We verify the effectiveness of our proposed method via simulations in Section VI.
Section VII concludes this paper.

\subsection*{Notations}
We use $\mathbb{R}$, $\mathbb{R}_{-}$, $\mathbb{R}^n$ and $\mathbb{R}^{m \times n}$ to denote the set of real numbers, the set of nonpositive real numbers, the set of $n$-dimension real vectors and the set of $m \times n$ matrices, respectively.
For a given set $\mathcal{A}$, $|\mathcal{A}|$ represents the cardinality of this set. 
When the set $\mathcal{A}$ has finite number of vectors, i.e., $\mathcal{A} = \{\textbf{a}_1, \textbf{a}_2, \dots, \textbf{a}_n\}$, we use $col \mathcal{A}$ and $row \mathcal{A}$ to define the column vector and row vector as
\[row \mathcal{A} = row \{\textbf{a}_1, \textbf{a}_2, \dots, \textbf{a}_n\} = [\textbf{a}_1^T, \textbf{a}_2^T, \dots, \textbf{a}_n^T],\]
\[col \mathcal{A} = col \{\textbf{a}_1, \textbf{a}_2, \dots, \textbf{a}_n\} = \left(row \mathcal{A}\right)^T.\]
Let $E\left[X\right]$ and $Var\left[X\right]$ be expected value and variance of a random variable $X$, respectively.
The correlation coefficient between two random variables $X$ and $Y$ is denoted by $CoRel\left[X, Y\right]$.
For a random vector $\textbf{X}$, we use $E\left[\textbf{X}\right]$ and $\boldsymbol{\Sigma}[\textbf{X}]$ to denote its expected value vector and covariance matrix.

\section{Preliminaries}
\subsubsection{Expected value and variance of random variables \cite{RoyDYates2004}}\label{subEV}
For any random variable $X$, we have
\begin{equation}\label{eq_square_RV}
E[X^2] = \left(E[X]\right)^2 + Var[X]
\end{equation}
A random vector $\textbf{X} = \left[X_1, X_2, \dots, X_n\right]^T$ is a collection of random variables $X_i, i = 1, \dots, n$.
The expected value vector of $\textbf{X}$ is $E[\textbf{X}] = \left[E[X_1], E[X_2], \dots, E[X_n]\right]^T$ and its covariance matrix $\boldsymbol{\Sigma}[\textbf{X}] = \left[\boldsymbol{\Sigma}[\textbf{X}]\right]_{ij}$ is an $n \times n$ matrix whose $ij$-element is defined by
\[\left[\boldsymbol{\Sigma}[\textbf{X}]\right]_{ij} = \left\{\begin{matrix*}[l]
Var[X_i], & i = j,\\
CoRel[X_i, X_j] \sqrt{Var[X_i]Var[X_j]}, & i \neq j,
\end{matrix*}\right.\]

Let $a_i, i = 1, \dots, n$ and $b$ be real coefficients; then the linear combination $Y = \textbf{a}^T \textbf{X} + b = \sum_{j = 1}^{n} a_jX_j + b$ is a random variable with $\textbf{a} = \left[a_1, \dots, a_n\right]^T$.
The expected value of the random variable $Y$ is
\begin{equation}\label{eq_mean_linearcombination}
E[Y] = \textbf{a}^T E\left[\textbf{X}\right] + b = \sum_{j = 1}^{n} a_j E[X_j] + b,
\end{equation}
and its variance is
\begin{align}
Var[Y] = \textbf{a}^T \boldsymbol{\Sigma}[\textbf{X}] \textbf{a} = \sum\limits_{j = 1}^{n} a_j^2 Var[X_j] + \sum\limits_{j = 1}^{n} \sum\limits_{k \neq j} CoRel[X_j, X_k]a_ja_k \sqrt{Var[X_j]Var[X_k]}.\label{eq_variance_linearcombination}
\end{align}
\subsubsection{Proximal ADMM}\label{subADMM}
Consider the following convex optimization problem
\begin{equation}\label{eq_ADMMproblem}
\min\limits_{\textbf{x} \in \mathcal{X}, \textbf{y} \in \mathcal{Y}} \textrm{ }\Psi_x(\textbf{x}) + \Psi_y(\textbf{y}) \textrm{ s.t. } \textbf{A}\textbf{x} + \textbf{B}\textbf{y} = \textbf{c}.
\end{equation}
where $\Psi_x(\textbf{x}), \Psi_y(\textbf{y})$ are convex functions, $\mathcal{X}, \mathcal{Y}$ are convex sets, $\textbf{A}, \textbf{B}$ and $\textbf{c}$ are known matrices and vector.
The Lagrangian function of the problem \eqref{eq_ADMMproblem} is given by \[\mathcal{L} = \Psi_x(\textbf{x}) + \Psi_y(\textbf{y}) + \boldsymbol{\lambda}^T\left(\textbf{A}\textbf{x} + \textbf{B}\textbf{y} - \textbf{c}\right)\] where $\boldsymbol{\lambda}$ is the dual variable associated with the equality constraint.
The ADMM scheme for solving \eqref{eq_ADMMproblem} is given by the iteration update \eqref{eq_ADMM} in which $\rho > 0$ is a penalty parameter and $\textbf{G}$ is a symmetric and positive semidefinite matrix.
\begin{subequations}\label{eq_ADMM}
\begin{align}
\textbf{x}(l+1) = \arg\min\limits_{\textbf{x} \in \mathcal{X}} \Bigl\{& \Psi_x(\textbf{x}) + \frac{\rho}{2}\left|\left|\textbf{A}\textbf{x} + \textbf{B}\textbf{y}(l) - \textbf{c} - \frac{1}{\rho}\boldsymbol{\lambda}(l)\right|\right|^2 + \frac{1}{2}\left|\left|\textbf{x} - \textbf{x}(l)\right|\right|_{\textbf{G}}^2 \Bigr\},\\
\textbf{y}(l+1) = \arg\min\limits_{\textbf{y} \in \mathcal{Y}} \Bigl\{& \Psi_y(\textbf{y}) + \frac{\rho}{2}\left|\left|\textbf{A}\textbf{x}(l+1) + \textbf{B}\textbf{y} - \textbf{c} - \frac{1}{\rho}\boldsymbol{\lambda}(l)\right|\right|^2 \Bigr\},\\
\boldsymbol{\lambda}(l+1) = \boldsymbol{\lambda}(l) - \rho \Bigl(& \textbf{A}\textbf{x}(l+1) + \textbf{B}\textbf{y}(l+1) - \textbf{c} \Bigr),
\end{align}
\end{subequations}
In \cite{BingshengHe2015}, B. He et. al. proved the worst-case $O\left(\frac{1}{l}\right)$ convergence rate for the ADMM scheme \eqref{eq_ADMM}. By using the analysis in \cite{WeiDeng2017}, this convergence rate can be refined to $o\left(\frac{1}{l}\right)$.
The following theorem summarizes the results in \cite{BingshengHe2015, XinxinLi2015} for the convergence of ADMM scheme \eqref{eq_ADMM}.
\begin{Theorem}\label{th_ADMM}
Let $\boldsymbol{\Theta} = blkdiag\left\{ \textbf{G}, \rho\textbf{D}_y^T\textbf{D}_y, \frac{1}{\rho}\textbf{I} \right\}$ and $\textbf{v} = [\textbf{x}^T, \textbf{y}^T, \boldsymbol{\lambda}^T]^T$. Then we have
\begin{enumerate}
\item $||\textbf{v}(l) - \textbf{v}^{opt}||_{\boldsymbol{\Theta}}^2 \le \frac{1}{(l+1)}||\textbf{v}(0) - \textbf{v}^{opt}||_{\boldsymbol{\Theta}}^2$ where $\textbf{v}^{opt}$ corresponds to one optimal solution of \eqref{eq_ADMMproblem}.
\item $||\textbf{v}(l) - \textbf{v}(l+1)||_{\boldsymbol{\Theta}}^2 = o\left(\frac{1}{l}\right)$
\item $|| \textbf{A}\textbf{x}(l) + \textbf{B}\textbf{y}(l) - \textbf{c}|| = o\left(\frac{1}{\sqrt{l}}\right)$
\item $|\psi_x(\textbf{x}(l)) + \psi_y(\textbf{y}(l)) - \psi_x(\textbf{x}^{opt}) + \psi_y(\textbf{y}^{opt})| = o\left(\frac{1}{\sqrt{l}}\right)$
\end{enumerate}
\end{Theorem}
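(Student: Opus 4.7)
The plan is to follow the variational-inequality (VI) framework of He and collaborators. First, I would encode the saddle-point condition of \eqref{eq_ADMMproblem} as a monotone VI: a point $\textbf{v}^{opt} = (\textbf{x}^{opt}, \textbf{y}^{opt}, \boldsymbol{\lambda}^{opt})$ is optimal if and only if, for every feasible $\textbf{v}$, $\Psi_x(\textbf{x}) + \Psi_y(\textbf{y}) - \Psi_x(\textbf{x}^{opt}) - \Psi_y(\textbf{y}^{opt}) + (\textbf{v} - \textbf{v}^{opt})^T F(\textbf{v}^{opt}) \ge 0$, where the affine operator $F$ is determined by $\textbf{A}, \textbf{B}$ and the linear constraint. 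Then I would write the first-order conditions of the two subproblems in \eqref{eq_ADMM} as subgradient inclusions and substitute $\boldsymbol{\lambda}(l+1) = \boldsymbol{\lambda}(l) - \rho(\textbf{A}\textbf{x}(l+1) + \textbf{B}\textbf{y}(l+1) - \textbf{c})$ to obtain a perturbed VI satisfied by $\textbf{v}(l+1)$ whose residual depends only on the successive difference $\textbf{v}(l+1) - \textbf{v}(l)$.

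Second, I would combine the two VIs (one at $\textbf{v}^{opt}$ and one at $\textbf{v}(l+1)$) together with convexity of $\Psi_x, \Psi_y$ and positive semidefiniteness of $\textbf{G}$ to derive the key contraction inequality
\[\|\textbf{v}(l) - \textbf{v}^{opt}\|_{\boldsymbol{\Theta}}^2 - \|\textbf{v}(l+1) - \textbf{v}^{opt}\|_{\boldsymbol{\Theta}}^2 \ge \|\textbf{v}(l) - \textbf{v}(l+1)\|_{\boldsymbol{\Theta}}^2.\]
The block structure of $\boldsymbol{\Theta}$ (with $\textbf{G}$ in the $\textbf{x}$-block, $\rho \textbf{D}_y^T\textbf{D}_y$ in the $\textbf{y}$-block corresponding to the image of $\textbf{B}$ in the constraint, and $\rho^{-1}\textbf{I}$ in the $\boldsymbol{\lambda}$-block) is chosen precisely so that, after completing squares, the cross terms cancel and a nonnegative remainder survives. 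The $\textbf{G}$-block absorbs the proximal term from the $\textbf{x}$-update, so $\textbf{G} \succeq 0$ suffices in place of any strong convexity of $\Psi_x$. From this single inequality, monotonicity of $\|\textbf{v}(l) - \textbf{v}^{opt}\|_{\boldsymbol{\Theta}}^2$ and the summability bound $\sum_l \|\textbf{v}(l) - \textbf{v}(l+1)\|_{\boldsymbol{\Theta}}^2 \le \|\textbf{v}(0) - \textbf{v}^{opt}\|_{\boldsymbol{\Theta}}^2$ are immediate consequences.

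Third, I would deduce the four rate claims from this backbone. Claim 1 is the He--Yuan $O(1/l)$ bound, obtained by combining the monotonicity of the $\boldsymbol{\Theta}$-distance with the telescoping identity. For claim 2, I would first show that the successive difference $\|\textbf{v}(l) - \textbf{v}(l+1)\|_{\boldsymbol{\Theta}}^2$ is itself monotonically nonincreasing, by applying the contraction argument to the shifted pair $(\textbf{v}(l+1), \textbf{v}(l+2))$ against $(\textbf{v}(l), \textbf{v}(l+1))$, and then invoke the Deng--Yin lemma that any monotone nonincreasing summable sequence decays as $o(1/l)$. For claim 3, the feasibility residual $\textbf{A}\textbf{x}(l) + \textbf{B}\textbf{y}(l) - \textbf{c}$ equals $\rho^{-1}(\boldsymbol{\lambda}(l-1) - \boldsymbol{\lambda}(l))$, whose norm is controlled by $\sqrt{\rho}\,\|\textbf{v}(l-1) - \textbf{v}(l)\|_{\boldsymbol{\Theta}}$, hence is $o(1/\sqrt{l})$ by claim 2. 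Claim 4 then follows by substituting $\textbf{v} = \textbf{v}^{opt}$ into the perturbed VI at iterate $\textbf{v}(l)$, which bounds the objective gap above and below by a linear combination of the successive difference and the feasibility residual, both already $o(1/\sqrt{l})$.

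The main obstacle I anticipate is the algebraic manipulation in stage two: one must carefully expand the two VIs, track which cross terms recombine with the proximal penalty and the multiplier update, and verify that the $\textbf{D}_y^T\textbf{D}_y$-block in $\boldsymbol{\Theta}$ (which I expect corresponds to $\textbf{B}^T\textbf{B}$ arising from the constraint $\textbf{A}\textbf{x} + \textbf{B}\textbf{y} = \textbf{c}$) is the unique choice for which the residual assembles into a perfect $\boldsymbol{\Theta}$-square. A secondary subtlety is the monotonicity of successive differences used for claim 2, which requires monotonicity of the subdifferentials of $\Psi_x$ and $\Psi_y$ together with a second application of the contraction argument rather than a repetition of the first. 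Once these identities are established, the remaining deductions are a chain of standard norm estimates.
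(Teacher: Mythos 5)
The paper contains no proof of this theorem at all: it is stated as a summary of known results, with items 1), 2), and 3)--4) attributed respectively to Theorem 6.1 of the cited He reference and Theorems 5.2--5.3 of the cited Deng--Yin reference. Your variational-inequality reconstruction --- the monotone VI characterization of saddle points, the $\boldsymbol{\Theta}$-norm contraction inequality with $\textbf{G}$ absorbing the proximal term, the Deng--Yin lemma that a monotone nonincreasing summable sequence is $o(1/l)$, and the identity $\textbf{A}\textbf{x}(l)+\textbf{B}\textbf{y}(l)-\textbf{c}=\frac{1}{\rho}\left(\boldsymbol{\lambda}(l-1)-\boldsymbol{\lambda}(l)\right)$ --- is precisely the machinery of those cited papers, so your route coincides with the original sources rather than with anything done in this paper. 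You are also right that $\textbf{D}_y$ in the statement is undefined and must be read as $\textbf{B}$.

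There is, however, a genuine mismatch on item 1. Your derivation (Fej\'er monotonicity of $||\textbf{v}(l)-\textbf{v}^{opt}||_{\boldsymbol{\Theta}}$, telescoping of $\sum_l ||\textbf{v}(l)-\textbf{v}(l+1)||_{\boldsymbol{\Theta}}^2$, and monotonicity of the successive differences) yields
\begin{equation*}
||\textbf{v}(l)-\textbf{v}(l+1)||_{\boldsymbol{\Theta}}^2 \le \frac{1}{l+1}||\textbf{v}(0)-\textbf{v}^{opt}||_{\boldsymbol{\Theta}}^2,
\end{equation*}
which is what He--Yuan actually prove. It does not, and cannot, yield the inequality as printed, namely $||\textbf{v}(l)-\textbf{v}^{opt}||_{\boldsymbol{\Theta}}^2 \le \frac{1}{l+1}||\textbf{v}(0)-\textbf{v}^{opt}||_{\boldsymbol{\Theta}}^2$: that statement is false for general convex problems. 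The contraction argument gives no decay rate for the distance to a fixed optimum, and when the solution set is not a singleton the iterates may converge to a different optimal point, in which case $||\textbf{v}(l)-\textbf{v}^{opt}||_{\boldsymbol{\Theta}}$ (whose $\boldsymbol{\lambda}$-block carries the positive definite weight $\frac{1}{\rho}\textbf{I}$) does not even tend to zero, contradicting the printed bound. You should have flagged item 1 as a misprint --- the left-hand side should be the successive difference $||\textbf{v}(l)-\textbf{v}(l+1)||_{\boldsymbol{\Theta}}^2$ --- rather than asserting that your telescoping argument establishes it; as written, your plan claims a conclusion that your own key inequality shows is unobtainable. Items 2, 3 and 4 of your plan are sound, up to a harmless constant slip in item 3 (the residual norm is bounded by $\frac{1}{\sqrt{\rho}}||\textbf{v}(l-1)-\textbf{v}(l)||_{\boldsymbol{\Theta}}$, not $\sqrt{\rho}$ times it) and the sign typo inside item 4's absolute value, which is the paper's error, not yours.
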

The item 1), 2) and 3) \& 4) of Theorem \ref{th_ADMM} are corresponding to Theorem 6.1 in \cite{BingshengHe2015}, Theorem 5.2 and 5.3 in \cite{WeiDeng2017}, respectively.
\subsubsection{Projection on a Bounded Set and a Second-Order Cone}\label{subProjBC}
A bounded set $\mathcal{B}$ is described by the set of scalars as $\mathcal{B} = \left\{x \in \mathbb{R}: x_{lb} \le x \le x_{ub}\right\}$ where $x_{lb} \in \mathbb{R}$ and $x_{ub} \in \mathbb{R}$ are respectively called the lower and the upper bounds.
A unit second-order cone of dimension $n+1$ is defined by $\mathcal{C}^{n+1} = \left\{\textbf{x} = \left[\begin{matrix} \textbf{x}_1 \\ x_2 \end{matrix} \right]: \textbf{x}_1 \in \mathbb{R}^n, x_2 \in \mathbb{R}, \left|\left|\textbf{x}_1\right|\right| \le x_2\right\}$.
We know that both $\mathcal{B}$ and $\mathcal{C}^{n+1}$ are convex sets in a Hilbert space.
The following theorem characterizes a property for the projection onto these sets, which will be used in the proof of Theorem \ref{th_updatelaw} (APPENDIX-B).
\begin{Theorem}[Theorem 2.3. in \cite{DavidKinderlehrer1980}]\label{th_proj}
Let $\Omega$ be a closed convex set in a Hilbert space.
Then $\textbf{z} = Prj_{\Omega}(\textbf{x})$, the projection of $\textbf{x}$ onto $\Omega$, if and only if $\textbf{z} \in \Omega: (\textbf{z} - \textbf{x})^T(\textbf{y} - \textbf{z}) \ge 0, \forall \textbf{y} \in \Omega$.
\end{Theorem}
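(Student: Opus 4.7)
The plan is to establish the equivalence by first recognizing that $\textbf{z} = Prj_\Omega(\textbf{x})$ is equivalent to $\textbf{z}$ being the unique minimizer of the function $\phi(\textbf{y}) = \frac{1}{2}\left\|\textbf{y} - \textbf{x}\right\|^2$ over $\textbf{y} \in \Omega$. Existence and uniqueness of this minimizer follow from the closedness and convexity of $\Omega$ together with the parallelogram identity in the Hilbert space, which I would invoke at the outset so that the notation $Prj_\Omega(\textbf{x})$ is well-defined before the equivalence is tested.

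For the necessity direction ($\Rightarrow$), I would exploit convexity by forming the segment $\textbf{z}_t = (1-t)\textbf{z} + t\textbf{y} = \textbf{z} + t(\textbf{y} - \textbf{z})$, which lies in $\Omega$ for every $\textbf{y} \in \Omega$ and every $t \in (0, 1]$. Minimality of $\textbf{z}$ forces $\phi(\textbf{z}_t) \ge \phi(\textbf{z})$, and expanding this inequality using the inner-product structure produces a polynomial in $t$ of the form $2t(\textbf{z} - \textbf{x})^T(\textbf{y} - \textbf{z}) + t^2\left\|\textbf{y} - \textbf{z}\right\|^2 \ge 0$. Dividing by $t > 0$ and then letting $t \to 0^+$ kills the quadratic contribution and yields the claimed variational inequality.

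For the sufficiency direction ($\Leftarrow$), I would use the Pythagorean-style decomposition: for any $\textbf{y} \in \Omega$,
\[
\left\|\textbf{y} - \textbf{x}\right\|^2 = \left\|\textbf{y} - \textbf{z}\right\|^2 + 2(\textbf{y} - \textbf{z})^T(\textbf{z} - \textbf{x}) + \left\|\textbf{z} - \textbf{x}\right\|^2.
\]
The middle term is nonnegative by hypothesis and the first is manifestly nonnegative, so $\phi(\textbf{y}) \ge \phi(\textbf{z})$, meaning that $\textbf{z}$ attains the minimum and therefore coincides with $Prj_\Omega(\textbf{x})$.

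The two algebraic expansions above are routine once set up; the only genuinely delicate point is the existence-and-uniqueness preliminary, where the Hilbert-space structure (rather than mere finite-dimensionality) is what permits the argument via the parallelogram law applied to any minimizing sequence. In the finite-dimensional setting actually needed by the application in Theorem \ref{th_updatelaw}, this subtlety collapses to continuity of $\phi$ on a closed convex set together with strict convexity of the Euclidean norm, so the entire proof reduces to a compact variational-calculus exercise.
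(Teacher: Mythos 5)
Your proof is correct, and it is exactly the standard argument: necessity via the segment $\textbf{z} + t(\textbf{y}-\textbf{z})$ and the limit $t \to 0^+$, sufficiency via the Pythagorean expansion of $\left\|\textbf{y}-\textbf{x}\right\|^2$, with existence and uniqueness of the minimizer secured by the parallelogram law. Note that the paper itself supplies no proof at all --- it imports the statement verbatim as Theorem 2.3 of the cited Kinderlehrer--Stampacchia monograph --- and your argument is precisely the proof given in that source, so there is nothing to reconcile beyond observing that your preliminary existence-and-uniqueness step corresponds to that book's preceding projection theorem rather than to anything in this paper.
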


It is straightforward to verify that the projection of any point $y \in \mathbb{R}$ onto the bounded set $\mathcal{B}$ is
\begin{equation}\label{eq_ProjBoundedSet}
Prj_{\mathcal{B}}(y) = \max\left\{x_{lb}, \min\left\{y, x_{ub}\right\}\right\}
\end{equation}
For any point $\textbf{y} = \left[\begin{matrix} \textbf{y}_1^T & y_2 \end{matrix} \right]^T \in \mathbb{R}^{n+1}$ where $\textbf{y}_1 \in \mathbb{R}^n$ and $y_2 \in \mathbb{R}$, its projection onto $\mathcal{C}^{n+1}$, the standard unit second-order cone of dimension $n + 1$, is computed by the equation \eqref{eq_ProjConeSet}. Detailed analysis can be found in \cite{LingchenKong2009}.
\begin{equation}\label{eq_ProjConeSet}
Prj_{\mathcal{C}^{n+1}}(\textbf{y}) = \omega_1(\textbf{y}) \left[\begin{matrix} -\boldsymbol{\theta}(\textbf{y}) \\ 1 \end{matrix} \right] + \omega_2(\textbf{y}) \left[\begin{matrix} \boldsymbol{\theta}(\textbf{y}) \\ 1 \end{matrix} \right]
\end{equation}
where $\omega_1(\textbf{y}) = \frac{1}{2}\max\left\{0, y_2 - \left|\left|\textbf{y}_1\right|\right|_2\right\}$, $\omega_2(\textbf{y}) = \frac{1}{2}\max\left\{0, y_2 + \left|\left|\textbf{y}_1\right|\right|_2\right\}$ and $\boldsymbol{\theta}(\textbf{y}) =\frac{\textbf{y}_1}{\left|\left|\textbf{y}_1\right|\right|_2}$ if $\textbf{y}_1 \neq \textbf{0}$ and is any vector of length 1 in $\mathbb{R}^n$ if $\textbf{y}_1 = \textbf{0}$.

Moreover, we consider a vector $\textbf{x} = \left[\begin{matrix} \textbf{x}_1^T & \textbf{x}_2^T & \cdots & \textbf{x}_n^T \end{matrix} \right]^T$ and the set $\Omega = \Omega_1 \times \Omega_2 \times \cdots \times \Omega_n$ where the dimensions of the vector $\textbf{x}_i$ and the set $\Omega_i$ are equal for all $i = 1, \dots, n$.
The projection of $\textbf{x}$ onto the set $\Omega$ is the vector
\begin{equation}\label{eq_ProjOntoSets}
Prj_{\Omega}(\textbf{x}) = col\left\{ Prj_{\Omega_1}(\textbf{x}_1), Prj_{\Omega_2}(\textbf{x}_2), \cdots, Prj_{\Omega_n}(\textbf{x}_n) \right\}
\end{equation}
\subsubsection{Distributionally robust chance constraint}\label{subRC}
Consider the affine function $\hat{\textbf{a}}^T\textbf{x} + \hat{b}$ where $\textbf{x} = \left[x_1, x_2, \dots, x_n\right]^T$ is a variable vector, $\hat{\textbf{a}} = \left[\hat{a}_1, \dots, \hat{a}_n\right]^T$ and $\hat{b}$ are uncertain parameters with given probability distributions.
Assume $\hat{\textbf{f}} = [\hat{\textbf{a}}^T, \hat{b}]^T$ is a random vector in an ambiguity set $\mathcal{F}$ with known expectation $E[\hat{\textbf{f}}] = [E\left[\hat{a}_1\right], \dots, E\left[\hat{a}_n\right], E[\hat{b}]]^T$ and covariance matrix $\boldsymbol{\Sigma}[\hat{\textbf{f}}]$.
As stated in \cite{Calafiore2006}, the constraint $\min_{\hat{\textbf{f}} \in \mathcal{F}} \textrm{P}\left( \hat{\textbf{a}}^T\textbf{x} + \hat{b} \le 0\right) \ge 1 - \epsilon^0$ can be converted into the following inequality
\begin{equation}\label{eq_chacneConstraintsIneq}
\sqrt{\frac{1 - \epsilon^0}{\epsilon^0}\left[\begin{matrix}\textbf{x}^T & 1\end{matrix}\right] \boldsymbol{\Sigma}[\hat{\textbf{f}}] \left[\begin{matrix}\textbf{x}\\ 1\end{matrix}\right]} + \sum_{j = 1}^{n} E\left[\hat{a}_j\right]x_j + E\left[\hat{b}\right] \le 0.
\end{equation}
Because $\boldsymbol{\Sigma}[\hat{\textbf{f}}]$ is a symmetric, positive semidefinite matrix, there always exists the matrix $\textbf{G}_{\hat{\textbf{f}}}$ such that $\boldsymbol{\Sigma}[\hat{\textbf{f}}] = \textbf{G}_{\hat{\textbf{f}}}^T\textbf{G}_{\hat{\textbf{f}}}$.
Then the inequality \eqref{eq_chacneConstraintsIneq} is equivalent to the following condition
\begin{equation}\label{eq_soc_constraint}
\left[\begin{matrix}
\sqrt{\frac{1 - \epsilon^0}{\epsilon^0}} \textbf{G}_{\hat{\textbf{f}}} \left[\begin{matrix}\textbf{x}^T & 1\end{matrix}\right]^T\\
- \sum_{j = 1}^{n} E\left[\hat{a}_j\right]x_j - E\left[\hat{b}\right]
\end{matrix}\right] \in \mathcal{C}^{n + 1}
\end{equation}
where $\mathcal{C}^{n + 1}$ is the unit second-order cone of dimension $n+1$.

\section{Model-based predicted traffic signal control}
In this paper, we employ store-and-forward modeling approach, which is initially proposed by Gazis et al. \cite{DenosCGazis1963} and widely used in urban traffic control research \cite{ChristinaDiakaki2002, KonstantinosAmpountolas2009, KonstantinosAmpountolas2010, BaoLinYe2016}, to formulate MPC traffic signal control problems. Our modified traffic model is based on the analysis in \cite{ChristinaDiakaki2002} and the modifications aim to have a simple graphical illustration for a general urban network.
\subsection{Graph representation for urban traffic network}
An urban network is considered as a set of junctions connected by roads. Each road consists of one or more lanes and a junction can be signalized or not. Here, we define a junction as a common crossing area of different roads. The traffic is divided into streams where each of them is defined as a flow of vehicles from one road to another road crossing the corresponding junction. Two streams at one junction are called compatible if they can safely cross the junction simultaneously. In a signalized junction, some sets of compatible streams are given right of way consequently in order to avoid conflict. Such a set is called a signal phase. We define a road link as the set of streams in one road that have the right of way in the same signal phase. In the view of traffic infrastructure, one road link corresponds to one or some specific lanes in the same road since vehicles tend to choose the suitable lanes in the road depending on their turning purpose (left/right/straight) at the junction. Sometimes, this selection is due to traffic rule, e.g., only vehicles in some special lanes are allowed to turn left. 
Thus, an urban network can be represented by a directed graph $\mathcal{T} = (\mathcal{J}, \mathcal{L})$ where $\mathcal{J} = \{J_1, J_2, \dots\}$ is the set of junctions and $\mathcal{L} = \{1, 2, \cdots\}$ is the set of road links.

We use $\sigma(z)$ and $\tau(z)$ to denote its upstream junction and downstream junction for a road link $z \in \mathcal{L}$, respectively.
It means vehicles move from $\sigma(z)$ to $\tau(z)$ in the road link $z$.
Let $\mathcal{O}$ be the set of all external source nodes and destination nodes which are outside of the considered urban network; but they are connected to the network.
So $\sigma(z), \tau(z) \in \mathcal{J} \cup \mathcal{O}$ for all road links $z \in \mathcal{L}$.
Consider two road links $z, w \in \mathcal{L}$, if $\sigma(z) = \tau(w)$ then the road link $z$ is a downstream neighbor of the road link $w$ and the road link $w$ is an upstream neighbor of the road link $z$.
The set $\mathcal{N}_z^-$ of the downstream neighbors and the set $\mathcal{N}_z^+$ of the upstream neighbors corresponding to the road link $z \in \mathcal{L}$ are defined as
\[\mathcal{N}_z^- = \{w \in \mathcal{L}: \sigma(w) = \tau(z)\}, \textrm{ } \mathcal{N}_z^+ = \{w \in \mathcal{L}: \tau(w) = \sigma(z)\}.\]
For a node $J_v \in \mathcal{J} \cup \mathcal{O}$, we use $\mathcal{L}_{J_v}^{in}$ and $\mathcal{L}_{J_v}^{out}$ to denote the sets of its incoming road links and its outgoing road links.
\[\mathcal{L}_{J_v}^{in} = \{z \in \mathcal{L}: \tau(z) = {J_v}\}, \textrm{ } \mathcal{L}_{J_v}^{out} = \{z \in \mathcal{L}: \sigma(z) = {J_v}\}\]
It is easy to see that $\mathcal{N}_z^- \subset \mathcal{L}_{J_v}^{out}$ if $z \in \mathcal{L}_{J_v}^{in}$ and $\mathcal{N}_z^+ \subset \mathcal{L}_{J_v}^{in}$ if $z \in \mathcal{L}_{J_v}^{out}$.
We also define the set of source and destination road links for the urban network by
\[\mathcal{L}^{in} = \{z \in \mathcal{L}: \sigma(z) \in \mathcal{O}\}, \textrm{ } \mathcal{L}^{out} = \{z \in \mathcal{L}: \tau(z) \in \mathcal{O}\}.\]

Consider the left figure in Fig. \ref{fig_exampleUTN} as an illustrative example.
The original network consists of $4$ junctions.
\begin{figure}[htb]
\centering
\raisebox{0.075\height}{\includegraphics[width=0.28\textwidth]{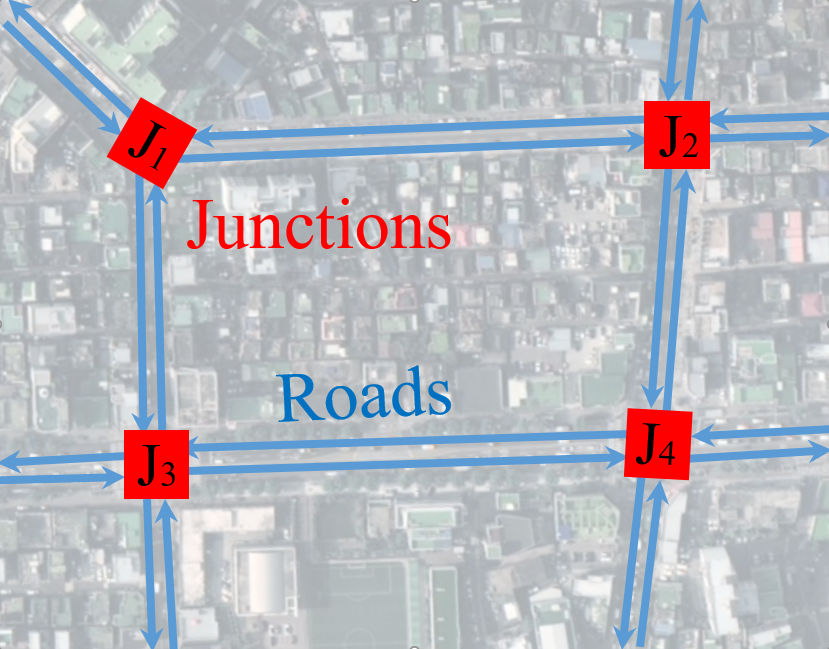}}
\scalebox{0.3}{\begin{tikzpicture}[
textnode/.style={rectangle, fill=white, opacity=0.9, minimum size=1mm, text=blue,text opacity=1,font=\fontsize{25}{25}\selectfont},
internalnode/.style={rectangle, draw=black, fill=white,  very thick, minimum size=30mm, text=blue,text opacity=1,font=\fontsize{45}{45}\selectfont},
boundarynode1/.style={rectangle, draw=white, fill=white,  very thick, minimum width=1mm, minimum height=1mm, text=blue,text opacity=1,font=\fontsize{25}{25}\selectfont},
boundarynode2/.style={rectangle, draw=white, fill=white,  very thick, minimum width=1mm, minimum height=1mm, text=blue,text opacity=1,font=\fontsize{25}{25}\selectfont},
ellipsenodeIn1/.style={ellipse, draw=black, fill=black, minimum height=4mm, minimum width= 2mm},
ellipsenodeIn2/.style={ellipse, draw=black, fill=black, minimum width=4mm, minimum height= 2mm},
ellipsenodeOut1/.style={ellipse, draw=black, fill=white, minimum height=4mm, minimum width= 2mm},
ellipsenodeOut2/.style={ellipse, draw=black, fill=white, minimum width=4mm, minimum height= 2mm},
framenode/.style={rectangle, fill=white, opacity=0.0, minimum size=150mm, text=blue},
]
\node at (2.25,9.5) [framenode] (nFrame) {};
\node at (0,12.5) [internalnode] (n1) {$J_{1}$};
\node at (6.5,12.5) [internalnode] (n2) {$J_{2}$};
\node at (0,6) [internalnode] (n3) {$J_{3}$};
\node at (6.5,6) [internalnode] (n4) {$J_{4}$};
\node at (-4,12.5) [boundarynode2] (nB1) {};
\node at (-4.1,13.3) [ellipsenodeOut1] (nBi1) {};
\node at (-4.1,11.7) [ellipsenodeIn1] (nBo1) {};
\node at (6.5,16.5) [boundarynode1] (nB2) {};
\node at (5.7,16.6) [ellipsenodeIn2] (nBi2) {};
\node at (7.3,16.6) [ellipsenodeOut2] (nBo2) {};
\node at (10.5,12.5) [boundarynode2] (nB3) {};
\node at (10.6,13.3) [ellipsenodeIn1] (nBi3) {};
\node at (10.6,11.7) [ellipsenodeOut1] (nBo3) {};
\node at (-4,6) [boundarynode2] (nB4) {};
\node at (-4.1,6.8) [ellipsenodeOut1] (nBi4) {};
\node at (-4.1,5.6) [ellipsenodeIn1] (nBo4_1) {};
\node at (-4.1,4.8) [ellipsenodeIn1] (nBo4_2) {};
\node at (10.5,6) [boundarynode2] (nB5) {};
\node at (10.6,7.2) [ellipsenodeIn1] (nBi5_1) {};
\node at (10.6,6.4) [ellipsenodeIn1] (nBi5_2) {};
\node at (10.6,5.2) [ellipsenodeOut1] (nBo5) {};
\node at (0,2.0) [boundarynode1] (nB6) {};
\node at (-0.8,1.9) [ellipsenodeOut2] (nBo6) {};
\node at (0.4,1.9) [ellipsenodeIn2] (nBi6_1) {};
\node at (1.2,1.9) [ellipsenodeIn2] (nBi6_2) {};
\node at (6.5,2.0) [boundarynode1] (nB7) {};
\node at (6.9,1.9) [ellipsenodeIn2] (nBi7_1) {};
\node at (7.7,1.9) [ellipsenodeIn2] (nBi7_2) {};
\node at (5.7,1.9) [ellipsenodeOut2] (nBo7) {};
\draw[->,{line width=3pt},black!40, transform canvas={xshift=-8mm}] (nB2.south)--(nB2.south|-n2.north);
\draw[<-,{line width=3pt},black!40, transform canvas={xshift=8mm}] (nB2.south)--(nB2.south|-n2.north);

\draw[<-,{line width=3pt},black!40, transform canvas={yshift=8mm}] (nB1.east)->(n1.west);
\draw[->,{line width=3pt},black!40, transform canvas={yshift=-8mm}] (nB1.east)->(n1.west);
\draw[<-,{line width=3pt},black!40, transform canvas={yshift=4mm}] (n1.east)->(n2.west);
\draw[<-,{line width=3pt},black!40, transform canvas={yshift=12mm}] (n1.east)->(n2.west);
\draw[->,{line width=3pt},black!40, transform canvas={yshift=-8mm}] (n1.east)->(n2.west);
\draw[<-,{line width=3pt},black!40, transform canvas={yshift=8mm}] (n2.east)->(nB3.west);
\draw[->,{line width=3pt},black!40, transform canvas={yshift=-8mm}] (n2.east)->(nB3.west);

\draw[->,{line width=3pt},black!40, transform canvas={xshift=-12mm}] (n1.south)--(n1.south|-n3.north);
\draw[->,{line width=3pt},black!40, transform canvas={xshift=-4mm}] (n1.south)--(n1.south|-n3.north);
\draw[<-,{line width=3pt},black!40, transform canvas={xshift=8mm}] (n1.south)--(n1.south|-n3.north);
\draw[->,{line width=3pt},black!40, transform canvas={xshift=-12mm}] (n2.south)--(n2.south|-n4.north);
\draw[->,{line width=3pt},black!40, transform canvas={xshift=-4mm}] (n2.south)--(n2.south|-n4.north);
\draw[<-,{line width=3pt},black!40, transform canvas={xshift=8mm}] (n2.south)--(n2.south|-n4.north);

\draw[<-,{line width=3pt},black!40, transform canvas={yshift=8mm}] (nB4.east)->(n3.west);
\draw[->,{line width=3pt},black!40, transform canvas={yshift=-4mm}] (nB4.east)->(n3.west);
\draw[->,{line width=3pt},black!40, transform canvas={yshift=-12mm}] (nB4.east)->(n3.west);
\draw[<-,{line width=3pt},black!40, transform canvas={yshift=12mm}] (n3.east)->(n4.west);
\draw[<-,{line width=3pt},black!40, transform canvas={yshift=4mm}] (n3.east)->(n4.west);
\draw[->,{line width=3pt},black!40, transform canvas={yshift=-4mm}] (n3.east)->(n4.west);
\draw[->,{line width=3pt},black!40, transform canvas={yshift=-12mm}] (n3.east)->(n4.west);
\draw[<-,{line width=3pt},black!40, transform canvas={yshift=12mm}] (n4.east)->(nB5.west);
\draw[<-,{line width=3pt},black!40, transform canvas={yshift=4mm}] (n4.east)->(nB5.west);
\draw[->,{line width=3pt},black!40, transform canvas={yshift=-8mm}] (n4.east)->(nB5.west);

\draw[->,{line width=3pt},black!40, transform canvas={xshift=-8mm}] (n3.south)--(n3.south|-nB6.north);
\draw[<-,{line width=3pt},black!40, transform canvas={xshift=4mm}] (n3.south)--(n3.south|-nB6.north);
\draw[<-,{line width=3pt},black!40, transform canvas={xshift=12mm}] (n3.south)--(n3.south|-nB6.north);
\draw[->,{line width=3pt},black!40, transform canvas={xshift=-8mm}] (n4.south)--(n4.south|-nB7.north);
\draw[<-,{line width=3pt},black!40, transform canvas={xshift=4mm}] (n4.south)--(n4.south|-nB7.north);
\draw[<-,{line width=3pt},black!40, transform canvas={xshift=12mm}] (n4.south)--(n4.south|-nB7.north);
\node at (5.7,15.35) [textnode] (nL1) {$1$};
\node at (7.3,15.35) [textnode] (nL2) {$2$};

\node at (-2.6,13.25) [textnode] (nL3) {$3$};
\node at (-2.6,11.75) [textnode] (nL4) {$4$};

\node at (3,13.7) [textnode] (nL5) {$5$};
\node at (4,12.9) [textnode] (nL6) {$6$};
\node at (3,11.7) [textnode] (nL7) {$7$};

\node at (9,13.25) [textnode] (nL8) {$8$};
\node at (9,11.75) [textnode] (nL9) {$9$};

\node at (-1.15,10) [textnode] (nL10) {$10$};
\node at (-0.3,8.75) [textnode] (nL11) {$11$};
\node at (0.75,10) [textnode] (nL12) {$12$};

\node at (5.25,10) [textnode] (nL13) {$13$};
\node at (6.15,8.75) [textnode] (nL14) {$14$};
\node at (7.25,10) [textnode] (nL15) {$15$};

\node at (-2.5,6.85) [textnode] (nL6) {$16$};
\node at (-2.8,5.7) [textnode] (nL17) {$17$};
\node at (-2.8,4.8) [textnode] (nL18) {$18$};

\node at (2.8,7.25) [textnode] (nL19) {$19$};
\node at (3.95,6.4) [textnode] (nL20) {$20$};
\node at (2.7,5.6) [textnode] (nL21) {$21$};
\node at (3.7,4.75) [textnode] (nL22) {$22$};

\node at (9.2,7.25) [textnode] (nL23) {$23$};
\node at (9.2,6.35) [textnode] (nL24) {$24$};
\node at (9.1,5.15) [textnode] (nL25) {$25$};

\node at (-0.75,3.3) [textnode] (nL26) {$26$};
\node at (0.26,3.2) [textnode] (nL27) {$27$};
\node at (1.25,3.2) [textnode] (nL28) {$28$};

\node at (5.6,3.3) [textnode] (nL29) {$29$};
\node at (6.75,3.2) [textnode] (nL30) {$30$};
\node at (7.7,3.2) [textnode] (nL31) {$31$};
\end{tikzpicture}}
\caption{An example of urban network consisting of $4$ junctions. The left figure is source image from Google Earth Map and the right figure is the graph representation.} \label{fig_exampleUTN}
\end{figure}
\begin{table}
\begin{center}
\begin{minipage}{.6\linewidth}
\captionof{table}{Sequences of traffic signal phases.}\label{tbl_sequencePhases}
\scalebox{0.8}
{\begin{tabular}{c|c|c|c|c|c}
\hline
Type & Junction & $1^{st}$ phase & $2^{nd}$ phase & $3^{rd}$ phase & $4^{th}$ phase\\
\hline
Type 1 & \rule{0pt}{15pt} $J_1$ & \includegraphics[width=0.035\textwidth, angle=90]{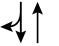} & \includegraphics[width=0.035\textwidth]{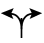}  & \includegraphics[width=0.035\textwidth, angle=90]{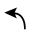} & none \\
\hline
Type 2 & \rule{0pt}{15pt} $J_2$ & \includegraphics[width=0.035\textwidth]{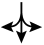} & \includegraphics[width=0.035\textwidth, angle=90]{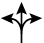} & \includegraphics[width=0.035\textwidth]{4_phasetype4.png} & \includegraphics[width=0.035\textwidth, angle=90]{4_phasetype3.png}\\
\hline
Type 3 & \rule{0pt}{15pt} $J_3, J_4$ & \includegraphics[width=0.035\textwidth]{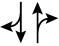} & \includegraphics[width=0.035\textwidth]{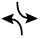}  & \includegraphics[width=0.035\textwidth, angle=90]{4_phasetype1.png} & \includegraphics[width=0.035\textwidth, angle=90]{4_phasetype2.png} \\
\hline
\end{tabular}}
\end{minipage}
\begin{minipage}{.35\linewidth}
\captionof{table}{Active road links.}\label{tbl_LinkPhases}
\scalebox{0.68}
{\begin{tabular}{c|c|c|c|c}
\hline
Junction & $1^{st}$ phase & $2^{nd}$ phase & $3^{rd}$ phase & $4^{th}$ phase\\
\hline
\rule{0pt}{15pt} $J_1$ & $4, 5$ & $12$ & $6$ & none \\
\hline
\rule{0pt}{15pt} $J_2$ & $1$ & $8$ & $15$ & $7$\\
\hline
\rule{0pt}{15pt} $J_3$ & $10, 28$ & $11, 27$  & $18, 19$ & $17, 20$ \\
\hline
\rule{0pt}{15pt} $J_4$ & $13, 31$ & $14, 30$ & $22, 23$ & $21, 24$ \\
\hline
\end{tabular}}
\end{minipage}
\end{center}
\end{table}
Assume that the sequences of traffic signal phases corresponding to internal junctions are given in TABLE. \ref{tbl_sequencePhases}.
Then we have the right figure as the corresponding graph representation of this urban network.
In this figure, the gray arrows illustrate road links, the squares represent internal junctions and the black/white ellipses represent external source/destination junctions in $\mathcal{O}$.
For this network, we have $\mathcal{J} = \{J_{1}, J_{2}, J_{3}, J_{4}\}$ and $\mathcal{L} = \{1, 2, \dots, 31\}$.
TABLE. \ref{tbl_LinkPhases} indicates the road links corresponding to the signal phases of the internal junctions.
The road links $6, 11, 14, 17, 20, 21, 24, 27$ and $30$ are reserved for only turning left of their corresponding roads.
The set of source road links is $\mathcal{L}^{in} = \left\{1, 4, 8, 17, 18, 23, 24, 27, 28, 30, 31\right\}$ and the set of destination road links is $\mathcal{L}^{out} = \left\{2, 3, 9, 16, 25, 26, 29\right\}$.
Consider the junction $J_3$, the sets of its incoming road links and its outgoing road links are $\mathcal{L}_{J_3}^{in} = \{10, 11, 17, 18, 19, 20, 27, 28\}$ and $\mathcal{L}_{J_3}^{out} = \{12, 16, 21, 22, 26\}$, respectively.
For the road links $6, 12, 19$ and $20$, we have $\sigma(6) = J_2, \tau(6) = J_1$, $\sigma(12) = J_3, \tau(12) = J_1$, $\sigma(19) = J_4, \tau(19) = J_3$ and $\sigma(20) = J_4, \tau(20) = J_3$.
The set of their upstream neighbors and downstream neighbors are given as $\mathcal{N}_{6}^+ = \{1, 8, 15\}$, $\mathcal{N}_{6}^- = \{10, 11\}$, $\mathcal{N}_{12}^+ = \{17, 19, 28\}$, $\mathcal{N}_{12}^- = \{3, 7\}$, $\mathcal{N}_{19}^+ = \{13, 23, 30\}$, $\mathcal{N}_{19}^- = \{12, 16\}$ and $\mathcal{N}_{20}^+ = \{13, 23, 30\}$, $\mathcal{N}_{20}^- = \{26\}$.
\subsection{Prediction model}
In store-and-forward modeling approach, the numbers of vehicles in the road links are considered as the traffic states and the downstream traffic flows are control inputs.
Let $t$ be the current time and $T$ is the control time interval.
\begin{equation}\label{eq_predictedTrafficState}
n_{z}(t+k+1|t) = n_{z}(t) + \sum_{l = 0}^{k} \Bigl( \underbrace{d_{z}(t+l|t) - s_{z}(t+l|t)}_{:= e_z(t+l|t)}  + \sum_{w \in \mathcal{N}_z^+}r_{wz}(t+l|t)q_{w}(t+l|t) - q_{z}(t+l|t) \Bigr)
\end{equation}
Consider the road link $z\in \mathcal{L}$ connecting two junctions $J_u$ and $J_v$ (see Fig. 2). Assume that the number $n_{z}(t)$ of vehicles contained in the road link $z$ at time $tT$ is known. Its corresponding predicted traffic states $n_{z}(t + k + 1|t)$ at time $(t + k + 1)T$, for $ k = 0, 1, \dots$, are given by the conservation law equation \eqref{eq_predictedTrafficState}. In this equation, the predicted downstream traffic flow $q_{z}(t+l|t)$ is defined as the number of vehicles leaving the road link ${z}$ through its downstream junction $\tau(z)$, $d_{z}(t+l|t)$ is the estimated number of vehicles entering to the road link $z$ but not from other roads in the considered urban network, $s_{z}(t+l|t)$ is the estimated number of vehicles exiting the urban network from the road link $z$ but not through the downstream node $\tau(z)$ and the estimated turning ratio $r_{wz}(t+l|t)$ corresponds to the percentage of the downstream traffic flow from the road link $w$ to its downstream neighbor $z \in \mathcal{N}_w^-$; all these traffic flows and turning rates are defined in the time interval $[(t+l)T, (t+l+1)T]$.
The downstream traffic flows corresponding to the upstream neighbors of a road link is called its upstream traffic flows.
The exogenous traffic flows of the road link $z$ are from/toward a source/destination node (only if $\sigma(z) \in \mathcal{O}$/$\tau(z) \in \mathcal{O}$) or locations in the road link $z$ (e.g., building, parking lot, etc.).
For easy notation, we define $e_z(t + l|t) = d_z(t + l|t) - s_z(t + k|t)$ as the difference of exogenous flows corresponding to the road link $z \in \mathcal{L}$.
For the turning rates of every road link $z$, where $\tau(z) \in \mathcal{J}$, it is clear that $r_{zw}(t+k|t) \ge 0$ and $\sum_{w \in \mathcal{N}_z^-}r_{zw}(t+k|t) = 1$ for all $k \ge 0$.
\begin{figure}[htb]
\centering
\scalebox{0.86}{\begin{tikzpicture}
\draw[-] (1.25,1.25) -- (0.5,1.25); \draw[-] (1.25,1.25) -- (1.25,0.5);
\draw[-] (1.25,2.75) -- (0.5,2.75); \draw[-] (1.25,2.75) -- (1.25,3.5);
\draw[-] (2.75,1.25) -- (2.75,0.5); \draw[-] (2.75,1.25) -- (5.25,1.25); \draw[-] (5.25,1.25) -- (5.25,0.5);
\draw[-] (2.75,2.75) -- (2.75,3.5); \draw[-] (2.75,2.75) -- (5.25,2.75); \draw[-] (5.25,2.75) -- (5.25,3.5);
\draw[-] (6.75,1.25) -- (6.75,0.5); \draw[-] (6.75,1.25) -- (7.5,1.25);
\draw[-] (6.75,2.75) -- (6.75,3.5); \draw[-] (6.75,2.75) -- (7.5,2.75);
\draw[dashed] (0.5,2) -- (1.5,2); \draw[dashed] (2.5,2) -- (5.5,2); \draw[dashed] (6.5,2) -- (7.5,2);
\draw[dashed] (2,0.5) -- (2,1.5); \draw[dashed] (2,2.5) -- (2,3.5);
\draw[dashed] (6,0.5) -- (6,1.5); \draw[dashed] (6,2.5) -- (6,3.5);
\node at (2,2.1) [] (node_i) {\large $J_u$};
\node at (6,2.1) [] (node_j) {\large $J_v$};
\node at (4,1.75) [] (link_z) {\large link $z$};
\node at (1.65,3.25) [] (link_w1) {\large $w_1$};
\node at (0.65,1.65) [] (link_w2) {\large $w_2$};
\node at (2.35,0.65) [] (link_w3) {\large $w_3$};
\node at (6.35,3.25) [] (link_w4) {\large $w_4$};
\node at (7.35,1.65) [] (link_w5) {\large $w_5$};
\node at (5.65,0.65) [] (link_w6) {\large $w_6$};
\draw[-,red] (1.75,3) -- (1.75,1.8); \draw[->,red] (1.75,1.8) -- (3,1.8);
\draw[->,red] (1,1.6) -- (3,1.6);
\draw[-,red] (2.25,0.95) -- (2.25,1.4); \draw[->,red] (2.25,1.4) -- (3,1.4);
\draw[->,blue] (5,1.6) -- (7,1.6);
\draw[->,blue] (5.75,1.6) -- (5.75,0.95);
\draw[->,blue] (6.25,1.6) -- (6.25,3);
\draw[->,green] (3.45,0.75) -- (3.45,1.45);
\draw[->,green] (4.65,1.45) -- (4.65,0.75);
\node at (3.2,1) [] (dz) {\large $d_z$};
\node at (4.85,1) [] (sz) {\large $s_z$};
\end{tikzpicture}}
\caption{Traffic flows of the road link $z \in \mathcal{L}$ connecting two junctions $J_u, J_v$, i.e., $\sigma(z) = J_u$, $\tau(z) = J_v$, and $\mathcal{N}_z^+ = \{w_1, w_2, w_3\}$ and $\mathcal{N}_z^- = \{w_4, w_5, w_6\}$. In this figure, the blue arrows correspond to downstream traffic flows, the green arrows are exogenous traffic flows of the road link $z$ and the red arrows are the downstream traffic flows of the upstream neighbors entering the road link $z$.}\label{fig_model}
\end{figure}
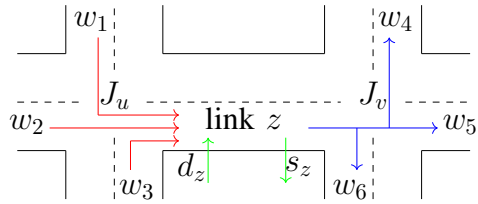

As the definition, the downstream traffic flow $q_{z}(t+k|t)$ can be controlled by regulating traffic signals in the junction $\tau(z) \in \mathcal{J}$, while the exogenous traffic flows $d_{z}(t+k|t)$, $s_{z}(t+k|t)$ and the turning rates $r_{wz}(t+k|t), \forall w \in \mathcal{N}_z^+,$ can be measured/estimated but not controlled, $\forall k \ge 0$.
In store-and-forward modeling approach \cite{DenosCGazis1963, ChristinaDiakaki2002}, the downstream traffic flow of the road link $z \in \mathcal{L}$ is determined as $q_{z}(t+k|t) = S_{z} g_z(t+k|t)$ where $S_{z}$ is the saturation flow of the road link $z$ and $g_z(t+k|t)$ is the green time length assigned to the road link $z$, where $\tau(z) \in \mathcal{J}$, in the time interval $[(t+k)T,(t+k+1)T]$.
This implicitly assumes that the green time intervals are always fully utilized, i.e., there are enough vehicles awaiting in the road link $z \in \mathcal{L}$ in its green time duration.
To guarantee this requirement, in this paper, we assume that the time interval $T$ is chosen to be small enough and add the following constraint. 
\begin{equation}\label{eq_predictedTrafficFlow}
q_z(t+k|t) \le n_z(t+k|t) + e_z(t+k|t),
\end{equation}
i.e., the downstream traffic flow departing from the road link $z \in \mathcal{L}$ in the time interval $[(t+k)T, (t+k+1)T]$ is less than the number of the vehicles in this road link at the time $(t+k)T$ plus the difference of its exogenous traffic flows in the time interval $[(t+k)T, (t+k+1)T]$.
In addition, the upstream traffic flows entering into the road link $z$ from its upstream neighbors are necessary to satisfy the following constraint for avoiding the traffic congestion
\begin{equation}\label{eq_upstream_limit}
\sum\limits_{w \in \mathcal{N}_z^+}r_{wz}(t+k|t)q_{w}(t+k|t) \le \overline{n_z} - n_z(t+k|t) - e_z(t+k|t)
\end{equation}
where $\overline{n_z}$ is the maximum admissible number of vehicles in the road link $z \in \mathcal{L}$.
Two constraints \eqref{eq_predictedTrafficFlow} and \eqref{eq_upstream_limit} are well studied in cell transmission model (CTM).

The signal timing plan of a signalized junction $J_v \in \mathcal{J}$ is determined by a sequence of traffic signal phases and their assigned signal splits.
We use $\mathcal{P}_{J_v}$ to denote the set of traffic signal phases of the junction $J_v \in \mathcal{J}$.
Denote by $g_{p}(t+k|t)$ the split which is the green time assigned to the phase $p \in \mathcal{P}_{J_v}$ in the time interval $[(t+k)T,(t+k+1)T]$.
In order to enhance synchronization of network operation, let all junctions in $\mathcal{J}$ have a common cycle which is equal to control time interval $T$.
The sequence of traffic signal phases in every junction is assumed to be predetermined suitably with its own structure.
The lost time $L_{J_v}$ is defined as the sum of time lengths for clearing vehicles between consecutive traffic signal phases in a junction $J_v \in \mathcal{J}$.
Then the following constraint is required to be satisfied
\begin{equation}\label{eq_signal_limit1}
\sum_{p \in \mathcal{P}_{J_v}} g_{p}(t+k|t) \le T - L_{J_v} := C_{J_v}.
\end{equation}
In addition, the traffic signal split of one phase is usually restricted in a given limited range as
\begin{equation}\label{eq_signal_limit2}
0 \le g_p(t+k|t) \le \overline{g_p} \le T, \forall p \in \mathcal{P}_{J_v}, \forall J_v \in \mathcal{J}.
\end{equation}
Let $\mathcal{P}_{z} \subsetneq \mathcal{P}_{J_v}$, where $\tau(z) = J_v \in \mathcal{J}$, be the set of traffic signal phases which give the right of way to the road link $z$.
Then we have
\begin{equation}\label{eq_signal_roadlink}
0 \le q_z(t+k|t) \le S_z\sum_{p \in \mathcal{P}_{z}} g_{p}(t+k|t), \forall z: \tau(z) \in \mathcal{J}.
\end{equation}
Since there may be more than one road links which can be activated in one traffic signal phase, we distinguish the green time $g_z(t+k|t) = \frac{1}{S_z}q_z(t+k|t)$ assigned to the road link $z: \tau(z) \in \mathcal{J}$ from the splits of phases in $\mathcal{P}_z$.
Then even if two road links $w, z \in \mathcal{L}_{J_v}^{in}$ have the same set of the assigned traffic signal phases, i.e., $\mathcal{P}_{z} \equiv \mathcal{P}_{w}$, their assigned green time lengths can be different. This setup enhances the flexibility of the traffic model. 
For a destination road link $z$ where $\tau(z) \in \mathcal{O}$, we assume that there is an upper limitation for its downstream traffic flow in each time interval.
\begin{equation}\label{eq_outstream_roadlink}
0 \le q_z(t+k|t) \le q_z^{max}(t+k|t), \forall z: \tau(z) \in \mathcal{O},
\end{equation}
where $q_z^{max}(t+k|t)$ is given for all $k \ge 0$.
\subsection{Model predictive traffic signal control problem}
In MPC-based traffic signal control approach for an urban network, the key concept is to employ the current traffic states and estimated traffic model parameters in determining an optimal coordination of traffic signals among junctions. The following finite horizon optimal control problem \eqref{eq_problemTP} is formulated and solved in every control time step $t$.
\begin{equation}\label{eq_problemTP}
\begin{matrix*}[l]
\min\limits_{n_z, q_{z}, g_{p}} & \Phi(t) = \sum\limits_{k = 0}^{K - 1} \sum\limits_{z \in \mathcal{L}} \Phi_z(t + k|t)\\
\textrm{s.t. } & \eqref{eq_predictedTrafficState}, (\ref{eq_predictedTrafficFlow}-\ref{eq_outstream_roadlink}), \forall z \in \mathcal{L}, J_v \in \mathcal{J}, 0 \le k \le K - 1.
\end{matrix*}
\end{equation}
where $\Phi_z(t + k|t)$ reflects some traffic control performance indexes of the road link $z \in \mathcal{L}$ in the $(t+k)$-th cycle and the number $K$ is called the predictive horizon time.
As mentioned in the previous subsection, the equality constraint \eqref{eq_predictedTrafficState} is used to predict future traffic states of the road links and the inequalities constraints \eqref{eq_predictedTrafficFlow}-\eqref{eq_outstream_roadlink} are required to guarantee a smooth operation of the traffic network.
The constraint \eqref{eq_upstream_limit}, which is considered as safety constraint of the road link, is to guarantee every road link avoids traffic congestion. The inequalities \eqref{eq_signal_limit1} and \eqref{eq_signal_limit2} correspond to hard constraints on the splits of traffic signal phases, which imply the compatibility of traffic streams crossing a signalized junction.
Meanwhile, the constraints \eqref{eq_signal_roadlink} and \eqref{eq_outstream_roadlink} represent the upper bounds of the downstream traffic flows of road links and the constraint \eqref{eq_predictedTrafficFlow} is to avoid the waste of the green time.
In \eqref{eq_problemTP}, $n_z(t+k+1|t), q_z(t+k+1|t), \forall z \in \mathcal{L}$ and $g_p(t+k|t), \forall p \in \mathcal{P}_{J_v}, J_v \in \mathcal{J}, \forall k = 0, \dots, K-1$, are considered as variables.

By solving the optimization problem \eqref{eq_problemTP}, we obtain a signal timing plan in the next $K$ cycles; which is supposed to minimize the risk of traffic congestion and to satisfy the constraints on the capacities of all road links in $\mathcal{L}$ and all junctions in $\mathcal{J}$.
The bigger $K$ may provide a better control performance, but it requires more computation load.
It is usual to choose $K > 1$ in order to avoid myopic control scheme.
However, only obtained traffic signal splits corresponding to the current time step, i.e., $g_z(t+0|t) = \frac{1}{S_z}q_z(t+0|t)$ for all $z \in \mathcal{L}$ and $g_p(t+0|t)$ for all $p \in \mathcal{P}_{J_v}, J_v \in \mathcal{J}$, are implemented.
Then the overall process is repeated again in next control time step with new updated traffic states and estimated traffic model parameters.
The repetition of MPC-based approach enhances the reliability of the computed control decisions.

To exploit the capacity of existing transportation infrastructure, the following three performance indexes are widely considered to determine the optimal traffic signal splits for near future cycles in MPC traffic signal control approach:
\begin{align}
\Phi^{(1)}(t) &= \sum_{k = 0}^{K - 1} \sum_{z \in \mathcal{L}} \frac{1}{\overline{n_z}}\Bigl(n_z(t + k +1|t)\Bigr)^2 \label{eq_cost1}\\
\Phi^{(2)}(t) &= \sum_{k = 0}^{K - 1} \sum_{z \in \mathcal{L}} Tn_z(t + k + 1|t) \label{eq_cost2}\\
\Phi^{(3)}(t) &= \sum_{k = 0}^{K - 1} \sum_{z \in \mathcal{L}} \left(n_z(t + k|t) - q_z(t+k|t) \right) \label{eq_cost3}
\end{align}
$\Phi^{(1)}(t)$ is applied to balance the relative occupation of the road links; the objective of $\Phi^{(2)}(t)$ is to reduce the total time of vehicles spent in the urban network; and the quantity $n_z(t + k|t) - q_z(t+k|t)$ in $\Phi^{(3)}(t)$ corresponds to the number of inactive vehicles of the road link $z \in \mathcal{L}$ in $[(t+k)T, (t+k+1)T]$.
In this paper, the cost function of the overall urban network is considered to have the following quadratic form
\begin{align}
\Phi(t) = \sum\limits_{k = 0}^{K - 1} \sum\limits_{z \in \mathcal{L}} & \biggl\{ \alpha_{z,k} (n_z(t + k + 1|t))^2 + \beta_{z,k} n_z(t + k + 1|t) - \gamma_{z,k} q_z(t+k|t) \biggr\}\label{eq_costfunction}
\end{align}
where $\alpha_{z,k}, \beta_{z,k}$ and $\gamma_{z,k}$ are given positive constants.
The equation \eqref{eq_costfunction} is a generalized cost function corresponding to the weighted sum of some performance indexes.

\section{Reformulation for Stochastic MPC approach}
\subsection{Assumption on uncertainties of traffic model parameters}
To formulate the MPC traffic signal control problem \eqref{eq_problemTP}, it is required to know the current traffic states and the traffic model parameters, i.e., the future exogenous in/out-flows and the turning ratios of the downstream traffic flows in next $K$ cycles of the road links.
Then the quality of the optimal traffic signal splits which are obtained by solving \eqref{eq_problemTP} may depend on the accuracy of the traffic model parameters.
They are usually assumed to be known exactly in the nominal MPC traffic signal control strategies.
However, this assumption is hardly satisfied in practical situations even though recent advancement of sensing and information technology increases the accuracy of measurements significantly.
The main reason is due to the fluctuation in the historically collected data and unpredictable events in future (weather conditions, accidents, and drivers' decisions).
When the uncertainties in the estimation of traffic model parameters cannot be avoided, the difference between the predicted states and the real ones increases as time horizon grows.
As a result, the obtained solution corresponding to the nominal case may not be optimal or even suboptimal.
This motivates us to study a stochastic optimal traffic signal control problem under the following assumption.
\begin{Assumption}\label{aspt_parameters}
The difference of the exogenous inflow and outflow, and the turning ratios of the traffic downstream flow of every road link $z \in \mathcal{L}$ at future time $t + k$, i.e., $e_z(t + k|t)$ and $r_{zw}(t + k|t)$ where $w \in \mathcal{N}_z^-$, are random variables with known statistical information.
\end{Assumption}

In Assumption \ref{aspt_parameters}, we assume that the expected values and variances of the exogenous in/out-flows and the turning ratios can be calculated with high confidence based on historically collected data though the future traffic model parameters cannot be predetermined precisely.
Let $\mathcal{RP}(t)$ be the set of all uncertain traffic model parameters for the predictive horizon at time $t$, i.e., $\mathcal{RP}(t) = \bigcup_{z \in \mathcal{L}} \mathcal{RP}_z(t)$ where
\begin{equation}\label{eq_aspt}
\mathcal{RP}_z(t) = \left\{e_z(t + k|t), r_{zw}(t + k|t): w \in \mathcal{N}_z^-, 0 \le k \le K - 1 \right\}.
\end{equation}
Then $E[X]$ and $Var[X]$ are known for every $X \in \mathcal{RP}(t)$.
In addition, the covariance matrix $\boldsymbol{\Sigma}[\textbf{X}]$ is given for any vector $\textbf{X} = [X_1, \dots, X_n]^T$, where $X_i \in \mathcal{RP}(t)$.
\subsection{Problem reformulate}
Since there exist uncertainties in the estimation of traffic model parameters, we consider the following stochastic program \eqref{eq_SMPCProblem} instead of the nominal MPC traffic signal control problem \eqref{eq_problemTP}.
\begin{subequations}\label{eq_SMPCProblem}
\begin{align}
\min &\textrm{ } E\left[ \Phi(t) \right]\\
\textrm{s.t. }& 
\eqref{eq_signal_limit1}, \eqref{eq_signal_limit2}, \forall J_v \in \mathcal{J}, 0 \le k \le K-1,\\
&\eqref{eq_signal_roadlink}, \eqref{eq_outstream_roadlink}, \forall z \in \mathcal{L}, 0 \le k \le K-1,\\
&P\left[\eqref{eq_predictedTrafficFlow} \textrm{ and } \eqref{eq_upstream_limit} \right] \ge 1 - \epsilon_t, \forall z \in \mathcal{L}, 1 \le k \le K,
\end{align}
\end{subequations}
where $P\left[ \cdot \right]$ is the probability of the event in $\left[ \cdot \right]$ and $\epsilon_t$ is a given small number.
It is easy to see that the problem \eqref{eq_SMPCProblem} is a stochastic version of the constrained optimization problem \eqref{eq_problemTP}.
Indeed, the cost function (\ref{eq_SMPCProblem}a) is the expectation of the nominal cost function in \eqref{eq_problemTP}; the hard constraints (\ref{eq_SMPCProblem}b) and (\ref{eq_SMPCProblem}c) need to be satisfied since they are independent of the uncertain traffic model parameters; and the constraint (\ref{eq_SMPCProblem}d) represents the requirement of the inequalities \eqref{eq_predictedTrafficFlow} and \eqref{eq_upstream_limit} for all road links in $\mathcal{L}$ need to be satisfied with the probability of higher than $1- \epsilon_t$.
The equality constraint \eqref{eq_predictedTrafficState} will be embedded in the computation of the expectation cost (\ref{eq_SMPCProblem}a) and the soft constraint (\ref{eq_SMPCProblem}d), as will be shown later.
The remainder of this subsection focuses on reformulating the stochastic program \eqref{eq_SMPCProblem} in an explicit form under Assumption \ref{aspt_parameters}.
\subsubsection{Traffic states and cost function}
We first observe that the predicted traffic state $n_z(t+k+1|t)$ in \eqref{eq_predictedTrafficState} is a random variable. It is a linear combination of the turning ratios corresponding to its upstream traffic flows and its exogenous traffic flows when considering specific traffic signal splits, $q_w(t + k|t)$ for all $w \in \mathcal{N}_z^+, k = 0, \dots K-1$, as coefficients.
Consequently, the cost function $\Phi(t)$ in \eqref{eq_costfunction} is also a random variable.
By using basic equations given in Appendix \ref{subEV}, the explicit formulations of the expectations and variances of traffic states and the cost function can be found as follows.

We define the random vector $\textbf{X}_z(t+k|t) = col\left\{r_{wz}(t+k|t): w \in \mathcal{N}_z^+\right\}$ and the stacked vector $\textbf{f}_z(t+k|t) = col\left\{q_w(t+k|t): w \in \mathcal{N}_z^+\right\}$ corresponding to the upstream traffic flows of the road link $z$ in time interval $[(t+k)T, (t+k+1)T]$.
Then the predicted traffic state $n_z(t + k + 1|t)$ is described by the following linear combination 
\[\sum_{l = 0}^{k}\textbf{f}_z(t+l|t)^T\textbf{X}_z(t+l|t) + \sum_{l = 0}^{k}e_z(t+l|t) + n_z(t) - \sum_{l = 0}^{k} q_z(t + l|t)\]
According to \eqref{eq_mean_linearcombination} and \eqref{eq_variance_linearcombination}, its expected value is $E\left[n_z(t + k + 1|t)\right] = n_z(t) - \sum_{l = 0}^{k} q_z(t+l|t) + \sum_{l = 0}^{k}\textbf{f}_z(t+l|t)^T E\left[\textbf{X}_z(t+l|t)\right] + \sum_{l = 0}^{k}E\left[e_z(t+l|t)\right]$ and its variance $Var\left[n_z(t + k + 1|t)\right]$ is given by
\[\textbf{f}_z(t:t+k|t)^T\boldsymbol{\Sigma}[\textbf{X}_z^{(1)}(t:t+k|t)]\textbf{f}_z(t:t+k|t).\]
where $\textbf{X}_z^{(1)}(t:t+k|t) = col\left\{\textbf{X}_z(t|t), \dots, \textbf{X}_z(t+k|t),\right.$ $\left.e_z(t|t) + \dots + e_z(t+k|t)\right\}$ and $\textbf{f}_z(t:t+k|t) = col\left\{\textbf{f}_z(t|t), \dots, \textbf{f}_z(t+k|t), 1\right\}$.
For simple notations, we define $\hat{n}_{z,k} = E\left[n_z(t + k + 1|t)\right]$, $\hat{q}_{z,k} = q_z(t + k|t)$ and $\hat{\textbf{f}}_{z,k} = \textbf{f}_z(t:t+k|t)$ for all road link $z \in \mathcal{L}$.
Then
\begin{equation}\label{eq_SMPC_TVPredicted}
\hat{n}_{z,k} = \hat{n}_{z,k-1} + \sum_{w \in \mathcal{N}_z^+} r_{wz,k}\hat{q}_{w,k} - \hat{q}_{z,k} + e_{z,k}
\end{equation}
where $r_{wz,k} = E[r_{wz}(t+k|t)]$ and $e_{z,k} = E[e_{z}(t + k|t)]$.
Under Assumption \ref{aspt_parameters}, the expected values $E[r_{wz}(t+k|t)], \forall w \in \mathcal{N}_z^+, E[e_{z}(t + k|t)]$ and the variance matrix $\boldsymbol{\Sigma}[\textbf{X}_z^{(1)}(t+k|t)]$ are known for all $z \in \mathcal{L}, k = 0, \dots, K-1$. Let the matrix $\textbf{G}_{z,k}^{(1)}$ be defined by $(\textbf{G}_{z,k}^{(1)})^T\textbf{G}_{z,k}^{(1)} = \boldsymbol{\Sigma}[\textbf{X}_z^{(1)}(t:t+k|t)]$. Then we have
\begin{equation}\label{eq_var_temp1}
Var[n_z(t + k + 1|t)] = ||\textbf{G}_{z,k}^{(1)}\hat{\textbf{f}}_{z,k}||^2.
\end{equation}
As a special case, for every source road link $z \in \mathcal{L}^{in}$, we have  $\mathcal{N}_z^+ = \emptyset$ and $Var[n_z(t + k + 1|t)] = ||\hat{\textbf{f}}_{z,k}||^2 = Var[e_{z}(t|t) + \dots + e_z(t + k|t)]$ is a scalar. In addition, we have $||\hat{\textbf{f}}_{z,0}||^2 = Var[e_{z}(t|t)]$ for every road link $z \in \mathcal{L}$.
By similar analysis, we have
\begin{equation}\label{eq_var_temp2}
Var[n_z(t+k|t) + e_z(t+k|t)] = ||\textbf{G}_{z,k}^{(2)}\hat{\textbf{f}}_{z,k-1}||^2.
\end{equation}
where $(\textbf{G}_{z,k}^{(2)})^T\textbf{G}_{z,k}^{(2)} = \boldsymbol{\Sigma}[\textbf{X}_z^{(2)}(t:t+k|t)]$ and $\textbf{X}_z^{(2)}(t:t+k|t) = col\bigl\{\textbf{X}_z(t|t), \dots, \textbf{X}_z(t+k-1|t), e_z(t|t) + \dots + e_z(t+k|t)\bigr\}$.
Moreover, we conveniently define $Var[n_z(t|t)] = \hat{f}_{z,-1} = 0, \forall z \in \mathcal{L}$.
From \eqref{eq_square_RV}, we have $E\left[\Bigl(n_z(t + k + 1|t)\Bigr)^2\right] = \hat{n}_{z,k}^2 + ||\hat{\textbf{f}}_{z,k}||_2^2$.
Then the expectation $E[\Phi(t)]$ is equal to
\begin{equation}\label{eq_expectation_cost}
\sum\limits_{k = 0}^{K-1} \sum\limits_{z \in \mathcal{L}} \left(\alpha_{z,k}\hat{n}_{z,k}^2 + \beta_{z,k}\hat{n}_{z,k} - \gamma_{z,k}\hat{q}_{z,k} + \alpha_{z,k}||\textbf{G}_{z,k}^{(1)}\hat{\textbf{f}}_{z,k}||^2\right)
\end{equation}
\subsubsection{Constraints}
The inequalities \eqref{eq_predictedTrafficFlow} can be rewritten as
\begin{equation}\label{eq_temp1}
-n_z(t+k|t) - e_z(t+k|t) + q_z(t+k|t) \le 0, \forall z \in \mathcal{L}
\end{equation}
From the vehicle conservation law \eqref{eq_predictedTrafficState}, we have $\sum_{w \in \mathcal{N}_z^+}r_{wz}(t+k|t)q_w(t+k|t) + n_z(t+k|t) + e_z(t+k|t) = n_z(t+k+1|t) + q_z(t+k|t)$ and the constraint \eqref{eq_upstream_limit} is equivalent to \eqref{eq_temp2} as follows.
\begin{subequations}\label{eq_temp2}
\begin{gather}
n_z(t+k|t) + e_z(t+k|t) \le \overline{n_z}, \forall z \in \mathcal{L}^{in},\\
n_z(t+k+1|t) + q_z(t+k|t) - \overline{n_z} \le 0, \forall z \in \mathcal{L} \backslash \mathcal{L}^{in}.
\end{gather}
\end{subequations}
By using the concept of distributionally robust chance constraint\cite{Calafiore2006}, which is summarized in Appendix \ref{subRC}, the constraint (\ref{eq_SMPCProblem}d) is equivalent to the following constraints 
\begin{subequations}\label{eq_safetyconstraint_withprob}
\begin{align}
\sqrt{\frac{1 - \epsilon_t}{\epsilon_t} Var[n_z(t + k|t) + e_z(t+k|t)]} &\le \hat{n}_{z,k-1} - \hat{q}_{z,k} + e_{z,k}, \forall z \in \mathcal{L}, k \ge 0\\
\sqrt{\frac{1 - \epsilon_t}{\epsilon_t} Var[n_z(t + k|t) + e_z(t+k|t)]} &\le \overline{n_z} - \hat{n}_{z,k} - e_{z,k}, \forall z \in \mathcal{L}^{in}, k \ge 0\\
\sqrt{\frac{1 - \epsilon_t}{\epsilon_t} Var[n_z(t + k + 1|t)]} &\le \overline{n_z} - \hat{n}_{z,k} - \hat{q}_{z,k}, \forall z \in \mathcal{L} \backslash \mathcal{L}^{in}, k \ge 0.
\end{align}
\end{subequations}
Since the variances $Var[n_z(t|t)] = 0, \forall z \in \mathcal{L},$ the following constraint is derived from (\ref{eq_safetyconstraint_withprob}a) for $k = 0$.
\begin{equation}\label{eq_soc_constraint01}
\hat{q}_{z,0} \le n_z(t) + e_{z,0} - ||\hat{\textbf{f}}_{z,0}||, \forall z \in \mathcal{L}.
\end{equation}
For every source road link $z \in \mathcal{L}^{in}$, we have $Var[n_z(t+k+1|t)], \forall k = 0, \dots, K-1,$ are known scalars. The inequalities (\ref{eq_safetyconstraint_withprob}a) for $k > 0$ and (\ref{eq_safetyconstraint_withprob}b) become
\begin{gather}
-\hat{n}_{z,k-1} + \hat{q}_{z,k} \le e_{z,k} - \sqrt{(1-\epsilon_t)/\epsilon_t}||\hat{\textbf{f}}_{z,k}||, \forall k \ge 1,\label{eq_soc_constraint02}\\
\hat{n}_{z,k} \le \overline{n_z} - e_{z,k+1} - \sqrt{(1-\epsilon_t)/\epsilon_t}||\hat{\textbf{f}}_{z,k+1}||, \forall k \le K.\label{eq_soc_constraint03}
\end{gather}
In the case of $z \in \mathcal{L} \backslash \mathcal{L}^{in}$, the constraints (\ref{eq_safetyconstraint_withprob}a) for $k > 0$ and (\ref{eq_safetyconstraint_withprob}c) can be transformed into second-order cone (SOC) constraints \eqref{eq_soc_constraint1} and \eqref{eq_soc_constraint2}, respectively.
\begin{align}
\left[\sqrt{\frac{1 - \epsilon_t}{\epsilon_t}} \left(\textbf{G}_{z,k+1}^{(2)}\hat{\textbf{f}}_{z,k}\right)^T, \hat{n}_{z,k} - \hat{q}_{z,k+1} + e_{z,k+1}\right]^T \in \mathcal{C}^{k|\mathcal{N}_z^+| + 5}, \forall 0 \le 0 < K-1, \label{eq_soc_constraint1}\\
\left[\sqrt{\frac{1 - \epsilon_t}{\epsilon_t}} \left(\textbf{G}_{z,k}^{(1)}\hat{\textbf{f}}_{z,k}\right)^T,  \overline{n_z} - \hat{n}_{z,k} - \hat{q}_{z,k}\right]^T \in \mathcal{C}^{k|\mathcal{N}_z^+| + 2}, \forall 0 \le 0 \le K-1.\label{eq_soc_constraint2}
\end{align}
where $\mathcal{C}^{m}$ is the unit second-order cone of dimension $m$ (see more in Appendix \ref{subProjBC}).
We also define $\hat{g}_{p,k} = g_p(t + k|t)$ for the planned traffic signal splits of the phase $p \in \mathcal{P}$ in the time interval $[(t+k)T, (t+k+1)T]$. They need to satisfy all their hard constraints. That means $\forall k \ge 0, J_v \in \mathcal{J}$, we have
\begin{subequations}\label{eq_hardconstraint}
\begin{align}
& 0 \le \hat{g}_{p,k} \le \overline{g_p}, \forall p \in \mathcal{P}_{J_v},\\ 
& \sum_{p \in \mathcal{P}_{J_v}} \hat{g}_{p,k} \le C_{J_v}.
\end{align}
\end{subequations}
The inequalities \eqref{eq_signal_roadlink} and \eqref{eq_outstream_roadlink}, $\forall k \ge 0, z \in \mathcal{L}$, are rewritten as
\begin{subequations}\label{eq_downstream_constraint}
\begin{gather}
0 \le \hat{q}_{z,k} \le S_z\sum\limits_{p \in \mathcal{P}_w}\hat{g}_{p,k}, \textrm{ if } \tau(z) \in \mathcal{J},\\
0 \le \hat{q}_{z,k} \le \overline{q_{z,k}}, \textrm{ if } \tau(z) \in \mathcal{O}.
\end{gather}
\end{subequations}
where $\overline{q_{z,k}} = q_z^{max}(t+k|t)$ for all $z \in \mathcal{L}^{out}$ or $\tau(z) \in \mathcal{O}$.
\subsubsection{Stochastic MPC traffic signal control problem}
Summarizing the analysis in this subsection, we have the detailed form of the stochastic program \eqref{eq_SMPCProblem} as follows:
\begin{equation}\label{eq_SMPCProblem_detailedform}
\min\limits_{\Omega_t}\textrm{ }\hat{\Phi}_t = E[\Phi(t)] \textrm{ given in \eqref{eq_expectation_cost}}
\end{equation}
where the feasible set $\Omega_t$ is defined as follows:
\scalebox{0.81}{$\Omega_t = \left\{ \left\{\begin{matrix} \hat{g}_{p,k}, \hat{n}_{z,k}, \hat{q}_{z,k}:\\ \forall p \in \mathcal{P}_{J_v}, J_v \in \mathcal{J},\\ z \in \mathcal{L}, 0 \le k \le K-1 \end{matrix} \right\} 
\left|\begin{matrix*}[l] 
\textrm{linear inequalities } \eqref{eq_soc_constraint01}, \eqref{eq_hardconstraint}, \eqref{eq_downstream_constraint},\\
\textrm{linear equalities }  \eqref{eq_SMPC_TVPredicted},\\
\textrm{linear inequalities } \eqref{eq_soc_constraint02}, \eqref{eq_soc_constraint03} \textrm{ if } \sigma(z) \in \mathcal{O},\\
\textrm{SOC constraints } \eqref{eq_soc_constraint1}, \eqref{eq_soc_constraint2} \textrm{ if } \sigma(z) \in \mathcal{J},
\end{matrix*}\right. \right\}$}.
Notice that the constrained optimization problem \eqref{eq_SMPCProblem_detailedform} has the form of a quadratic program with linear and second-order cone constraints.
Since the stochastic MPC traffic signal control problem \eqref{eq_SMPCProblem_detailedform} is a convex optimization problem, it has at least one optimal solution if its feasible set is nonempty.
So, we make the following assumption:
\begin{Assumption} 
For all $t \ge 0$, we have $\Omega_t \neq \emptyset$.
\end{Assumption}

\section{Distributed Stochastic MPC traffic signal control}
\subsection{Decomposition of traffic network and control problem}
Since an urban traffic network consists of many road links and junctions, the load of gathering traffic data (current traffic states, estimated exogenous in/out-flows, and turning ratios) to formulate the stochastic MPC traffic signal control problem \eqref{eq_SMPCProblem_detailedform} and the computation burden to solve this constrained optimization problem could be huge for only one centralized controller, specially in real-time application.
Thanks to the spatial structure of the urban network, the high-dimensional optimization problem of the overall network can be decomposed into multiple subproblems.
The optimal stochastic traffic signal splits can be obtained by applying distributed optimization methods.
In the following parts of this subsection, we first propose a way to decompose the urban network and formulate a distributed version of the stochastic MPC traffic signal control problem \eqref{eq_SMPCProblem_detailedform} corresponding to this decomposition.
Then, in next subsection, we propose a distributed method to determine an optimal solution of the problem \eqref{eq_SMPCProblem_detailedform} based on the proximal ADMM scheme \eqref{eq_ADMM} (given in Appendix A2).
\subsubsection{Network decomposition}
It is easy to observe that the problem \eqref{eq_SMPCProblem_detailedform} has a separable objective function and its constraints set can be divided into many individual subsets combining with coupled constraints.
This is because the considered urban network can be divided into many interconnected subnetworks.
Assume that it consists of $N$ subnetworks, denoted by $\mathcal{S}_i, i = 1,\dots, N$, and each of them is controlled by one local controller.
In multiagent system perspective, these local controllers are called agents and they work in parallel to reduce the execution time.

Let $\mathcal{J}_{\mathcal{S}_i}$ be the set of internal signalized junctions of the subnetwork $\mathcal{S}_i$ and $\mathcal{O}_{\mathcal{S}_i} = \{\sigma(z): z \in \mathcal{L}^{in} \textrm{ where } \tau(z) \in \mathcal{J}_{\mathcal{S}_i}\} \cup \{\tau(z): z \in \mathcal{L}^{in} \textrm{ where } \sigma(z) \in \mathcal{J}_{\mathcal{S}_i}\}$ be the set of external nodes connecting to junctions in $\mathcal{J}_{\mathcal{S}_i}$.
The set $\mathcal{L}_{\mathcal{S}_i}$ of internal road links of the subnetwork $\mathcal{S}_i$ is defined by \[\mathcal{L}_{\mathcal{S}_i} = \{z \in \mathcal{L}: \sigma(z), \tau(z) \in \mathcal{J}_{\mathcal{S}_i} \cup \mathcal{O}_{\mathcal{S}_i}\}\] and the set of source road links of the subnetwork $\mathcal{S}_i$ is \[\mathcal{L}_{\mathcal{S}_i}^{in} = \left\{z: \sigma(z) \in \mathcal{O}_{\mathcal{S}_i} \textrm{ and } \tau(z) \in \mathcal{J}_{\mathcal{S}_i} \right\}.\]
Define $\mathcal{L}_{\mathcal{S}_i\mathcal{S}_j}$ as the set of all road links connecting from the subnetwork $\mathcal{S}_i$ to the subnetwork $\mathcal{S}_j$, that means \[\mathcal{L}_{\mathcal{S}_i\mathcal{S}_j} = \left\{z: \sigma(z) \in \mathcal{J}_{\mathcal{S}_i} \textrm{ and } \tau(z) \in \mathcal{J}_{\mathcal{S}_j}\right\}.\]
By abuse of notations, we use $\mathcal{S}_i$ to refer the agent and the subproblem corresponding to the subnetwork $\mathcal{S}_i$. 
If $\mathcal{L}_{\mathcal{S}_i\mathcal{S}_j} \neq \emptyset$, we call $\mathcal{S}_i$ and $\mathcal{S}_j$ are neighboring agents (subnetworks/subproblems).
The set of neighbors of an agent $\mathcal{S}_i$ is defined by \[\mathcal{N}_{\mathcal{S}_i} = \left\{\mathcal{S}_j: \mathcal{L}_{\mathcal{S}_i\mathcal{S}_j} \neq \emptyset \textrm{ and/or } \mathcal{L}_{\mathcal{S}_j\mathcal{S}_i} \neq \emptyset \right\}.\]

For better understanding of the aforementioned notations, we consider the urban network in Fig. \ref{fig_exampleUTN} when assuming it is divided into two subnetworks: $\mathcal{S}_1$ consists of two junctions $J_1, J_2$ and $\mathcal{S}_2$ consists of two junctions $J_3, J_4$.
That means $\mathcal{J}_{\mathcal{S}_1} = \{J_1, J_2\}$ and $\mathcal{J}_{\mathcal{S}_2} = \{J_3, J_4\}$.
The sets of road links corresponding to subnetworks are then given as: $\mathcal{L}_{\mathcal{S}_1} = \{1, 2, \dots, 9\}$, $\mathcal{L}_{\mathcal{S}_2} = \{16, 17, \dots, 31\}$, $\mathcal{L}_{\mathcal{S}_1\mathcal{S}_2} = \{10, 11, 13, 14\}$, $\mathcal{L}_{\mathcal{S}_2\mathcal{S}_1} = \{12, 15\}$ and $\mathcal{L}_{\mathcal{S}_1}^{in} = \{1, 4, 8\}$, $\mathcal{L}_{\mathcal{S}_2}^{in} = \{17, 18, 23, 24, 27, 28, 30, 31\}$.
In addition, we have $\mathcal{N}_{\mathcal{S}_1} = \left\{\mathcal{S}_2\right\}$, $\mathcal{N}_{\mathcal{S}_2} = \left\{\mathcal{S}_1\right\}$.
\subsubsection{Problem splitting}
To reformulate the problem \eqref{eq_SMPCProblem_detailedform} in a distributed manner, we now define the local control variables, local constraints for each agent (i.e., the local subproblem) and the coupled constraints among neighboring agents.
The local subproblem of the subnetwork $\mathcal{S}_i$ is obtained by substituting the sets $\mathcal{L}_{\mathcal{S}_i}$ and $\mathcal{J}_{\mathcal{S}_i}$ into the role of the sets $\mathcal{L}$ and $\mathcal{J}$ respectively in the problem \eqref{eq_SMPCProblem_detailedform}.
Naturally, the local control variables of agent $\mathcal{S}_i$ are the signal splits of its internal junctions (i.e., $\hat{g}_{p,k}$ for all $p \in \mathcal{P}_{J_v}$, $J_v \in \mathcal{J}_{\mathcal{S}_i}$, $k = 0, \dots, K-1$), the downstream traffic flows and the expected traffic states of its internal road links (i.e., $\hat{q}_{z,k}$ and $\hat{n}_{z,k}$ for all $z \in \mathcal{L}_{\mathcal{S}_i}$, $k = 0, \dots, K-1$).
For one road link $z \in \mathcal{L}_{\mathcal{S}_i\mathcal{S}_j}$, its upstream traffic flows depend on the traffic signals of one intersection in $\mathcal{S}_i$ while its downstream traffic flows depend on the one in $\mathcal{S}_j$.
For easy decomposition of the problem \eqref{eq_SMPCProblem_detailedform}, in this paper, we consider the expectations of its predicted traffic states as local variables in the subproblem $\mathcal{S}_i$ and consider its downstream traffic flows as local variables in the subproblem $\mathcal{S}_j$.
In addition, we make the following assumption:
\begin{Assumption}
Each agent $\mathcal{S}_i$ can measure and estimate the stochastic information of random variables in the set $\mathcal{RP}_{\mathcal{S}_i}(t) = \bigcup\limits_{J_v \in \mathcal{J}_{\mathcal{S}_i}}\Bigl( \bigcup\limits_{z \in \mathcal{L}_{J_v}^{in} \cup \mathcal{L}_{J_v}^{out}} \mathcal{RP}_z(t)\Bigr)$.
\end{Assumption}
\begin{Remark}
As shown in the first part of Subsection III.B, the expectation and the variance of traffic states corresponding to the road link $z$ depend on the stochastic information of the turning ratios of its upstream traffic flows and its exogenous traffic flows, i.e., $r_{wz}(t+k|t), \forall w \in \mathcal{N}_z^+$ and $e_z(t+k|t)$, $\forall k = 0, 1, \dots, K-1$. Then if an agent $\mathcal{S}_i$ can measure/estimate the expectations, variances and covariances for random variables in the union of the sets $\mathcal{RP}_z(t), \forall z \in \mathcal{L}_{J_v}^{in} \cup \mathcal{L}_{J_v}^{out}$, corresponding to an internal junction $J_v \in \mathcal{J}_{\mathcal{S}_i}$, this agent can formulate the detailed form of the expectation \eqref{eq_SMPC_TVPredicted} and variances \eqref{eq_var_temp1}, \eqref{eq_var_temp2} for all road links in $\mathcal{L}_{J_v}^{out}$.
\end{Remark}

Consider the expectation cost $\hat{\Phi}_t = E[\Phi(t)]$ in \eqref{eq_expectation_cost}, it can be decomposed into the sum of local cost functions as $\hat{\Phi}_t = \sum_{i = 1}^{N} \hat{\Phi}_{\mathcal{S}_i}(t)$ with \[\hat{\Phi}_{\mathcal{S}_i}(t) = \scalebox{0.85}{$\sum\limits_{z: \sigma(z) \in \mathcal{J}_{\mathcal{S}_i} \cup \mathcal{O}_{\mathcal{S}_i}} \hat{\Phi}_z^{n}(t) + \sum\limits_{z: \tau(z) \in \mathcal{J}_{\mathcal{S}_i} \cup \mathcal{O}_{\mathcal{S}_i}} \hat{\Phi}_z^{q}(t) + \sum\limits_{J_v \in \mathcal{J}_{\mathcal{S}_i}} \hat{\Phi}_{J_v}(t)$},\]
where $\hat{\Phi}_z^{n}(t) = \sum\limits_{k = 0}^{K-1}\left(\alpha_{z,k}\hat{n}_{z,k}^2 + \beta_{z,k}\hat{n}_{z,k}\right)$, $\hat{\Phi}_z^{q}(t) = \sum\limits_{k = 0}^{K-1}\left(-\gamma_{z,k}\hat{q}_{z,k}\right)$ and $\hat{\Phi}_{J_v}(t) = \sum\limits_{k = 0}^{K-1} \sum\limits_{z \in \mathcal{L}_{J_v}^{out}} \alpha_{z,k}||\hat{f}_{z,k}||^2$.
Note that $\hat{\Phi}_{J_v}$ is the function of $\hat{q}_{w,k}, \forall w \in \mathcal{L}_{J_v}^{in}, k = 0, \dots, K-1$.
Let $\hat{\textbf{n}}_{\mathcal{S}_i,k} = col\bigl\{\hat{n}_{z,k}: z \in \mathcal{L}_{\mathcal{S}_i} \cup \mathcal{L}_{\mathcal{S}_i}^{in} \cup \bigcup_{\mathcal{S}_j \in \mathcal{N}_{\mathcal{S}_i}} \mathcal{L}_{\mathcal{S}_i\mathcal{S}_j}\bigr\}$ and $\hat{\textbf{q}}_{\mathcal{S}_i,k} = col\bigl\{\hat{q}_{z,k}: z \in \bigcup_{J_v \in \mathcal{J}_{\mathcal{S}_i} \cup \mathcal{O}_{\mathcal{S}_i}} \mathcal{L}_{J_v}^{in} \bigr\}$, $\forall k = 0, \dots, K-1$, and consider these stacked vectors as local variables of agent $\mathcal{S}_i$. It is easy to verify that $\hat{\textbf{n}}_{\mathcal{S}_i,k} = col\left\{\hat{n}_{z,k}: \sigma(z) \in \mathcal{J}_{\mathcal{S}_i} \cup \mathcal{O}_{\mathcal{S}_i}\right\}$ and $\hat{\textbf{q}}_{\mathcal{S}_i,k} = col\left\{\hat{q}_{z,k}: \tau(z) \in \mathcal{J}_{\mathcal{S}_i} \cup \mathcal{O}_{\mathcal{S}_i}\right\}$
and the local cost function of the agent $\mathcal{S}_i$, i.e., $\hat{\Phi}_{\mathcal{S}_i}(t)$, is a function depending on $\hat{\textbf{n}}_{\mathcal{S}_i,k}, \hat{\textbf{q}}_{\mathcal{S}_i,k}, \forall k = 0, \dots, K-1$.

Consider the constraints \eqref{eq_SMPC_TVPredicted} for all $k = 0, \dots, K - 1$, it is easy to verify that the expectation values $\hat{n}_{z,k}$ corresponding to a road link $z \in \mathcal{L}_{\mathcal{S}_i}$ depend on local variables in $\hat{\textbf{n}}_{\mathcal{S}_i,k}, \hat{\textbf{q}}_{\mathcal{S}_i,k}$, $k = 0, \dots, K - 1$. However, the ones corresponding to a road link $z \in \mathcal{L}_{\mathcal{S}_i\mathcal{S}_j}$ are functions depending on the downstream traffic flows $\hat{q}_{z,k}, \forall k = 0, \dots, K - 1$, (which are local variables of the agent $\mathcal{S}_j$). To cope with this issue, we assume that the agent $\mathcal{S}_i$ stores copies of these variables. Let $\hat{q}_{z,k}^{copy} = \hat{q}_{z,k}, \forall z \in \mathcal{L}_{\mathcal{S}_i\mathcal{S}_j}, \forall \mathcal{S}_j \in \mathcal{N}_{\mathcal{S}_i}, k = 0, \dots, K-1$. The copy $\hat{q}_{z,k}^{copy}, \forall k = 0, \dots, K-1$, can substitute to the variable $\hat{q}_{z,k}$ in the equation \eqref{eq_SMPC_TVPredicted} and the inequalities \eqref{eq_soc_constraint1}, \eqref{eq_soc_constraint2} for the road link $z \in \mathcal{L}_{\mathcal{S}_i\mathcal{S}_j}$ to convert these constraints to local constraints of agent $\mathcal{S}_i$.
We also define $\hat{\textbf{g}}_{\mathcal{S}_i,k} = col\left\{\hat{g}_{p,k}: p \in \bigcup_{J_v \in \mathcal{J}_{\mathcal{S}_i}} \mathcal{P}_{J_v}\right\}$ as local variables of $\mathcal{S}_i$.
Thus, the vector of all local control variables in the subproblem $\mathcal{S}_i$ is defined as $\textbf{x}_{\mathcal{S}_i} = \left[\begin{matrix} \hat{\textbf{x}}_{\mathcal{S}_i}^T, row\left\{\hat{\textbf{x}}_{\mathcal{S}_i\mathcal{S}_j}: \mathcal{S}_j \in \mathcal{N}_{\mathcal{S}_i}\right\}\end{matrix}\right]^T$ where
\[\hat{\textbf{x}}_{\mathcal{S}_i} = \left[\begin{matrix} \hat{\textbf{n}}_{\mathcal{S}_i,0}^T, \hat{\textbf{q}}_{\mathcal{S}_i,0}^T, \hat{\textbf{g}}_{\mathcal{S}_i,0}^T, \cdots, \hat{\textbf{n}}_{\mathcal{S}_i,K-1}^T, \hat{\textbf{q}}_{\mathcal{S}_i,K-1}^T, \hat{\textbf{g}}_{\mathcal{S}_i,K-1}^T\end{matrix}\right]^T.\]
\[\hat{\textbf{x}}_{\mathcal{S}_i\mathcal{S}_j} = col\left\{\hat{q}_{z,k}^{copy}: z \in \mathcal{L}_{\mathcal{S}_i\mathcal{S}_j}, 0 \le k \le K-1\right\}, \forall \mathcal{S}_j \in \mathcal{N}_{\mathcal{S}_i}.\]
Let $\zeta_{\mathcal{S}_i}$ be the dimension of the local vector $\textbf{x}_{\mathcal{S}_i} \in \mathbb{R}^{\zeta_{\mathcal{S}_i}}$.
We consider the constraints $\hat{q}_{z,k}^{copy} = \hat{q}_{z,k}, \forall z \in \mathcal{L}_{\mathcal{S}_i\mathcal{S}_j}, \forall k = 0, \dots, K-1$, as the coupled constraints between two agents $\mathcal{S}_i$ and $\mathcal{S}_j$. To describe these constraints in a stacked form, we define two matrices $\textbf{P}_{\mathcal{S}_i\mathcal{S}_j}$ and $\textbf{Q}_{\mathcal{S}_i\mathcal{S}_j}$ such that $\textbf{P}_{\mathcal{S}_i\mathcal{S}_j}\textbf{x}_{\mathcal{S}_i} = \hat{\textbf{x}}_{\mathcal{S}_i\mathcal{S}_j}$ and $\textbf{Q}_{\mathcal{S}_i\mathcal{S}_j}\textbf{x}_{\mathcal{S}_i} = col\left\{\hat{q}_{z,k}: z \in \mathcal{L}_{\mathcal{S}_j\mathcal{S}_i}, 0 \le k \le K-1\right\}$. So, for all $i = 1, \dots, N$, we have
\begin{equation}\label{eq_coupledequalities_subproblem}
\textbf{P}_{\mathcal{S}_i\mathcal{S}_j}\textbf{x}_{\mathcal{S}_i} = \textbf{Q}_{\mathcal{S}_j\mathcal{S}_i}\textbf{x}_{\mathcal{S}_j}, \forall \mathcal{S}_j \in \mathcal{N}_{\mathcal{S}_i}.
\end{equation}
From the definitions of the local cost function and variables, we have $\hat{\Phi}_{\mathcal{S}_i} = \Phi_{\mathcal{S}_i}(\textbf{x}_{\mathcal{S}_i}) + constant$ where
\begin{equation}\label{eq_localcost_subproblem}
\Phi_{\mathcal{S}_i}(\textbf{x}_{\mathcal{S}_i}) =\frac{1}{2}\textbf{x}_{\mathcal{S}_i}^T\textbf{H}_{\mathcal{S}_i}\textbf{x}_{\mathcal{S}_i} + \textbf{h}_{\mathcal{S}_i}^T\textbf{x}_{\mathcal{S}_i}.
\end{equation}
In which $\textbf{H}_{\mathcal{S}_i} = \left[[\textbf{H}_{\mathcal{S}_i}]_{mn}\right] \in \mathbb{R}^{\zeta_{\mathcal{S}_i} \times \zeta_{\mathcal{S}_i}}$ and $\textbf{h}_{\mathcal{S}_i} = \left[[\textbf{h}_{\mathcal{S}_i}]_{m}\right] \in \mathbb{R}^{\zeta_{\mathcal{S}_i}}$ are defined as follows:
\[[\textbf{H}_{\mathcal{S}_i}]_{mn} = \left\{ \begin{matrix*}[l]
\alpha_{z,k}, & \textrm{if } m = n, [\textbf{x}_{\mathcal{S}_i}]_m \equiv \hat{n}_{z,k},\\
\sum\limits_{z \in \mathcal{N}_{w_1}^- \cap \mathcal{N}_{w_2}^-} \sum\limits_{l = k}^{K-1} \alpha_{z,l}[\textbf{G}_{z,l}^{(1)}]_{m_ln_l}, & \textrm{if } [\textbf{x}_{\mathcal{S}_i}]_m \equiv \hat{q}_{w_1,k}, [\textbf{x}_{\mathcal{S}_i}]_n \equiv \hat{q}_{w_2,k} \textrm{ and } [\textbf{f}_{z,l}]_{m_l} \equiv \hat{q}_{w_1,k}, [\textbf{f}_{z,l}]_{n_l} \equiv \hat{q}_{w_2,k},\\
0, & otherwise
\end{matrix*} \right.\]
\[[\textbf{h}_{\mathcal{S}_i}]_{m} = \left\{ \begin{matrix*}[l]
\beta_{z,k}, & \textrm{if } [\textbf{x}_{\mathcal{S}_i}]_m \equiv \hat{n}_{z,k},\\
-\gamma_{z,k}, & \textrm{if } [\textbf{x}_{\mathcal{S}_i}]_m \equiv \hat{q}_{z,k},\\
0, & otherwise
\end{matrix*} \right.\]
It can be verified that that $\textbf{H}_{\mathcal{S}_i}$ is a symmetric and positive semidefinite matrix, $\textbf{H}_{\mathcal{S}_i} \succeq 0$ and $\textbf{H}_{\mathcal{S}_i} = \textbf{H}_{\mathcal{S}_i}^T$, for all $i$.
Since the constant part in $\hat{\Phi}$ can be removed in optimization problem without any change in the optimal solution, $\Phi_{\mathcal{S}_i}(\textbf{x}_{\mathcal{S}_i})$ is now considered as a local cost function of the agent $\mathcal{S}_i$, $\forall i = 1, \dots, N$.
Due to the dependence on only the variables in $\textbf{x}_{\mathcal{S}_i}$, the following constraints are considered as the local constraints of the agent $\mathcal{S}_i$.

The linear inequalities \eqref{eq_soc_constraint01}, \eqref{eq_downstream_constraint} for all incoming road links of junctions in $\mathcal{J}_{\mathcal{S}_i} \cup \mathcal{O}_{\mathcal{S}_i}$, \eqref{eq_hardconstraint} for all traffic signal phases of junctions in $\mathcal{J}_{\mathcal{S}_i}$, and \eqref{eq_soc_constraint02}, \eqref{eq_soc_constraint03} for all source road links in $\mathcal{L}_{\mathcal{S}_i}^{in}$ can be described in the matrix form \[\textbf{D}_{0, \mathcal{S}_i}\textbf{x}_{\mathcal{S}_i} \le \textbf{d}_{0, \mathcal{S}_i}\] with suitable matrix $\textbf{D}_{0, \mathcal{S}_i} = \left[\left(\textbf{D}_{0, \mathcal{S}_i}^{\star}\right)^T, \left(\textbf{D}_{0, \mathcal{S}_i}^{(0)}\right)^T, \left(\textbf{D}_{0, \mathcal{S}_i}^{(1)}\right)^T, \dots, \left(\textbf{D}_{0, \mathcal{S}_i}^{(K-1)}\right)^T\right]^T \in \mathbb{R}^{\xi_{0,\mathcal{S}_i} \times \zeta_{\mathcal{S}_i}}$ and vector $\textbf{d}_{0, \mathcal{S}_i} = \left[\left(\textbf{d}_{0, \mathcal{S}_i}^{\star}\right)^T, \left(\textbf{d}_{0, \mathcal{S}_i}^{(0)}\right)^T, \left(\textbf{d}_{0, \mathcal{S}_i}^{(1)}\right)^T, \dots, \left(\textbf{d}_{0, \mathcal{S}_i}^{(K-1)}\right)^T\right]^T \in \mathbb{R}^{\xi_{0,\mathcal{S}_i}}$. In which, $\textbf{D}_{0, \mathcal{S}_i}^{\star} = -\textbf{I}$ having similar dimension as $\textbf{x}_{\mathcal{S}_i}$, $\textbf{d}_{0, \mathcal{S}_i}^{\star} = \textbf{0}$ having similar dimension as $\textbf{x}_{\mathcal{S}_i}$ and $\textbf{D}_{0, \mathcal{S}_i}^{(0)} \in \mathbb{R}^{(2\zeta_{1,\mathcal{S}_i} + 2\zeta_{4,\mathcal{S}_i} + \zeta_{5,\mathcal{S}_i} + \zeta_{6,\mathcal{S}_i}) \times \zeta_{\mathcal{S}_i}}$, $\textbf{D}_{0, \mathcal{S}_i}^{(k)} \in \mathbb{R}^{(\zeta_{1,\mathcal{S}_i} + 2\zeta_{2,\mathcal{S}_i} + \zeta_{4,\mathcal{S}_i} + \zeta_{5,\mathcal{S}_i} + \zeta_{6,\mathcal{S}_i}) \times \zeta_{\mathcal{S}_i}}$
where $\zeta_{1,\mathcal{S}_i} = |\mathcal{L}_{\mathcal{S}_i}|, \zeta_{2,\mathcal{S}_i} = |\mathcal{L}_{\mathcal{S}_i}^{in}|, \zeta_{3,\mathcal{S}_i} = |\{z: z \in \mathcal{L}_{\mathcal{S}_i\mathcal{S}_j}, \mathcal{S}_j \in \mathcal{N}_{\mathcal{S}_i}\}|, \zeta_{4,\mathcal{S}_i} = |\{z: z \in \mathcal{L}_{\mathcal{S}_j\mathcal{S}_i}, \mathcal{S}_j \in \mathcal{N}_{\mathcal{S}_i}\}|, \zeta_{5,\mathcal{S}_i} = |\{p:p \in \mathcal{P}_{J_v}, J_v \in \mathcal{J}_{\mathcal{S}_i}\}|$ and $\zeta_{6,\mathcal{S}_i} = |\mathcal{J}_{\mathcal{S}_i}|$.
Their elements are defined as follows.\\
\scalebox{0.9}{$\textbf{D}_{0, \mathcal{S}_i}^{(0)}]_{mn} = \left\{ \begin{matrix*}[l]
1, & \textrm{if } 0 < m \le \zeta_{1,\mathcal{S}_i} + \zeta_{4,\mathcal{S}_i}, n = m + \zeta_{1,\mathcal{S}_i} + \zeta_{3,\mathcal{S}_i},\\
1, & \textrm{if } \zeta_{1,\mathcal{S}_i} + \zeta_{4,\mathcal{S}_i} < m \le 2(\zeta_{1,\mathcal{S}_i} + \zeta_{4,\mathcal{S}_i}), n = m + \zeta_{3,\mathcal{S}_i} - \zeta_{4,\mathcal{S}_i},\\
-S_z, & \textrm{if } \zeta_{1,\mathcal{S}_i} + \zeta_{4,\mathcal{S}_i} < m \le 2(\zeta_{1,\mathcal{S}_i} + \zeta_{4,\mathcal{S}_i}), [\hat{\textbf{q}}_{\mathcal{S}_i,0}]_{m - \zeta_{1,\mathcal{S}_i} - \zeta_{4,\mathcal{S}_i}} \equiv \hat{q}_{z,0},[\textbf{x}_{\mathcal{S}_i}]_n \equiv \hat{g}_{p,0}, p \in \mathcal{P}_z,\\
1, & \textrm{if } 2(\zeta_{1,\mathcal{S}_i} + \zeta_{4,\mathcal{S}_i}) < m \le 2(\zeta_{1,\mathcal{S}_i} + \zeta_{4,\mathcal{S}_i}) + \zeta_{5,\mathcal{S}_i}, n = m + 2\zeta_{1,\mathcal{S}_i} + \zeta_{3,\mathcal{S}_i} + \zeta_{4,\mathcal{S}_i},\\
1, & \textrm{if } m = l + 2(\zeta_{1,\mathcal{S}_i} + \zeta_{4,\mathcal{S}_i}) + \zeta_{5,\mathcal{S}_i}, l > 0, [\textbf{x}_{\mathcal{S}_i}]_n \equiv \hat{g}_{p,0}, p \in \mathcal{P}_{J_v}, J_v \textrm{ is } l^{th}-element \textrm{ in } \mathcal{P}_{\mathcal{S}_i}\\
0, & otherwise.
\end{matrix*} \right.$}\\
\scalebox{0.9}{$[\textbf{d}_{0, \mathcal{S}_i}^{(0)}]_{m} = \left\{ \begin{matrix*}[l]
n_z(t) + e_{z,0} - ||\hat{\textbf{f}}_{z,0}||^2, & \textrm{if } 0 < m \le \zeta_{1,\mathcal{S}_i} + \zeta_{4,\mathcal{S}_i}, [\hat{\textbf{q}}_{\mathcal{S}_i,0}]_m \equiv \hat{q}_{z,0},\\
\overline{q_{z,0}}, & \textrm{if } m = n + \zeta_{1,\mathcal{S}_i} + \zeta_{4,\mathcal{S}_i}, 0 < n \le \zeta_{1,\mathcal{S}_i} + \zeta_{4,\mathcal{S}_i}, [\hat{\textbf{q}}_{\mathcal{S}_i,0}]_{n} \equiv \hat{q}_{z,0},\tau(z) \in \mathcal{O}_{\mathcal{S}_i},\\
\overline{g_p}, & \textrm{if } m = n + 2(\zeta_{1,\mathcal{S}_i} + \zeta_{4,\mathcal{S}_i}), 0 < n \le \zeta_{5,\mathcal{S}_i}, [\hat{\textbf{g}}_{\mathcal{S}_i,0}]_n \equiv \hat{g}_{p,0},\\
C_{J_v}, & \textrm{if } m = n + 2(\zeta_{1,\mathcal{S}_i} + \zeta_{4,\mathcal{S}_i}) + \zeta_{5,\mathcal{S}_i}, 0 < n \le \zeta_{6,\mathcal{S}_i}, J_v \textrm{ is } n^{th}-element \textrm{ in } \mathcal{P}_{\mathcal{S}_i}\\
0, & otherwise.
\end{matrix*} \right.$}\\
If $0 < k \le K-1$\\
\scalebox{0.8}{$[\textbf{D}_{0, \mathcal{S}_i}^{(k)}]_{mn} = \left\{ \begin{matrix*}[l]
-1, & \textrm{if } 0 < m \le \zeta_{2,\mathcal{S}_i}, n = m + (k-1)(2\zeta_{1,\mathcal{S}_i} + \zeta_{3,\mathcal{S}_i} + \zeta_{4,\mathcal{S}_i}),\\
1, & \textrm{if } 0 < m \le \zeta_{2,\mathcal{S}_i}, n = m + (k-1)(2\zeta_{1,\mathcal{S}_i} + \zeta_{3,\mathcal{S}_i} + \zeta_{4,\mathcal{S}_i}) + \zeta_{1,\mathcal{S}_i} + \zeta_{3,\mathcal{S}_i},\\
1, & \textrm{if } \zeta_{2,\mathcal{S}_i} < m \le 2\zeta_{2,\mathcal{S}_i}, n = m + k(2\zeta_{1,\mathcal{S}_i} + \zeta_{3,\mathcal{S}_i} + \zeta_{4,\mathcal{S}_i}),\\
1, & \textrm{if } 2\zeta_{2,\mathcal{S}_i} < m \le 2\zeta_{2,\mathcal{S}_i} + \zeta_{1,\mathcal{S}_i} + \zeta_{4,\mathcal{S}_i}, n = m + \zeta_{1,\mathcal{S}_i} + \zeta_{3,\mathcal{S}_i} - 2\zeta_{2,\mathcal{S}_i},\\
-S_z, & \textrm{if } 2\zeta_{2,\mathcal{S}_i} < m \le 2\zeta_{2,\mathcal{S}_i} + \zeta_{1,\mathcal{S}_i} + \zeta_{4,\mathcal{S}_i}, [\hat{\textbf{q}}_{\mathcal{S}_i,k}]_{m - 2\zeta_{2,\mathcal{S}_i}} \equiv \hat{q}_{z,k},[\textbf{x}_{\mathcal{S}_i}]_n \equiv \hat{g}_{p,k}, p \in \mathcal{P}_z,\\
1, & \textrm{if } 2\zeta_{2,\mathcal{S}_i} + \zeta_{1,\mathcal{S}_i} + \zeta_{4,\mathcal{S}_i} < m \le 2\zeta_{2,\mathcal{S}_i} + \zeta_{1,\mathcal{S}_i} + \zeta_{4,\mathcal{S}_i} + \zeta_{5,\mathcal{S}_i}, n = m + k(2\zeta_{1,\mathcal{S}_i} + \zeta_{3,\mathcal{S}_i} + \zeta_{4,\mathcal{S}_i}) + 2\zeta_{1,\mathcal{S}_i} + \zeta_{3,\mathcal{S}_i} + \zeta_{4,\mathcal{S}_i},\\
1, & \textrm{if } m = l + 2\zeta_{2,\mathcal{S}_i} + \zeta_{1,\mathcal{S}_i} + \zeta_{4,\mathcal{S}_i} + \zeta_{5,\mathcal{S}_i}, l > 0, [\textbf{x}_{\mathcal{S}_i}]_n \equiv \hat{g}_{p,k}, p \in \mathcal{P}_{J_v}, J_v \textrm{ is } l^{th}-element \textrm{ in } \mathcal{P}_{\mathcal{S}_i}\\
0, & otherwise.
\end{matrix*} \right.$}\\
\scalebox{0.8}{$[\textbf{d}_{0, \mathcal{S}_i}^{(k)}]_{m} = \left\{ \begin{matrix*}[l]
e_{z,k} - \sqrt{(1-\epsilon_t)/\epsilon_t}||\hat{\textbf{f}}_{z,k}||^2, & \textrm{if } 0 < m \le \zeta_{2,\mathcal{S}_i}, [\hat{\textbf{q}}_{z,k}]_m \equiv \hat{q}_{z,k}, z \in \mathcal{L}_{\mathcal{S}_i}^{in},\\
\overline{n_z} - e_{z,k+1} - \sqrt{(1-\epsilon_t)/\epsilon_t}||\hat{\textbf{f}}_{z,k+1}||^2, & \textrm{if } m = n + \zeta_{2,\mathcal{S}_i}, 0 < n \le \zeta_{2,\mathcal{S}_i}, [\hat{\textbf{q}}_{z,k}]_n \equiv \hat{q}_{z,k}, z \in \mathcal{L}_{\mathcal{S}_i}^{in},\\
\overline{q_{z,k}}, & \textrm{if } m = n + \zeta_{1,\mathcal{S}_i} + \zeta_{4,\mathcal{S}_i}, 0 < n \le \zeta_{1,\mathcal{S}_i} + \zeta_{4,\mathcal{S}_i}, [\hat{\textbf{q}}_{\mathcal{S}_i,k}]_{n} \equiv \hat{q}_{z,k},\tau(z) \in \mathcal{O}_{\mathcal{S}_i},\\
\overline{g_p}, & \textrm{if } m = n + 2(\zeta_{1,\mathcal{S}_i} + \zeta_{4,\mathcal{S}_i}), 0 < n \le \zeta_{5,\mathcal{S}_i}, [\hat{\textbf{g}}_{\mathcal{S}_i,k}]_n \equiv \hat{g}_{p,k},\\
C_{J_v}, & \textrm{if } m = n + 2(\zeta_{1,\mathcal{S}_i} + \zeta_{4,\mathcal{S}_i}) + \zeta_{5,\mathcal{S}_i}, 0 < n \le \zeta_{6,\mathcal{S}_i}, J_v \textrm{ is } n^{th}-element \textrm{ in } \mathcal{P}_{\mathcal{S}_i}\\
0, & otherwise.
\end{matrix*} \right.$}\\
The second-order cone constraint (\eqref{eq_soc_constraint1} and \eqref{eq_soc_constraint2}) with the definitions \eqref{eq_var_temp1} and \eqref{eq_var_temp2} have the following form \[\textbf{D}_{j, \mathcal{S}_i}\textbf{x}_{\mathcal{S}_i} - \textbf{d}_{j, \mathcal{S}_i} \in \mathcal{C}_{j,\mathcal{S}_i}\] where $\mathcal{C}_{j,\mathcal{S}_i}$ is a unit second-order cone with suitable dimension.
Denote $\xi_{\mathcal{S}_i}$ as the number of the second-order constraints in the subproblem $\mathcal{S}_i$.
Then all inequality constraints for the control variables of the subproblem $\mathcal{S}_i$ can be described by
\begin{equation}\label{eq_localinequalities_subproblem}
\textbf{D}_{\mathcal{S}_i}\textbf{x}_{\mathcal{S}_i} - \textbf{d}_{\mathcal{S}_i} \in \Omega_{\mathcal{S}_i}
\end{equation}
where $\textbf{D}_{\mathcal{S}_i} = [\textbf{D}_{0, \mathcal{S}_i}^T, \textbf{D}_{1, \mathcal{S}_i}^T, \dots, \textbf{D}_{\xi_{\mathcal{S}_i}, \mathcal{S}_i}^T]^T$, $\textbf{d}_{\mathcal{S}_i} = [\textbf{d}_{0, \mathcal{S}_i}^T, \textbf{d}_{1, \mathcal{S}_i}^T, \dots, \textbf{d}_{\xi_{\mathcal{S}_i}, \mathcal{S}_i}^T]^T$ and $\Omega_{\mathcal{S}_i} = \mathbb{R}_{-}^{\xi_{0,\mathcal{S}_i}} \times \bigtimes_{m = 1}^{\xi_{\mathcal{S}_i}} \mathcal{C}_{m,\mathcal{S}_i}$. Here $\times$ denotes the Cartesian product of two sets and $\bigtimes_{m = 1}^{\xi_{\mathcal{S}_i}} \mathcal{C}_{m,\mathcal{S}_i} = \mathcal{C}_{1,\mathcal{S}_i} \times \cdots \times \mathcal{C}_{\xi_{\mathcal{S}_i},\mathcal{S}_i}$
The equality constraints \eqref{eq_SMPC_TVPredicted} corresponding to all road links $z$, where $\sigma(z) \in \mathcal{J}_{\mathcal{S}_i} \cup \mathcal{O}_{\mathcal{S}_i}$ for all $k = 0, \dots, K - 1$, can be stacked into the following linear equation
\begin{equation}\label{eq_localequalities_subproblem}
\textbf{M}_{\mathcal{S}_i}\textbf{x}_{\mathcal{S}_i} = \textbf{m}_{\mathcal{S}_i}
\end{equation}
where $\textbf{M}_{\mathcal{S}_i} = \left[[\textbf{M}_{\mathcal{S}_i}]_{mn}\right] \in \mathbb{R}^{K(\zeta_{1,\mathcal{S}_i}+\zeta_{3,\mathcal{S}_i}) \times \zeta_{\mathcal{S}_i}}$ and vector $\textbf{m}_{\mathcal{S}_i} = \left[[\textbf{m}_{\mathcal{S}_i}]_{m}\right] \in \mathbb{R}^{K(\zeta_{1,\mathcal{S}_i}+\zeta_{3,\mathcal{S}_i})}$ by\\
\scalebox{0.8}{$[\textbf{M}_{\mathcal{S}_i}]_{mn} = \left\{ \begin{matrix*}[l]
1, & \textrm{if } m = l+k(\zeta_{1,\mathcal{S}_i} + \zeta_{3,\mathcal{S}_i}), [\textbf{x}_{\mathcal{S}_i}]_n \equiv [\hat{\textbf{n}}_{\mathcal{S}_i,k}]_l \equiv \hat{n}_{z,k}, 0 < l \le \zeta_{1,\mathcal{S}_i} + \zeta_{3,\mathcal{S}_i}, 0 \le k \le K-1\\
-1, & \textrm{if } m = l+k(\zeta_{1,\mathcal{S}_i} + \zeta_{3,\mathcal{S}_i}), [\textbf{x}_{\mathcal{S}_i}]_n \equiv [\hat{\textbf{n}}_{\mathcal{S}_i,k-1}]_l \equiv \hat{n}_{z,k-1}, 0 < l \le \zeta_{1,\mathcal{S}_i} + \zeta_{3,\mathcal{S}_i}, 0 < k \le K-1\\
1, & \textrm{if } m = l+k(\zeta_{1,\mathcal{S}_i} + \zeta_{3,\mathcal{S}_i}), [\textbf{x}_{\mathcal{S}_i}]_n \equiv [\hat{\textbf{q}}_{\mathcal{S}_i,k}]_l \equiv \hat{q}_{z,k}, 0 < l \le \zeta_{1,\mathcal{S}_i}, 0 \le k \le K-1\\
1, & \textrm{if } m = l+\zeta_{1,\mathcal{S}_i}+k(\zeta_{1,\mathcal{S}_i} + \zeta_{3,\mathcal{S}_i}), [\textbf{x}_{\mathcal{S}_i}]_n \equiv \hat{q}_{z,k}^{copy} \textrm{ where } [\hat{\textbf{n}}_{\mathcal{S}_i,k}]_{l+\zeta_{1,\mathcal{S}_i}} \equiv \hat{n}_{z,k}, 0 < l \le \zeta_{3,\mathcal{S}_i}, 0 \le k \le K-1\\
-r_{wz,k}, & \textrm{if } m = l+k(\zeta_{1,\mathcal{S}_i} + \zeta_{3,\mathcal{S}_i}), [\textbf{x}_{\mathcal{S}_i}]_n \equiv \hat{q}_{w,k} \textrm{ where } [\hat{\textbf{n}}_{\mathcal{S}_i,k}]_{l} \equiv \hat{n}_{z,k}, 0 < l \le \zeta_{1,\mathcal{S}_i} + \zeta_{3,\mathcal{S}_i}, 0 \le k \le K-1\\
0, & otherwise.
\end{matrix*} \right.$}
\[[\textbf{m}_{\mathcal{S}_i}]_{m} = \left\{ \begin{matrix*}[l]
n_z(t) + e_{z,0}, & \textrm{if } m \le \zeta_{1,\mathcal{S}_i} + \zeta_{3,\mathcal{S}_i}, [\hat{\textbf{n}}_{\mathcal{S}_i,0}]_{m} \equiv \hat{n}_{z,0},\\
e_{z,k}, & \textrm{if } m = l+k(\zeta_{1,\mathcal{S}_i} + \zeta_{3,\mathcal{S}_i}), [\hat{\textbf{n}}_{\mathcal{S}_i,k}]_{l} \equiv \hat{n}_{z,k}, 1 \le l \le \zeta_{1,\mathcal{S}_i} + \zeta_{3,\mathcal{S}_i}, 1 \le k \le K-1.
\end{matrix*} \right.\]
Since the rows in $\textbf{M}_{\mathcal{S}_i}$ correspond to different road links and different time control steps, they are independent or $\textbf{M}_{\mathcal{S}_i}$ is a full row rank matrix.
\subsubsection{Distributed stochastic MPC traffic signal problem}
Summarizing the formulation in this subsection, we have the constrained optimization problem \eqref{eq_distributedSMPC} as the distributed version of the stochastic MPC traffic signal control problem \eqref{eq_SMPCProblem_detailedform}.
\begin{equation}\label{eq_distributedSMPC}
\min\limits_{\textbf{x}_{\mathcal{S}_1},\dots,\textbf{x}_{\mathcal{S}_N}} \textrm{ } \sum\limits_{i=1}^{N} \Phi_{\mathcal{S}_i}(\textbf{x}_{\mathcal{S}_i}) \textrm{ s.t. } \left\{ \begin{matrix*}[l]
\eqref{eq_localcost_subproblem}, \eqref{eq_localinequalities_subproblem}, \eqref{eq_localequalities_subproblem},  \forall \mathcal{S}_i, \\
\eqref{eq_coupledequalities_subproblem}, \forall \mathcal{S}_j \in \mathcal{N}_{\mathcal{S}_i}, \forall \mathcal{S}_i.
\end{matrix*}\right.
\end{equation}
To apply the proximal ADMM scheme \eqref{eq_ADMM} for solving the optimization problem \eqref{eq_distributedSMPC}, we rewrite this problem in the form of \eqref{eq_ADMMproblem}.
Let $\textbf{y}_{\mathcal{S}_i} = \textbf{D}_{\mathcal{S}_i}\textbf{x}_{\mathcal{S}_i} - \textbf{d}_{\mathcal{S}_i}$ and $\textbf{y}_{\mathcal{S}_i\mathcal{S}_j}^{(1)} = \textbf{P}_{\mathcal{S}_i\mathcal{S}_j}\textbf{x}_{\mathcal{S}_i}, \textbf{y}_{\mathcal{S}_i\mathcal{S}_j}^{(2)} = \textbf{Q}_{\mathcal{S}_i\mathcal{S}_j}\textbf{x}_{\mathcal{S}_i}, \forall \mathcal{S}_j \in \mathcal{N}_{\mathcal{S}_i}$ and define the set $\mathcal{X}_{\mathcal{S}_i}$ of all local equality constraints for each subproblem $\mathcal{S}_i$, the set $\Omega_{\mathcal{S}_i\mathcal{S}_j}$ of all coupled constraints between two agent $\mathcal{S}_i$ and $\mathcal{S}_j$ as follows.
\[\mathcal{X}_{\mathcal{S}_i} = \left\{\textbf{x}_{\mathcal{S}_i}: \textbf{M}_{\mathcal{S}_i}\textbf{x}_{\mathcal{S}_i} = \textbf{m}_{\mathcal{S}_i}\right\};\]
\[\Omega_{\mathcal{S}_i\mathcal{S}_j} = \left\{\textbf{y}_{\mathcal{S}_i\mathcal{S}_j}^{(1)}, \textbf{y}_{\mathcal{S}_j\mathcal{S}_i}^{(2)}: \textbf{y}_{\mathcal{S}_i\mathcal{S}_j}^{(1)} = \textbf{y}_{\mathcal{S}_j\mathcal{S}_i}^{(2)}\right\}\]
Then we have the optimization problem \eqref{eq_SMPC_ADMMconsensus} which is equivalent to the distributed stochastic MPC traffic signal control problem \eqref{eq_distributedSMPC}.
\begin{subequations}\label{eq_SMPC_ADMMconsensus}
\begin{align}
\min &\textrm{ } \sum\limits_{i=1}^{N} \left( \frac{1}{2}\textbf{x}_{\mathcal{S}_i}^T\textbf{H}_{\mathcal{S}_i}\textbf{x}_{\mathcal{S}_i} + \textbf{h}_{\mathcal{S}_i}^T\textbf{x}_{\mathcal{S}_i} \right)\\
\textrm{s.t. }& \textbf{x}_{\mathcal{S}_i} \in \mathcal{X}_{\mathcal{S}_i}, \textbf{y}_{\mathcal{S}_i} \in \Omega_{\mathcal{S}_i}, \forall i =  1, \dots, N,\\
&(\textbf{y}_{\mathcal{S}_i\mathcal{S}_j}^{(1)}, \textbf{y}_{\mathcal{S}_j\mathcal{S}_i}^{(2)}) \in \Omega_{\mathcal{S}_i\mathcal{S}_j}, \forall \mathcal{S}_j \in \mathcal{N}_{\mathcal{S}_i}, \forall i =  1, \dots, N,\\
&\textbf{D}_{\mathcal{S}_i}\textbf{x}_{\mathcal{S}_i} - \textbf{d}_{\mathcal{S}_i} = \textbf{y}_{\mathcal{S}_i}, \forall i =  1, \dots, N,\\
&\textbf{y}_{\mathcal{S}_i\mathcal{S}_j}^{(1)} = \textbf{P}_{\mathcal{S}_i\mathcal{S}_j}\textbf{x}_{\mathcal{S}_i}, \forall \mathcal{S}_j \in \mathcal{N}_{\mathcal{S}_i}, \forall i =  1, \dots, N,\\
&\textbf{y}_{\mathcal{S}_i\mathcal{S}_j}^{(2)} = \textbf{Q}_{\mathcal{S}_i\mathcal{S}_j}\textbf{x}_{\mathcal{S}_i}, \forall \mathcal{S}_j \in \mathcal{N}_{\mathcal{S}_i}, \forall i =  1, \dots, N.
\end{align}
\end{subequations}
In this reformulation, the local equality constraint \eqref{eq_localequalities_subproblem} is converted into $\textbf{x}_{\mathcal{S}_i} \in \mathcal{X}_{\mathcal{S}_i}$;
the local inequality constraint \eqref{eq_localinequalities_subproblem} is replaced by the equality constraint (\ref{eq_SMPC_ADMMconsensus}d) combined with the constraint $\textbf{y}_{\mathcal{S}_i} \in \Omega_{\mathcal{S}_i}$; similarly, the equality constraints (\ref{eq_SMPC_ADMMconsensus}e), (\ref{eq_SMPC_ADMMconsensus}f) and the constraint (\ref{eq_SMPC_ADMMconsensus}c) are used to replace the coupled constraints among neighboring agents.
For the coordination among agents in solving the optimization problem \eqref{eq_SMPC_ADMMconsensus}, we make the following assumption:
\begin{Assumption}
Every agent $\mathcal{S}_i$ can exchange information with its neighbors in $\mathcal{N}_{\mathcal{S}_i}$
\end{Assumption}
\subsection{Distributed solution based ADMM}
Define the stacked vectors $\textbf{x} = \left[ \begin{matrix}\textbf{x}_{\mathcal{S}_1}^T & \textbf{x}_{\mathcal{S}_2}^T & \cdots & \textbf{x}_{\mathcal{S}_N}^T\end{matrix} \right]^T$ and $\textbf{y} = \left[ \begin{matrix}\tilde{\textbf{y}}_{\mathcal{S}_1}^T & \tilde{\textbf{y}}_{\mathcal{S}_2}^T & \cdots & \tilde{\textbf{y}}_{\mathcal{S}_N}^T\end{matrix} \right]^T$ where $\tilde{\textbf{y}}_{\mathcal{S}_i} = \left[ \begin{matrix}\textbf{y}_{\mathcal{S}_1}^T & row\left\{[\textbf{y}_{\mathcal{S}_i\mathcal{S}_j}^{(1)T}, \textbf{y}_{\mathcal{S}_i\mathcal{S}_j}^{(2)T}]: \mathcal{S}_j \in \mathcal{N}_{\mathcal{S}_i}\right\}\end{matrix} \right]^T$.
It is clear that the problem \eqref{eq_SMPC_ADMMconsensus} has the form of \eqref{eq_ADMMproblem} with
\[\Psi_x(\textbf{x}) = \sum\limits_{i=1}^{N}\left\{ \frac{1}{2} \textbf{x}_{\mathcal{S}_i}^T\textbf{H}_{\mathcal{S}_i}\textbf{x}_{\mathcal{S}_i} + \textbf{h}_{\mathcal{S}_i}^T\textbf{x}_{\mathcal{S}_i}\right\}, \Psi_y(\textbf{y}) = 0,\]
$\textbf{A} = blkdiag\left\{\textbf{A}_{\mathcal{S}_1}, \textbf{A}_{\mathcal{S}_2}, \dots, \textbf{A}_{\mathcal{S}_N}\right\}$  where
$\textbf{A}_{\mathcal{S}_i} = blkdiag\left\{\textbf{D}_{\mathcal{S}_i}, blkdiag\left\{\textbf{P}_{\mathcal{S}_i\mathcal{S}_j}, \textbf{Q}_{\mathcal{S}_i\mathcal{S}_j}: \mathcal{S}_j \in \mathcal{N}_{\mathcal{S}_i}\right\}\right\}$, $\textbf{B} = -\textbf{I}$ where $\textbf{I}$ is an unity matrix, $\textbf{c} = col\left\{\textbf{c}_{\mathcal{S}_1}, \textbf{c}_{\mathcal{S}_2}, \dots, \textbf{c}_{\mathcal{S}_N}\right\}$ where $\textbf{c}_{\mathcal{S}_i} = col\left\{\textbf{d}_{\mathcal{S}_i}, \textbf{0}\right\}$, $\textbf{0}$ is the vector of all elements zero, and $\mathcal{X} = \bigtimes_{i = 1}^{N} \mathcal{X}_{\mathcal{S}_i}$, $\mathcal{Y} = \bigtimes_{i = 1}^{N} \left( \Omega_{\mathcal{S}_i} \times \bigtimes_{\mathcal{S}_j \in \mathcal{N}_{\mathcal{S}_i}} \Omega_{\mathcal{S}_i\mathcal{S}_j}\right)$.
Let $\rho$ be an arbitrarily positive constant and $\eta_{\mathcal{S}_i}, \forall i = 1, \dots, N,$ be a positive constant bigger than the largest eigenvalue of the matrix $\textbf{A}_{\mathcal{S}_i}^T\textbf{A}_{\mathcal{S}_i}, \forall i = 1, \dots, N$.
Define $\textbf{G}_{\mathcal{S}_i} = \eta_{\mathcal{S}_i}\textbf{I} - \rho\textbf{A}_{\mathcal{S}_i}^T\textbf{A}_{\mathcal{S}_i}$. It is easy to see $\textbf{G}_{\mathcal{S}_i}$ is a positive definite matrix and it can be determined by local information of the agent $\mathcal{S}_i$.

We use $\boldsymbol{\lambda}_{\mathcal{S}_i}$, $\boldsymbol{\lambda}_{\mathcal{S}_i\mathcal{S}_j}^{(1)}$ and $\boldsymbol{\lambda}_{\mathcal{S}_i\mathcal{S}_j}^{(2)}$ to denote the dual variables corresponding to the equality constraint (\ref{eq_SMPC_ADMMconsensus}d), (\ref{eq_SMPC_ADMMconsensus}e) and (\ref{eq_SMPC_ADMMconsensus}f), respectively.
\begin{subequations}\label{eq_updatelaw}
\begin{align}
\textbf{x}_{\mathcal{S}_i}(l + 1) &= \hat{\textbf{H}}_{\mathcal{S}_i}^{-1} \hat{\textbf{h}}_{\mathcal{S}_i}(l) - \hat{\textbf{H}}_{\mathcal{S}_i}^{-1}\textbf{M}_{\mathcal{S}_i}^T \left(\textbf{M}_{\mathcal{S}_i}\hat{\textbf{H}}_{\mathcal{S}_i}^{-1}\textbf{M}_{\mathcal{S}_i}^T\right)^{-1} \left(\textbf{M}_{\mathcal{S}_i}\hat{\textbf{H}}_{\mathcal{S}_i}^{-1}\hat{\textbf{h}}_{\mathcal{S}_i} - \textbf{m}_{\mathcal{S}_i}\right),\\
\textrm{where }\hat{\textbf{H}}_{\mathcal{S}_i} &= \textbf{Q}_{\mathcal{S}_i} + \eta_{\mathcal{S}_i}\textbf{I} \nonumber\\
\hat{\textbf{h}}_{\mathcal{S}_i}(l) &= \left(\eta_i\textbf{I} - \rho\textbf{D}_{\mathcal{S}_i}^T\textbf{D}_{\mathcal{S}_i} - \sum\limits_{\mathcal{S}_j \in \mathcal{N}_{\mathcal{S}_i}} \rho\left(\textbf{P}_{\mathcal{S}_i\mathcal{S}_j}^T\textbf{P}_{\mathcal{S}_i\mathcal{S}_j} + \textbf{Q}_{\mathcal{S}_i\mathcal{S}_j}^T\textbf{Q}_{\mathcal{S}_i\mathcal{S}_j}\right)\right) \textbf{x}_{\mathcal{S}_i}(l) + \textbf{D}_{\mathcal{S}_i}^T \left( \boldsymbol{\lambda}_{\mathcal{S}_i}(l) + \rho\left(\textbf{y}_{\mathcal{S}_i}(l) + \textbf{d}_{\mathcal{S}_i}\right) \right)\nonumber\\
&+ \sum_{\mathcal{S}_j \in \mathcal{N}_{\mathcal{S}_i}} \left( \textbf{P}_{\mathcal{S}_i\mathcal{S}_j}^T \left( \boldsymbol{\lambda}_{\mathcal{S}_i\mathcal{S}_j}^{(1)}(l) + \rho\textbf{y}_{\mathcal{S}_i\mathcal{S}_j}^{(1)}(l) \right) + \textbf{Q}_{\mathcal{S}_i\mathcal{S}_j}^T \left( \boldsymbol{\lambda}_{\mathcal{S}_i\mathcal{S}_j}^{(2)}(l) + \rho\textbf{y}_{\mathcal{S}_i\mathcal{S}_j}^{(2)}(l) \right) \right) - \textbf{h}_{\mathcal{S}_i}\nonumber\\
\textbf{y}_{\mathcal{S}_i}(l + 1) &= Prj_{\Omega_{\mathcal{S}_i}}\left(\textbf{D}_{\mathcal{S}_i}\textbf{x}_{\mathcal{S}_i}(l + 1) - \textbf{d}_{\mathcal{S}_i} - \frac{1}{\rho}(\boldsymbol{\lambda}_{\mathcal{S}_i}(l)\right),\\
\boldsymbol{\lambda}_{\mathcal{S}_i}(l + 1) &= \boldsymbol{\lambda}_{\mathcal{S}_i}(l) - \rho\left(\textbf{D}_{\mathcal{S}_i}\textbf{x}_{\mathcal{S}_i}(l + 1) - \textbf{d}_{\mathcal{S}_i} - \textbf{y}_{\mathcal{S}_i}(l + 1)\right),\\
\textbf{y}_{\mathcal{S}_i\mathcal{S}_j}^{(1)}(l + 1) &= \frac{1}{2}\Bigl( \textbf{P}_{\mathcal{S}_i\mathcal{S}_j}\textbf{x}_{\mathcal{S}_i}(l + 1) - \frac{1}{\rho}\boldsymbol{\lambda}_{\mathcal{S}_i\mathcal{S}_j}^{(1)}(l) + \textbf{Q}_{\mathcal{S}_j\mathcal{S}_i}\textbf{x}_{\mathcal{S}_j}(l+1) - \frac{1}{\rho}\boldsymbol{\lambda}_{\mathcal{S}_j\mathcal{S}_i}^{(2)}(l) \Bigr), \forall \mathcal{S}_j \in \mathcal{N}_{\mathcal{S}_i},\\
\boldsymbol{\lambda}_{\mathcal{S}_i\mathcal{S}_j}^{(1)}(l + 1) &= \boldsymbol{\lambda}_{\mathcal{S}_i\mathcal{S}_j}^{(1)}(l) - \rho\left(\textbf{P}_{\mathcal{S}_i\mathcal{S}_j}\textbf{x}_{\mathcal{S}_i}(l + 1) - \textbf{y}_{\mathcal{S}_i\mathcal{S}_j}^{(1)}(l + 1)\right), \forall \mathcal{S}_j \in \mathcal{N}_{\mathcal{S}_i},\\
\textbf{y}_{\mathcal{S}_i\mathcal{S}_j}^{(2)}(l + 1) &= \frac{1}{2}\Bigl( \textbf{P}_{\mathcal{S}_j\mathcal{S}_i}\textbf{x}_{\mathcal{S}_j}(l + 1) - \frac{1}{\rho}\boldsymbol{\lambda}_{\mathcal{S}_j\mathcal{S}_i}^{(1)}(l) + \textbf{Q}_{\mathcal{S}_i\mathcal{S}_j}\textbf{x}_{\mathcal{S}_i}(l+1) - \frac{1}{\rho}\boldsymbol{\lambda}_{\mathcal{S}_i\mathcal{S}_j}^{(2)}(l) \Bigr), \forall \mathcal{S}_j \in \mathcal{N}_{\mathcal{S}_i},\\
\boldsymbol{\lambda}_{\mathcal{S}_i\mathcal{S}_j}^{(2)}(l + 1) &= \boldsymbol{\lambda}_{\mathcal{S}_i\mathcal{S}_j}^{(2)}(l) - \rho\left(\textbf{Q}_{\mathcal{S}_i\mathcal{S}_j}\textbf{x}_{\mathcal{S}_i}(l + 1) - \textbf{y}_{\mathcal{S}_i\mathcal{S}_j}^{(2)}(l + 1)\right), \forall \mathcal{S}_j \in \mathcal{N}_{\mathcal{S}_i}.
\end{align}
\end{subequations}
By applying ADMM scheme \eqref{eq_ADMM}, which is summarized in Appendix \ref{subADMM}, to solve the problem \eqref{eq_SMPC_ADMMconsensus}, with $\textbf{G} = blkdiag\left\{\textbf{G}_{\mathcal{S}_i}: i = 1, \dots, N\right\}$, we derive the update law \eqref{eq_updatelaw} for all $\mathcal{S}_i, i = 1, \dots, N$, to find the optimal solution of the distributed stochastic MPC traffic signal control problem \eqref{eq_distributedSMPC}.
In the update (\ref{eq_updatelaw}b) for the variable $\textbf{y}_{\mathcal{S}_i}$, $Prj_{\Omega_{\mathcal{S}_i}}(\cdot)$ is the projection of a vector in $(\cdot)$ onto the set $\Omega_{\mathcal{S}_i}$.
In this paper, the set $\Omega_{\mathcal{S}_i}$ is a Cartersian product of bounded sets and second-order cone.
The projections on these sets are given in Appendix \ref{subProjBC}.
\begin{Theorem}\label{th_updatelaw}
Under the assumption that $\Omega_t \neq \emptyset$, the optimal solution of the distributed stochastic MPC traffic signal control problem \eqref{eq_distributedSMPC} is achieved asymptotically by the update law \eqref{eq_updatelaw} and the convergence rate is $o\left(\frac{1}{l}\right)$.
\end{Theorem}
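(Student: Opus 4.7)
The plan is to recognize Theorem \ref{th_updatelaw} as a direct instantiation of the proximal ADMM convergence result (Theorem \ref{th_ADMM}) applied to the consensus reformulation \eqref{eq_SMPC_ADMMconsensus}, which the paper has already argued is equivalent to the distributed problem \eqref{eq_distributedSMPC}. Concretely, I would first check that \eqref{eq_SMPC_ADMMconsensus} fits the template \eqref{eq_ADMMproblem}: $\Psi_x(\mathbf{x})=\sum_i\bigl(\tfrac{1}{2}\mathbf{x}_{\mathcal{S}_i}^T\mathbf{H}_{\mathcal{S}_i}\mathbf{x}_{\mathcal{S}_i}+\mathbf{h}_{\mathcal{S}_i}^T\mathbf{x}_{\mathcal{S}_i}\bigr)$ is convex (each $\mathbf{H}_{\mathcal{S}_i}\succeq 0$), $\Psi_y\equiv 0$, $\mathcal{X}$ is a product of affine subspaces, and $\mathcal{Y}$ is a Cartesian product of a nonpositive orthant, unit second-order cones, and the consensus subspaces $\Omega_{\mathcal{S}_i\mathcal{S}_j}$, all convex. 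With $\mathbf{A}=\text{blkdiag}\{\mathbf{A}_{\mathcal{S}_i}\}$, $\mathbf{B}=-\mathbf{I}$ and $\mathbf{c}$ as stated in the text, the equalities (\ref{eq_SMPC_ADMMconsensus}d)--(\ref{eq_SMPC_ADMMconsensus}f) are exactly $\mathbf{A}\mathbf{x}+\mathbf{B}\mathbf{y}=\mathbf{c}$. The hypothesis $\Omega_t\neq\emptyset$ then guarantees existence of a saddle point of the corresponding Lagrangian, which is the standing assumption of Theorem \ref{th_ADMM}.

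The substantive step is to instantiate (\ref{eq_ADMM}a)--(\ref{eq_ADMM}c) with the choice $\mathbf{G}=\text{blkdiag}\{\mathbf{G}_{\mathcal{S}_i}\}$ and read off \eqref{eq_updatelaw}. Since $\mathbf{A}$, $\mathbf{G}$, and the set $\mathcal{X}$ are block-diagonal across agents, the $\mathbf{x}$-subproblem decouples into $N$ independent quadratic programs. Expanding the augmented Lagrangian of agent $\mathcal{S}_i$ and collecting quadratic terms yields Hessian $\mathbf{H}_{\mathcal{S}_i}+\rho\mathbf{A}_{\mathcal{S}_i}^T\mathbf{A}_{\mathcal{S}_i}+\mathbf{G}_{\mathcal{S}_i}=\mathbf{H}_{\mathcal{S}_i}+\eta_{\mathcal{S}_i}\mathbf{I}=\hat{\mathbf{H}}_{\mathcal{S}_i}$: the proximal term is designed precisely to cancel the $\rho\mathbf{A}_{\mathcal{S}_i}^T\mathbf{A}_{\mathcal{S}_i}$ coupling, which is the whole reason the proximal variant of ADMM is used here. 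Collecting the linear terms --- contributions from $\mathbf{h}_{\mathcal{S}_i}$, from $-\rho\mathbf{A}_{\mathcal{S}_i}^T(\tilde{\mathbf{y}}_{\mathcal{S}_i}(l)+\mathbf{c}_{\mathcal{S}_i})-\mathbf{A}_{\mathcal{S}_i}^T\tilde{\boldsymbol{\lambda}}_{\mathcal{S}_i}(l)$, and from $-\mathbf{G}_{\mathcal{S}_i}\mathbf{x}_{\mathcal{S}_i}(l)$ --- produces precisely the $\hat{\mathbf{h}}_{\mathcal{S}_i}(l)$ displayed in the theorem. Minimizing this quadratic subject to the linear constraint $\mathbf{M}_{\mathcal{S}_i}\mathbf{x}_{\mathcal{S}_i}=\mathbf{m}_{\mathcal{S}_i}$ is a textbook equality-constrained QP; writing its KKT system and eliminating the multiplier via the full-row-rank property of $\mathbf{M}_{\mathcal{S}_i}$ noted in the text gives the closed form (\ref{eq_updatelaw}a). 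This bookkeeping is the most delicate part of the argument.

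Because $\Psi_y=0$ and $\mathbf{B}=-\mathbf{I}$, the $\mathbf{y}$-step (\ref{eq_ADMM}b) reduces to an unweighted Euclidean projection onto $\mathcal{Y}$ that decomposes across its block structure. The block for $\mathbf{y}_{\mathcal{S}_i}$ is the projection onto $\Omega_{\mathcal{S}_i}$ of $\mathbf{D}_{\mathcal{S}_i}\mathbf{x}_{\mathcal{S}_i}(l+1)-\mathbf{d}_{\mathcal{S}_i}-\tfrac{1}{\rho}\boldsymbol{\lambda}_{\mathcal{S}_i}(l)$, which splits across the product structure of $\Omega_{\mathcal{S}_i}$ and is evaluated via the explicit formulas \eqref{eq_ProjBoundedSet}--\eqref{eq_ProjConeSet} of Appendix A3, giving (\ref{eq_updatelaw}b). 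For each neighbor pair $(\mathbf{y}_{\mathcal{S}_i\mathcal{S}_j}^{(1)},\mathbf{y}_{\mathcal{S}_j\mathcal{S}_i}^{(2)})\in\Omega_{\mathcal{S}_i\mathcal{S}_j}$ the subproblem minimizes a sum of two squared distances under the equality $\mathbf{y}_{\mathcal{S}_i\mathcal{S}_j}^{(1)}=\mathbf{y}_{\mathcal{S}_j\mathcal{S}_i}^{(2)}$; first-order optimality forces the common value to be the midpoint of the two inputs, producing (\ref{eq_updatelaw}d) and, symmetrically, (\ref{eq_updatelaw}f). The dual steps (\ref{eq_updatelaw}c), (\ref{eq_updatelaw}e), (\ref{eq_updatelaw}g) are the per-constraint specializations of (\ref{eq_ADMM}c). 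Finally, since $\eta_{\mathcal{S}_i}$ strictly exceeds the largest eigenvalue of $\mathbf{A}_{\mathcal{S}_i}^T\mathbf{A}_{\mathcal{S}_i}$, each $\mathbf{G}_{\mathcal{S}_i}=\eta_{\mathcal{S}_i}\mathbf{I}-\rho\mathbf{A}_{\mathcal{S}_i}^T\mathbf{A}_{\mathcal{S}_i}$ is positive definite, so $\mathbf{G}\succ 0$ and the hypotheses of Theorem \ref{th_ADMM} are satisfied; items 1)--4) of that theorem then deliver the $o(1/l)$ asymptotic convergence of the iterates, the primal residual and the objective gap. The only genuine obstacle is the algebraic bookkeeping behind (\ref{eq_updatelaw}a); everything else is a routine specialization of the generic proximal-ADMM machinery.
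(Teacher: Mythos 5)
Your proposal is correct and follows essentially the same route as the paper's proof: match \eqref{eq_SMPC_ADMMconsensus} to the template \eqref{eq_ADMMproblem}, show the $\textbf{x}$-step decouples into per-agent equality-constrained QPs solved via their KKT systems (using positive definiteness of $\hat{\textbf{H}}_{\mathcal{S}_i}$ and full row rank of $\textbf{M}_{\mathcal{S}_i}$), identify the $\textbf{y}$-step as a projection onto $\Omega_{\mathcal{S}_i}$ for the inequality blocks and a midpoint average for the consensus blocks, and invoke Theorem \ref{th_ADMM} for the $o(1/l)$ rate. Your explicit remark that the proximal matrix $\textbf{G}_{\mathcal{S}_i}=\eta_{\mathcal{S}_i}\textbf{I}-\rho\textbf{A}_{\mathcal{S}_i}^T\textbf{A}_{\mathcal{S}_i}$ is chosen precisely to cancel the $\rho\textbf{A}_{\mathcal{S}_i}^T\textbf{A}_{\mathcal{S}_i}$ coupling (so the subproblem Hessian becomes $\textbf{H}_{\mathcal{S}_i}+\eta_{\mathcal{S}_i}\textbf{I}$) is a helpful clarification that the paper leaves implicit, but it does not change the argument.
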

\begin{proof}
Matching the distributed stochastic MPC traffic signal control problem \eqref{eq_SMPC_ADMMconsensus} with the form of \eqref{eq_ADMMproblem}, we have the function to be minimized in the update (\ref{eq_ADMM}a) is $\psi_{x}(\textbf{x},l) = \sum_{i = 1}^{N} \psi_{x,\mathcal{S}_i}(\textbf{x}_{\mathcal{S}_i},l)$ where $\psi_{x,\mathcal{S}_i}(\textbf{x}_{\mathcal{S}_i},l) = \frac{1}{2} \textbf{x}_{\mathcal{S}_i}^T \hat{\textbf{H}}_{\mathcal{S}_i} \textbf{x}_{\mathcal{S}_i} - \left(\hat{\textbf{h}}_{\mathcal{S}_i}(l)\right)^T\textbf{x}_{\mathcal{S}_i} + const$ with $\hat{\textbf{H}}_{\mathcal{S}_i}$ and $\hat{\textbf{h}}_{\mathcal{S}_i}$ are defined in (\ref{eq_updatelaw}a).
Then the update (\ref{eq_ADMM}a) is equivalent to the optimal solutions of $N$ constrained optimization problems \eqref{eq_temp1OptimizationProblem} as follows.
\begin{equation}\label{eq_temp1OptimizationProblem}
\min\limits_{\textbf{M}_{\mathcal{S}_i}\textbf{x}_{\mathcal{S}_i} = \textbf{m}_{\mathcal{S}_i}} \left\{\frac{1}{2} \textbf{x}_{\mathcal{S}_i}^T \hat{\textbf{H}}_{\mathcal{S}_i} \textbf{x}_{\mathcal{S}_i} - \left(\hat{\textbf{h}}_{\mathcal{S}_i}(l)\right)^T\textbf{x}_{\mathcal{S}_i}\right\}
\end{equation}
for all $i = 1, \dots, N$.
It is easy to verify that the problem \eqref{eq_temp1OptimizationProblem} is a quadratic program with equality constraints.
Its optimal solution can be found by solving the linear equation \eqref{eq_temp1OptimizationConditions}, which corresponds to its KKT conditions:
\begin{subequations}\label{eq_temp1OptimizationConditions}
\begin{align}
\hat{\textbf{H}}_{\mathcal{S}_i} \textbf{x}_{\mathcal{S}_i} - \hat{\textbf{h}}_{\mathcal{S}_i}(l) + \textbf{M}_{\mathcal{S}_i}^T \boldsymbol{\mu} &= \textbf{0}\\
\textbf{M}_{\mathcal{S}_i}\textbf{x}_{\mathcal{S}_i} &= \textbf{m}_{\mathcal{S}_i}
\end{align}
\end{subequations}
where $\boldsymbol{\mu}$ is the dual variable vector corresponding to the equality constraint in \eqref{eq_temp1OptimizationProblem}.
Since $\textbf{H}_{\mathcal{S}_i}$ is a positive semidefinite matrix, the matrix $\hat{\textbf{H}}_{\mathcal{S}_i}$ is strictly positive definite.
We also note that the matrix $\textbf{M}_{\mathcal{S}_i}$ is full row rank.
By solving the linear equation \eqref{eq_temp1OptimizationConditions}, we obtain the optimal solution
\begin{align*}
\textbf{x}_{\mathcal{S}_i}^* &= \hat{\textbf{H}}_{\mathcal{S}_i}^{-1} \left(\hat{\textbf{h}}_{\mathcal{S}_i} - \textbf{M}_{\mathcal{S}_i}^T \boldsymbol{\mu}^*\right),\\
\boldsymbol{\mu}^* &= \left(\textbf{M}_{\mathcal{S}_i}\hat{\textbf{H}}_{\mathcal{S}_i}^{-1}\textbf{M}_{\mathcal{S}_i}^T\right)^{-1} \left(\textbf{M}_{\mathcal{S}_i}\hat{\textbf{H}}_{\mathcal{S}_i}^{-1}\hat{\textbf{h}}_{\mathcal{S}_i} - \textbf{m}_{\mathcal{S}_i}\right).
\end{align*}
Thus the ADMM iteration step (\ref{eq_ADMM}a) for the update of $\textbf{x}$ is equivalent to the equation (\ref{eq_updatelaw}a) for the update of all $\textbf{x}_{\mathcal{S}_i}, i = 1, \dots, N$.
Consider the update (\ref{eq_ADMM}b), we have the cost function to be minimized is $\sum\limits_{i = 1}^{N} \left\{\frac{\rho}{2}\left|\left|\textbf{D}_{\mathcal{S}_i}\textbf{x}_{\mathcal{S}_i}(l+1) - \textbf{d}_{\mathcal{S}_i} - \textbf{y}_{\mathcal{S}_i} - \frac{1}{\rho}\boldsymbol{\lambda}_{\mathcal{S}_i}(l)\right|\right|^2\right\} + \sum\limits_{i = 1}^{N}\sum\limits_{\mathcal{S}_j \in \mathcal{N}_{\mathcal{S}_i}} \Bigl\{ \frac{\rho}{2}\left|\left|\textbf{P}_{\mathcal{S}_i\mathcal{S}_j}\textbf{x}_{\mathcal{S}_i}(l + 1) - \textbf{y}_{\mathcal{S}_i\mathcal{S}_j}^{(1)} - \frac{1}{\rho}\boldsymbol{\lambda}_{\mathcal{S}_i\mathcal{S}_j}^{(1)}(l)\right|\right|^2 + \frac{\rho}{2}\left|\left|\textbf{Q}_{\mathcal{S}_i\mathcal{S}_j}\textbf{x}_{\mathcal{S}_i}(l + 1) - \textbf{y}_{\mathcal{S}_i\mathcal{S}_j}^{(2)} - \frac{1}{\rho}\boldsymbol{\lambda}_{\mathcal{S}_i\mathcal{S}_j}^{(2)}(l)\right|\right|^2 \Bigr\}$.
Then the update (\ref{eq_ADMM}b) is equivalent to the optimization problem \eqref{eq_temp2OptimizationProblem} for all pair of neighboring agents $\mathcal{S}_i, \mathcal{S}_j$ and the optimization problem \eqref{eq_temp3OptimizationProblem} for all $\mathcal{S}_i, i = 1, \dots, N$, as follows.
\begin{equation}\label{eq_temp2OptimizationProblem}
\min\limits_{\textbf{y}_{\mathcal{S}_i\mathcal{S}_j}^{(1)} = \textbf{y}_{\mathcal{S}_j\mathcal{S}_i}^{(2)}} \left\{\frac{\rho}{2}\left|\left|\textbf{P}_{\mathcal{S}_i\mathcal{S}_j}\textbf{x}_{\mathcal{S}_i}(l + 1) - \textbf{y}_{\mathcal{S}_i\mathcal{S}_j}^{(1)} - \frac{1}{\rho}\boldsymbol{\lambda}_{\mathcal{S}_i\mathcal{S}_j}^{(1)}(l)\right|\right|^2 + \frac{\rho}{2}\left|\left|\textbf{Q}_{\mathcal{S}_j\mathcal{S}_i}\textbf{x}_{\mathcal{S}_j}(l + 1) - \textbf{y}_{\mathcal{S}_j\mathcal{S}_i}^{(2)} - \frac{1}{\rho}\boldsymbol{\lambda}_{\mathcal{S}_j\mathcal{S}_i}^{(2)}(l)\right|\right|^2 \right\}
\end{equation}
Since the problem \eqref{eq_temp2OptimizationProblem} has the same form as the problem \eqref{eq_temp1OptimizationProblem}, we use the same method to derive the optimal solution of the problem \eqref{eq_temp2OptimizationProblem} as $\textbf{y}_{\mathcal{S}_i\mathcal{S}_j}^{(1)}(l + 1)$ in (\ref{eq_updatelaw}d) and $\textbf{y}_{\mathcal{S}_i\mathcal{S}_j}^{(2)}(l + 1)$ in (\ref{eq_updatelaw}f).
\begin{equation}\label{eq_temp3OptimizationProblem}
\min\limits_{\textbf{y}_{\mathcal{S}_i} \in \Omega_{\mathcal{S}_i}} \left\{\frac{\rho}{2}\left|\left|\textbf{D}_{\mathcal{S}_i}\textbf{x}_{\mathcal{S}_i}(l+1) - \textbf{d}_{\mathcal{S}_i} - \textbf{y}_{\mathcal{S}_i} - \frac{1}{\rho}\boldsymbol{\lambda}_{\mathcal{S}_i}(l)\right|\right|_2^2\right\}
\end{equation}
Let $\textbf{y}_{\mathcal{S}_i}^*$ be the optimal solution and $\hat{\psi}(\textbf{y}_{\mathcal{S}_i})$ be the cost function of the problem \eqref{eq_temp3OptimizationProblem}.
From the optimality condition, we have $\textbf{y}_{\mathcal{S}_i}^* \in \Omega_{\mathcal{S}_i}$ and $\left(\nabla \hat{\psi}(\textbf{y}_{\mathcal{S}_i}^*)\right)^T\left( \textbf{y}_{\mathcal{S}_i} - \textbf{y}_{\mathcal{S}_i}^* \right) \ge 0, $ for all $\textbf{y}_{\mathcal{S}_i} \in \Omega_{\mathcal{S}_i}$. 
Since $\nabla \hat{\psi}(\textbf{y}_{\mathcal{S}_i}^*) = \rho\left(\textbf{y}_{\mathcal{S}_i}^* - \textbf{D}_{\mathcal{S}_i}\textbf{x}_{\mathcal{S}_i}(l+1) + \textbf{d}_{\mathcal{S}_i} \right) - \boldsymbol{\lambda}_{\mathcal{S}_i}(l)$, the optimality condition of the problem \eqref{eq_temp3OptimizationProblem} becomes
\[\left( \textbf{y}_{\mathcal{S}_i}^* - \left(\textbf{D}_{\mathcal{S}_i}\textbf{x}_{\mathcal{S}_i}(l+1) - \textbf{d}_{\mathcal{S}_i} - \frac{1}{\rho} \boldsymbol{\lambda}_{\mathcal{S}_i}(l)\right) \right)^T \left( \textbf{y}_{\mathcal{S}_i} - \textbf{y}_{\mathcal{S}_i}^* \right) \ge 0, \]
for all $\textbf{y}_{\mathcal{S}_i} \in \Omega_{\mathcal{S}_i}$.
In addition, $\Omega_{\mathcal{S}_i}$ is a convex set because it is a Cartersian product of a non-positive orthant and some unit second-order cones.
According to Theorem \ref{th_proj}, we have
\[\textbf{y}_{\mathcal{S}_i}^* = Prj_{\Omega_{\mathcal{S}_i}}\left(\textbf{D}_{\mathcal{S}_i}\textbf{x}_{\mathcal{S}_i}(l+1) - \textbf{d}_{\mathcal{S}_i} - \frac{1}{\rho} \boldsymbol{\lambda}_{\mathcal{S}_i}(l)\right).\]
This is the update (\ref{eq_updatelaw}b) for the variable $\textbf{y}_{\mathcal{S}_i}$.
Matching the equality constraints (\ref{eq_SMPC_ADMMconsensus}d), (\ref{eq_SMPC_ADMMconsensus}e) and (\ref{eq_SMPC_ADMMconsensus}f) with the form of (\ref{eq_ADMMproblem}b), we have the ADMM iteration (\ref{eq_ADMM}c) for the update of dual variables is equivalent to the equations (\ref{eq_updatelaw}c), (\ref{eq_updatelaw}e) and (\ref{eq_updatelaw}g) for all $\mathcal{S}_i, i = 1, \dots, N$.
Thus the update law \eqref{eq_updatelaw} coincides with ADMM iterations \eqref{eq_ADMM} for the problem \eqref{eq_SMPC_ADMMconsensus}.
The asymptotic convergence of ADMM iterations to the optimal solution of the problem \eqref{eq_SMPC_ADMMconsensus} is guaranteed by Theorem \ref{th_ADMM}.
Since the problem \eqref{eq_SMPC_ADMMconsensus} is equivalent to the distributed stochastic MPC traffic signal control problem \eqref{eq_distributedSMPC}, their optimal solutions are the same.
\end{proof}

\begin{algorithm}[htb]
\begin{algorithmic}[1]
\BState \emph{Collect traffic data and update its information in \eqref{eq_SMPC_ADMMconsensus}:} $\textbf{H}_{\mathcal{S}_i}, \textbf{h}_{\mathcal{S}_i}, \textbf{M}_{\mathcal{S}_i}, \textbf{m}_{\mathcal{S}_i}, \textbf{D}_{\mathcal{S}_i}, \textbf{d}_{\mathcal{S}_i}, \textbf{P}_{\mathcal{S}_i\mathcal{S}_j}, \textbf{Q}_{\mathcal{S}_i\mathcal{S}_j}$
\BState \emph{Coordinate with neighbors to solve \eqref{eq_SMPC_ADMMconsensus}:}
\State $\quad$\textbf{Initialization} $l \gets 0$.
\State $\quad$ Choose arbitrarily $\textbf{y}_{\mathcal{S}_i}(0) \in \Omega_{\mathcal{S}_i}$, $\boldsymbol{\lambda}_{\mathcal{S}_i}(0)$ and $\textbf{y}_{\mathcal{S}_i\mathcal{S}_j}^{(1)}(0)$, $\boldsymbol{\lambda}_{\mathcal{S}_i\mathcal{S}_j}^{(1)}(0)$, $\textbf{y}_{\mathcal{S}_i\mathcal{S}_j}^{(2)}(0)$, $\boldsymbol{\lambda}_{\mathcal{S}_i\mathcal{S}_j}^{(2)}(0)$, $\forall \mathcal{S}_j \in \mathcal{N}_{\mathcal{S}_i}$.
\State $\quad$\textbf{repeat:}
\State $\qquad$ Run the update $\left(\textbf{x}_{\mathcal{S}_i}, \textbf{y}_{\mathcal{S}_i}, \boldsymbol{\lambda}_{\mathcal{S}_i}, \textbf{y}_{\mathcal{S}_i\mathcal{S}_j}^{(1)}, \boldsymbol{\lambda}_{\mathcal{S}_i\mathcal{S}_j}^{(1)}, \textbf{y}_{\mathcal{S}_i\mathcal{S}_j}^{(2)}, \boldsymbol{\lambda}_{\mathcal{S}_i\mathcal{S}_j}^{(2)} \right)(l + 1) \gets \eqref{eq_updatelaw}$
\State $\qquad$ Check the termination condition \eqref{eq_terminated} and set $fl_{\mathcal{S}_i}$.
\State $\qquad$ Run $N$ steps of the min-consensus \eqref{eq_minconsensus}.
\State $\qquad$ If $fl_{\mathcal{S}_i}(N) = 1$ then set $l^{final} \gets l$ and \textbf{Stop} else set $l \gets l+1$ and continue \textbf{repeat}
\BState \emph{Finish algorithm } $\textbf{x}_{\mathcal{S}_i}^{opt} \gets \textbf{x}_{\mathcal{S}_i}(l^{final})$
\end{algorithmic}
\caption{Distributed method for an agent $\mathcal{S}_i$ to find $\textbf{x}_{\mathcal{S}_i}^{opt}$.}\label{alg_proposed_control}
\end{algorithm}
Since the convergence of ADMM iterations is asymptotic, we need a stopping criteria in real-time application.
In \cite{StephenBoyd2011}, the authors suggested reasonable termination criterions as $||\textbf{A}\textbf{x}(l) + \textbf{B}\textbf{y}(l) - \textbf{c}||_2 \le tol_p$ and/or $\rho ||\textbf{A}^T\textbf{B}\left(\textbf{y}(l+1) - \textbf{y}(l)\right)||_2 \le tol_d$ where $tol_p$ and $tol_d$ are given small positive tolerances.
In this paper, we use the maximum norm instead of the Euclidean norm for an implementation in distributed manner.
It should be noted that $||\textbf{u}||_{\infty} \le ||\textbf{u}||_2 \le \sqrt{n}||\textbf{u}||_{\infty}$ for any vector $\textbf{u} \in \mathbb{R}^n$.
The update \eqref{eq_updatelaw} is terminated when the following condition \eqref{eq_terminated} is satisfied for every agent $\mathcal{S}_i$.
\begin{subequations}\label{eq_terminated}
\begin{gather}
||\textbf{D}_{\mathcal{S}_i}\textbf{x}_{\mathcal{S}_i}(l) - \textbf{y}_{\mathcal{S}_i}(l) - \textbf{d}_{\mathcal{S}_i}||_{\infty} \le tol,\\
||\textbf{P}_{\mathcal{S}_i\mathcal{S}_j}\textbf{x}_{\mathcal{S}_i}(l) - \textbf{y}_{\mathcal{S}_i\mathcal{S}_j}^{(1)}(l)||_{\infty} \le tol, \forall \mathcal{S}_j \in \mathcal{N}_{\mathcal{S}_i},\\
||\textbf{Q}_{\mathcal{S}_i\mathcal{S}_j}\textbf{x}_{\mathcal{S}_i}(l) - \textbf{y}_{\mathcal{S}_i\mathcal{S}_j}^{(2)}(l)||_{\infty} \le tol, \forall \mathcal{S}_j \in \mathcal{N}_{\mathcal{S}_i}.
\end{gather}
\end{subequations}
To check this condition, each agent $\mathcal{S}_i$ can use a flag $fl_{\mathcal{S}_i}$ and run the following min-consensus law:
\begin{equation}\label{eq_minconsensus}
fl_{\mathcal{S}_i}(\varsigma+1) = \min\left\{fl_{\mathcal{S}_j}(\varsigma): \mathcal{S}_j \in \mathcal{N}_{\mathcal{S}_i} \cup \{\mathcal{S}_i\}\right\}
\end{equation}
where $fl_{\mathcal{S}_i}(0) = 1$ if \eqref{eq_terminated} is satisfied and $fl_{\mathcal{S}_i}(0) = 0$ otherwise.
It is well-known that the min-consensus \eqref{eq_minconsensus} will converge to the minimum of initial values after a finite steps, which is smaller than $N$.
In other words, $fl_{\mathcal{S}_i}(N) = \min\{fl_{\mathcal{S}_1}(0), fl_{\mathcal{S}_2}(0), \cdots, fl_{\mathcal{S}_N}(0) \}$ for all $i = 1, \dots, N$.
To conclude this subsection, we provide Algorithm \ref{alg_proposed_control} as distributed method for each agent $\mathcal{S}_i$ to find the optimal solution of the stochastic MPC traffic signal control problem \eqref{eq_SMPCProblem_detailedform}.
\begin{Remark}
In the iteration \eqref{eq_updatelaw}, agent $\mathcal{S}_i$ uses only its own information for the update of control variables (i.e., the update of $\textbf{x}_{\mathcal{S}_i}$ in (\ref{eq_updatelaw}a)), the update of variables corresponding to local inequality constraints (i.e., the update of $\textbf{y}_{\mathcal{S}_i}$ in (\ref{eq_updatelaw}b)) and the update of dual variables corresponding to equality constraints (i.e., the update of $\boldsymbol{\lambda}_{\mathcal{S}_i}, \boldsymbol{\lambda}_{\mathcal{S}_i\mathcal{S}_j}^{(1)}$ and $\boldsymbol{\lambda}_{\mathcal{S}_i\mathcal{S}_j}^{(2)}$ in (\ref{eq_updatelaw}c), (\ref{eq_updatelaw}e) and (\ref{eq_updatelaw}g)).
For the variables corresponding to the coupled constraints (i.e., $\textbf{y}_{\mathcal{S}_i\mathcal{S}_j}^{(1)}(l)$ and $\textbf{y}_{\mathcal{S}_i\mathcal{S}_j}^{(2)}(l)$ in (\ref{eq_updatelaw}d) and (\ref{eq_updatelaw}f)), where $\mathcal{S}_j \in \mathcal{N}_{\mathcal{S}_i}$), the updates depend on the information of the agent $\mathcal{S}_i$ and its neighboring agents.
So, the communication between the agent $\mathcal{S}_i$ and its neighboring agent $\mathcal{S}_j \in \mathcal{N}_{\mathcal{S}_i}$ corresponding to the iteration $l+1$ is expressed as follows: it sends the vectors $\textbf{P}_{\mathcal{S}_i\mathcal{S}_j}\textbf{x}_{\mathcal{S}_i}(l+1) - \frac{1}{\rho}\boldsymbol{\lambda}_{\mathcal{S}_i\mathcal{S}_j}^{(1)}(l)$ and $\textbf{Q}_{\mathcal{S}_i\mathcal{S}_j}\textbf{x}_{\mathcal{S}_i}(l+1) - \frac{1}{\rho}\boldsymbol{\lambda}_{\mathcal{S}_i\mathcal{S}_j}^{(2)}(l)$ to $\mathcal{S}_j$ and receives the vectors $\textbf{P}_{\mathcal{S}_j\mathcal{S}_i}\textbf{x}_{\mathcal{S}_j}(l+1) - \frac{1}{\rho}\boldsymbol{\lambda}_{\mathcal{S}_j\mathcal{S}_i}^{(1)}(l)$ and $\textbf{Q}_{\mathcal{S}_j\mathcal{S}_i}\textbf{x}_{\mathcal{S}_j}(l+1) - \frac{1}{\rho}\boldsymbol{\lambda}_{\mathcal{S}_j\mathcal{S}_i}^{(2)}(l)$ from $\mathcal{S}_j$.
In addition, the exchange of the flags to check the termination condition \eqref{eq_terminated}, i.e., $fl_{\mathcal{S}_i}, i = 1, \dots, N,$ is also required among neighboring agents.
Thus, Algorithm \ref{alg_proposed_control} is fully distributed.
\end{Remark}
\subsection{Strategy for optimal traffic signal control}
The main purpose of this paper is to provide a distributed stochastic MPC traffic signal control and coordination method for urban networks.
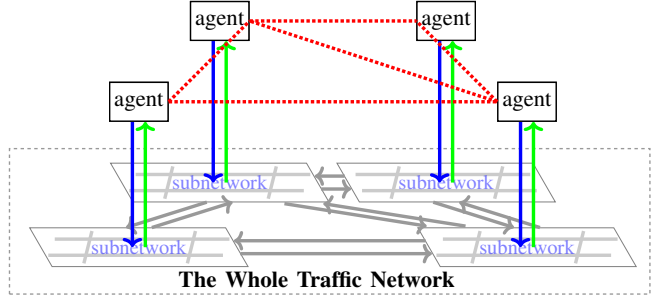
\begin{figure}[htb]
\begin{center}
\centering
\scalebox{0.43}{\begin{tikzpicture}[
textnode/.style={rectangle, font=\fontsize{20}{20}\selectfont},
squarenode/.style={rectangle, draw=black, fill=white,  very thick, minimum size=12mm, font=\fontsize{20}{20}\selectfont},
squarenode1/.style={rectangle, draw=black!40, dashed, fill=white,  very thick, minimum width=198mm, minimum height=45mm},
squarenode2/.style={rectangle, draw=black, fill=white,  very thick, dotted, minimum width=100mm, minimum height=20mm},
posnode/.style={rectangle},
]
\node at (3.9,0.8) [squarenode1] (nNetwork) {};
\node at (-2,4.5) [squarenode] (nLocal1) {agent};
\node at (0.5,7) [squarenode] (nLocal3) {agent};
\node at (10,4.5) [squarenode] (nLocal2) {agent};
\node at (7.5,7) [squarenode] (nLocal4) {agent};

\draw[-,{line width=3pt},black!40, transform canvas={xshift=-20mm}] (-1.7,-0.5)--(-1.45,0.5);
\draw[-,{line width=3pt},black!40, transform canvas={xshift=-20mm}] (1.5,-0.5)--(1.75,0.5);
\draw[-,{line width=3pt},black!40, transform canvas={xshift=-20mm}] (-2.8,-0.25)--(2.6,-0.25);
\draw[-,{line width=3pt},black!40, transform canvas={xshift=-20mm}] (-2.6,0.25)--(2.8,0.25);
\draw[-,{line width=3pt},black!40, transform canvas={xshift=100mm}] (-1.7,-0.5)--(-1.45,0.5);
\draw[-,{line width=3pt},black!40, transform canvas={xshift=100mm}] (1.5,-0.5)--(1.75,0.5);
\draw[-,{line width=3pt},black!40, transform canvas={xshift=100mm}] (-2.8,-0.25)--(2.6,-0.25);
\draw[-,{line width=3pt},black!40, transform canvas={xshift=100mm}] (-2.6,0.25)--(2.8,0.25);
\draw[-,{line width=3pt},black!40, transform canvas={xshift=5mm,yshift=20mm}] (-1.7,-0.5)--(-1.45,0.5);
\draw[-,{line width=3pt},black!40, transform canvas={xshift=5mm,yshift=20mm}] (1.5,-0.5)--(1.75,0.5);
\draw[-,{line width=3pt},black!40, transform canvas={xshift=5mm,yshift=20mm}] (-2.8,-0.25)--(2.6,-0.25);
\draw[-,{line width=3pt},black!40, transform canvas={xshift=5mm,yshift=20mm}] (-2.6,0.25)--(2.8,0.25);
\draw[-,{line width=3pt},black!40, transform canvas={xshift=75mm,yshift=20mm}] (-1.7,-0.5)--(-1.45,0.5);
\draw[-,{line width=3pt},black!40, transform canvas={xshift=75mm,yshift=20mm}] (1.5,-0.5)--(1.75,0.5);
\draw[-,{line width=3pt},black!40, transform canvas={xshift=75mm,yshift=20mm}] (-2.6,-0.25)--(2.8,-0.25);
\draw[-,{line width=3pt},black!40, transform canvas={xshift=75mm,yshift=20mm}] (-2.8,0.25)--(2.6,0.25);

\node[trapezium, draw,trapezium left angle=120, trapezium right angle=60, fill=white, opacity=0.5, minimum size=12mm, text=blue,text opacity=0.5,font=\fontsize{18}{18}\selectfont] at (-2,0) (nSub1) {subnetwork};
\node[trapezium, draw,trapezium left angle=120, trapezium right angle=60, fill=white, opacity=0.5, minimum size=12mm, text=blue,text opacity=0.5,font=\fontsize{18}{18}\selectfont] at (10,0) (nSub2) {subnetwork};
\node[trapezium, draw,trapezium left angle=120, trapezium right angle=60, fill=white, opacity=0.5, minimum size=12mm, text=blue,text opacity=0.5,font=\fontsize{18}{18}\selectfont] at (0.5,2) (nSub3) {subnetwork};
\node[trapezium, draw,trapezium left angle=120, trapezium right angle=60, fill=white, opacity=0.5, minimum size=12mm, text=blue,text opacity=0.5,font=\fontsize{18}{18}\selectfont] at (7.5,2) (nSub4) {subnetwork};
\node at (3.5,-1) [textnode] (nNetwork) {\textbf{The Whole Traffic Network}};

\node at (3.1,1.2) [posnode] (ps1) {};
\node at (7.5,0.8) [posnode] (ps2) {};

\draw[->,{line width=3pt},black!40, transform canvas={yshift=-2mm,xshift=1mm}] (nSub1.east)->(nSub2.west);
\draw[<-,{line width=3pt},black!40, transform canvas={yshift=2mm,xshift=-1mm}] (nSub1.east)->(nSub2.west);

\draw[->,{line width=3pt},black!40, transform canvas={yshift=-2mm,xshift=1mm}] (nSub3.east)->(nSub4.west);
\draw[<-,{line width=3pt},black!40, transform canvas={yshift=2mm,xshift=-1mm}] (nSub3.east)->(nSub4.west);

\draw[<-,{line width=3pt},black!40, transform canvas={xshift=6mm}] (ps1.north)->(ps2.south);
\draw[->,{line width=3pt},black!40, transform canvas={xshift=-6mm}] (ps1.north)->(ps2.south);

\draw[->,{line width=3pt},black!40, transform canvas={xshift=4mm}] (nSub1.north)->(nSub3.south);
\draw[<-,{line width=3pt},black!40, transform canvas={xshift=-4mm}] (nSub1.north)->(nSub3.south);

\draw[->,{line width=3pt},black!40, transform canvas={xshift=4mm}] (nSub2.north)->(nSub4.south);
\draw[<-,{line width=3pt},black!40, transform canvas={xshift=-4mm}] (nSub2.north)->(nSub4.south);

\draw[->,{line width=3pt},blue, transform canvas={xshift=-2mm}] (nLocal1.south)--(nSub1.center);
\draw[<-,{line width=3pt},green, transform canvas={xshift=2mm}] (nLocal1.south)--(nSub1.center);

\draw[->,{line width=3pt},blue, transform canvas={xshift=-2mm}] (nLocal2.south)--(nSub2.center);
\draw[<-,{line width=3pt},green, transform canvas={xshift=2mm}] (nLocal2.south)--(nSub2.center);

\draw[->,{line width=3pt},blue, transform canvas={xshift=-2mm}] (nLocal3.south)--(nSub3.center);
\draw[<-,{line width=3pt},green, transform canvas={xshift=2mm}] (nLocal3.south)--(nSub3.center);

\draw[->,{line width=3pt},blue, transform canvas={xshift=-2mm}] (nLocal4.south)--(nSub4.center);
\draw[<-,{line width=3pt},green, transform canvas={xshift=2mm}] (nLocal4.south)--(nSub4.center);

\node at (-2,4.5) [squarenode] (nAgent1) {agent};
\node at (0.5,7) [squarenode] (nAgent3) {agent};
\node at (10,4.5) [squarenode] (nAgent2) {agent};
\node at (7.5,7) [squarenode] (nAgent4) {agent};
\draw[-,{line width=3pt},red, dotted] (nAgent1.east)--(nAgent2.west);
\draw[-,{line width=3pt},red, dotted] (nAgent3.east)--(nAgent4.west);
\draw[-,{line width=3pt},red, dotted] (nAgent1.east)--(nAgent3.east);
\draw[-,{line width=3pt},red, dotted] (nAgent2.west)--(nAgent4.west);
\draw[-,{line width=3pt},red, dotted] (nAgent2.west)--(nAgent3.east);

\end{tikzpicture}}
\end{center}
\caption{The schematic diagram of distributed stochastic MPC traffic signal control method.}
\label{fig_coordination_graph}
\end{figure}
Every agent $\mathcal{S}_i$ is required to repeat the following steps in each control time step $t$.
\begin{enumerate}
\item Using sensors and appropriate estimation method, agent $\mathcal{S}_i$ measures and estimates the current traffic states $n_z(t)$ for all road links $z: \sigma(z) \in \mathcal{J}_{\mathcal{S}_i}$, the uncertain traffic model parameters in the set $\mathcal{RP}_{\mathcal{S}_i}(t)$.
\item Formulate and solve the distributed stochastic MPC traffic signal control problem \eqref{eq_SMPC_ADMMconsensus} by applying Algorithm \ref{alg_proposed_control}.
\item Apply the optimal control decisions $g_z^{opt}(t) = \frac{1}{S_z}\hat{q}_{z,0}^{opt}, \forall z \in \mathcal{L}_{J_v}^{in}$ and $g_p^{opt}(t) = \hat{g}_{p,0}^{opt}, \forall p \in \mathcal{P}_{J_v}$ for every internal junction $J_v \in \mathcal{J}_{\mathcal{S}_i}$.
Then wait until the next cycle. 
\end{enumerate}
Fig. \ref{fig_coordination_graph} illustrates the scheme of our proposed traffic signal control method for an urban network.
The red lines indicate communication links among agents, the blue arrows indicate signal control transferring from local controllers to traffic signal lights, the green arrows indicate traffic states and model parameters transferring from sensors to local controllers, and the gray arrows indicate the traffic flows among subnetworks.

\section{Simulation}
In this section, we make some simulations using VISSIM\footnote{available at www.vissim.de.} and MATLAB to test the effectiveness of our proposed method in controlling an urban network and the computational capacities of the update iteration \eqref{eq_updatelaw} in solving the stochastic program \eqref{eq_distributedSMPC}.
VISSIM is one of the most widely-used microscopic traffic simulation software since it can describe the operation of traffic signals and the movement of vehicles similar as in a realistic environment.
MATLAB is used to execute some traffic signal control methods in order to find the planned traffic signal splits.
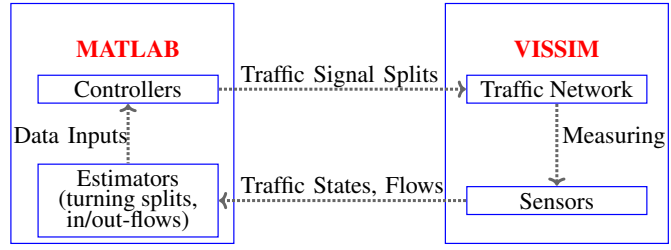
\begin{figure}[htb]
\centering
\scalebox{0.38}{\begin{tikzpicture}[
textnode1/.style={rectangle},
textnode2/.style={rectangle,red, font=\bf},
textlabel/.style={rectangle, scale=2.0},
squarenode1/.style={rectangle, draw=blue, fill=white,  very thick, text width = 5em, text centered, minimum width = 6mm, minimum height = 6mm, node distance = 5em},
squarenode2/.style={rectangle, scale=1.4, draw=blue, very thick, text width = 10em, text centered, minimum height = 6mm, node distance = 5em},
squarenode3/.style={rectangle, scale=1.2, draw=blue, very thick, text width = 10em, text centered, minimum width = 65mm, minimum height = 70mm, node distance = 5em},
]
\node at (1.5,6.2) [squarenode2] (nMPC) {\Large Controllers};
\node at (1.5,2.3) [squarenode2] (nEstimator) {\Large Estimators 

 (turning splits, in/out-flows)};
\node at (1.3,5.0) [squarenode3] (nRec1) {};
\node at (16.5,6.2) [squarenode2] (nModel) {\Large Traffic Network};
\node at (16.5,2.3) [squarenode2] (nSensor) {\Large Sensors};
\node at (16.5,5.0) [squarenode3] (nRec2) {};
\draw[<-,{line width=3pt},black!60, dotted, transform canvas={yshift=0mm}] (nMPC.south)->(nEstimator.north);
\draw[->,{line width=3pt},black!60, dotted, transform canvas={yshift=0mm}] (nModel.south)->(nSensor.north);
\draw[->,{line width=3pt},black!60, dotted, transform canvas={yshift=0mm}] (nMPC.east)->(nModel.west);
\draw[<-,{line width=3pt},black!60, dotted, transform canvas={yshift=0mm}] (nEstimator.east)->(nSensor.west);
\node at (8.9,6.65) [textlabel] (t1) {Traffic Signal Splits};
\node at (9,2.8) [textlabel] (t2) {Traffic States, Flows};
\node at (-0.5,4.5) [textlabel] (t3) {Data Inputs};
\node at (18.5,4.5) [textlabel] (t4) {Measuring};
\node at (1.5,7.6) [textnode2] (t5) {\Huge MATLAB};
\node at (16.5,7.6) [textnode2] (t6) {\Huge VISSIM};
\end{tikzpicture}}
\caption{Information exchange between VISSIM and MATALB.}\label{fig_diagramVM}
\end{figure}
Fig. \ref{fig_diagramVM} illustrates the simulation environment.
The measurements are transferred from VISSIM to MATLAB since they are required in some control methods.
The computed traffic signal splits are transferred from MATLAB to VISSIM.
\subsection{Simulation setup}
The urban network tested in this paper is presented in Fig. \ref{fig_testednetwork}.
\begin{figure}[htb]
\begin{minipage}{.4\linewidth}
\centering
\includegraphics[width=0.68\textwidth]{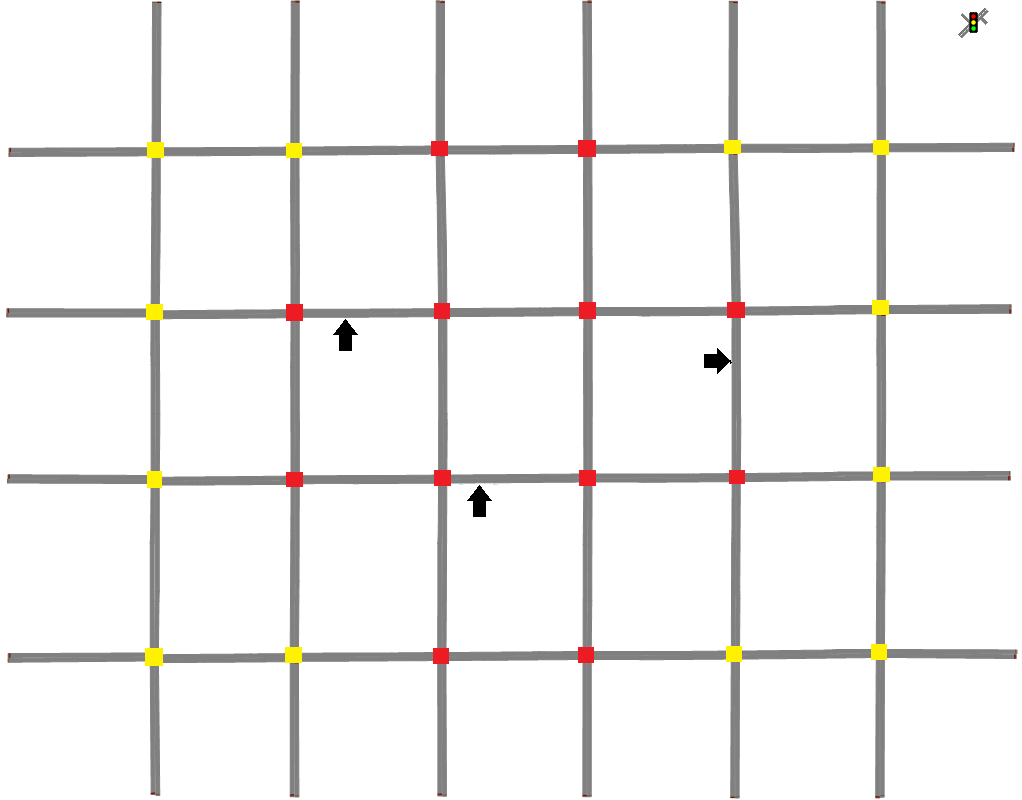}
\caption{The tested traffic network consists of $24$ junctions and $116$ roads (Image is captured from VISSIM).}\label{fig_testednetwork}
\end{minipage} 
\begin{minipage}{.55\linewidth}
\centering
\captionof{table}{Nominal traffic demand in Scenario 1.}\label{tb_demand}
\scalebox{0.6}{
\begin{tabular}{ccccccc}
 \hline
 $\Delta t$ (min.)  & 1 - 20 & 21 - 40 & 41 - 60 & 61 - 80 & 81 - 100 & 101 - 120 \\
 \hline
Traffic demand & \multirow{2}{*}{$1000$} & \multirow{2}{*}{$1250$} & \multirow{2}{*}{$1150$} & \multirow{2}{*}{$1000$} & \multirow{2}{*}{$900$} & \multirow{2}{*}{$800$} \\
$d_B^*(\Delta t)$ (veh./hour) &&&&&&\\
 \hline
\end{tabular}}
\captionof{table}{Nominal turning ratios.}\label{tb_ratios}
\scalebox{0.8}{
\begin{tabular}{c|c|cccc}
 \hline
Road & $(u + v)$ mod $4$  & 0 & 1 & 2 & 3 \\
\hline
\multirow{3}{*}{$u \rightarrow v$} & turning left & 0.2 & 0.15 & 0.15 & 0.2 \\
& turning right & 0.2 & 0.15 & 0.2 & 0.15 \\
& going straight & 0.6 & 0.7 & 0.65 & 0.65 \\
 \hline
\end{tabular}}
\end{minipage} 
\end{figure}
It consists of $24$ internal junctions, which are marked by red/yellow squares, and $116$ roads. The length of roads are in the range of $[280, 400]$ meters. For the junctions corresponding to red (resp. yellow) squares, the sequences of their traffic signal phases are set as Type 3 (resp. Type 2) in TABLE. \ref{tbl_sequencePhases} and their incoming roads have $5$ (resp. 3) lanes. We assume that only one lane is reserved for turning left in every incoming roads of junctions with Type 3. 
The saturation flows of roads are set by $0.55$ (veh/s/lane).
We set the control time length as $T = 60$ seconds and the lost time as $L = 4$ seconds for all junctions.
For all destination roads, we assume that they have three lanes and their downstream traffic flows are restricted by an upper bound of $\overline{q_{out}} = 20$ (veh.) for all control time steps.

Let the indexes of internal junctions be in the increasing-order from left to right and from top to bottom. In addition, we set all source nodes to have the same index $0$.
The turning ratios and the difference of exogenous in/out-flows corresponding to the road from the junction $u$ to the junction $v$, where $u, v \in \{0,1, \dots, 24\}$, in $\Delta t$-th minute are described by the following functions \eqref{eq_UncertainTurningRatios} and \eqref{eq_UncertainExogenousFlows}.
\begin{subequations}\label{eq_UncertainTurningRatios}
\begin{gather}
r_{uv}^{left}(\Delta t) = r_{uv,B}^{left} + \delta_{uv,r}^{left}(\Delta t)\\
r_{uv}^{right}(\Delta t) = r_{uv,B}^{right} + \delta_{uv,r}^{right}(\Delta t)\\
r_{uv}^{straight}(\Delta t) = 1 - r_{uv}^{left}(\Delta t) - r_{uv}^{right}(\Delta t)
\end{gather}
\end{subequations}
\begin{equation}\label{eq_UncertainExogenousFlows}
e_{uv}(t) =  (d_{uv,B}(t) + \delta_{uv}^d(t)) - (s_{uv,B}(t) + \delta_{uv}^s(t))
\end{equation}
where $r_{uv,B}^{left}$ and $r_{uv,B}^{right}$ are the nominal values of the ratios for turning left and turning right movements; $d_{uv,B}(t)$ is the nominal value of the traffic demand; $s_{uv,B}(t)$ is the nominal value of the exit flow; $\delta_{uv,r}^{left}(t)$ and $\delta_{uv,r}^{right}(t)$ are random numbers in the range $[-0.1,0.1]$; $\delta_{uv,d}(t)$ is random number in the range $[-350, 350]$ if $u = 0$; $\delta_{uv,d}(t)$ for $u \neq 0$ and $\delta_{uv,s}(t)$ are random numbers in $[-50, 50]$.
Note that the real values of traffic model parameters are given in \eqref{eq_UncertainTurningRatios} and \eqref{eq_UncertainExogenousFlows}, while traffic signal controllers estimate only their nominal values (also expected values) and variances.
The total simulation time is $2$ hours corresponding to $N_C = 120$ control time steps.

In this paper, three scenarios of traffic demands (i.e., the exogenous inflows) are considered and they are referred as low-level, medium-level and high-level ones, respectively.
In Scenario 1, we assume that the nominal exogenous inflows of all the source roads that have $3$ lanes are equal to $d_B^*(\Delta t)$ given in TABLE \ref{tb_demand}. For the source roads having $5$ lanes, we set their nominal exogenous inflows as $1.25 d_B^*(\Delta t)$.
The traffic demands in Scenario 2 and Scenario 3 are $25$ and $40$ percent larger than the one in Scenario 1. 
The black arrows in Fig. \ref{fig_testednetwork} represent the places from which the vehicles entering into the network with the flow rate $d_{in}$. We set $d_{in} = 0$ in Scenario 1 and $d_{in}$ as a random number in the ranges $[500,600]$ and $[900,1100]$ in Scenario 2 and 3, respectively.
The nominal turning ratios of the road from junction $u$ to junction $v$ are determined by one row in TABLE \ref{tb_ratios} where the index of the corresponding row is the remainder of $u + v$ dividing $4$.
\subsection{Results of microscopic simulation}
In this part, we compare the simulation results of four control methods.
\begin{enumerate}
\item Pretimed control method: the splits of traffic signal phases are set proportionally to the average traffic flows of their corresponding road links. Four cycle times ($40, 60, 80, 100$ seconds) are tested and the one with the best performance is used to compare with other control methods in each scenario.
\item The back-pressure (BP)-based control: the method is proposed in \cite{JeanGregoire2015} when the control time interval is chosen as $10$ seconds.
\item Nominal MPC traffic signal control method: the applied traffic signal splits are corresponding to the optimal solution of the nominal MPC traffic signal control problem \eqref{eq_problemTP}.
They can be obtained by our proposed method when assuming all variances of traffic model parameters are zero.
In this case, the set $\Omega_{\mathcal{S}_i}$ of local constraints in \eqref{eq_localinequalities_subproblem} consists of only linear inequalities for every subproblem $\mathcal{S}_i$, $i = 1, \dots, N$.
\item Stochastic MPC traffic signal control method: the traffic signal splits are obtained by applying our proposed method to solve the stochastic MPC traffic signal control problem \eqref{eq_SMPCProblem} with $\epsilon_t = 0.2$.
This is our proposed control method.
\end{enumerate}
The following cost function is used in MPC controllers.
\begin{equation}
\Phi(t) = \sum\limits_{k = 1}^{3}\sum\limits_{z \in \mathcal{L}} \scalebox{0.85}{$\left( \frac{\left(n_z(t+k)\right)^2}{\overline{n_z}} + \beta_{z,k} n_z(t+k) - \gamma_{z,k}q_z(t+k) \right)$}
\end{equation}
where $\beta_{z,k} = \gamma_{z,k} = 0.3$ for all $k = 0, \dots, K-1, z \in \mathcal{L}$. We choose $K = 3$ for the horizontal time.
\begin{figure*}[htb]
\begin{center}
\captionof{table}{Simulation results of four different control strategies.}
\label{tb_vb}
\scalebox{0.68}{
\begin{tabular}{c|c|c|c|c|c|c|c|c|c|c|c|c}
\hline
\multirow{2}{*}{} & \multicolumn{4}{c|}{Scenario 1} & \multicolumn{4}{c|}{Scenario 2} & \multicolumn{4}{c}{Scenario 3}\\
\cline{2-13}
 & \multirow{2}{*}{Pretimed} & \multirow{2}{*}{BP-based} & Nominal & Stochastic & \multirow{2}{*}{Pretimed} & \multirow{2}{*}{BP-based} & Nominal & Stochastic & \multirow{2}{*}{Pretimed} & \multirow{2}{*}{BP-based} & Nominal & Stochastic\\
&  &  & MPC & MPC &  &  & MPC & MPC &  &  & MPC & MPC\\
\hline
Criterion 1 & $56055$ & $56177$ & $56249$ & $56256$ & $64216$ & $62442$ & $66530$ & $66577$ & $66028$ & $57100$ & $70136$ & $70553$\\
(veh.) & $\cdot$ & $+0.12\%$ & $+0.35\%$ & $+0.36\%$ & $\cdot$ & $-2.76\%$ & $+3.6\%$ & $+3.67\%$ & $\cdot$ & $-13.5\%$ & $+6.22\%$ & $+6.85\%$\\
 \hline
Criterion 2 & $53975$ & $54897$ & $54832$ & $54763$ & $58535$ & $55081$ & $61463$ & $61595$ & $58633$ & $51409$ & $60408$ & $61659$\\
(veh.) & $\cdot$ & $+1.70\%$ & $+1.59\%$ & $+1.46\%$ & $\cdot$ & $-5.9\%$ & $+5.00\%$ & $+5.22\%$ & $\cdot$ & $-12.32\%$ & $+3.03\%$ & $+5.16\%$\\
 \hline
Criterion 3 & $2.528$ & $2.549$ & $2.532$ & $2.535$ & $2.801$ & $2.448$ & $2.920$ & $2.932$ & $2.672$ & $2.087$ & $2.936$ & $2.995$\\
$\times 10^5$ (veh.) & $\cdot$ & $+0.83\%$ & $+0.16\%$ & $+0.28\%$ & $\cdot$ & $-12.60\%$ & $+4.25\%$ & $+4.68\%$ & $\cdot$ & $-21.89\%$ & $+9.88\%$ & $+12.08\%$\\
 \hline
Criterion 4 & $352.4$ & $318.3$ & $337.1$ & $323.6$ & $804.4$ & $991.5$ & $629.7$ & $617.9$ & $1085.6$ & $1245.6$ & $922.5$ & $888.5$\\
(seconds) & $\cdot$ & $-9.68\%$ & $-4.34\%$ & $-8.17\%$ & $\cdot$ & $+23.26\%$ & $-21.72\%$ & $-23.19\%$ & $\cdot$ & $+14.74\%$ & $-15.02\%$ & $-18.16\%$\\
 \hline
\end{tabular}}
\end{center}
\end{figure*}
\begin{figure}[htb]
\centering
\includegraphics[width=0.24\textwidth]{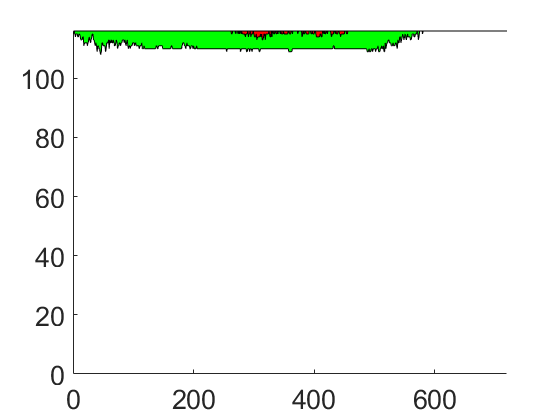}
\includegraphics[width=0.24\textwidth]{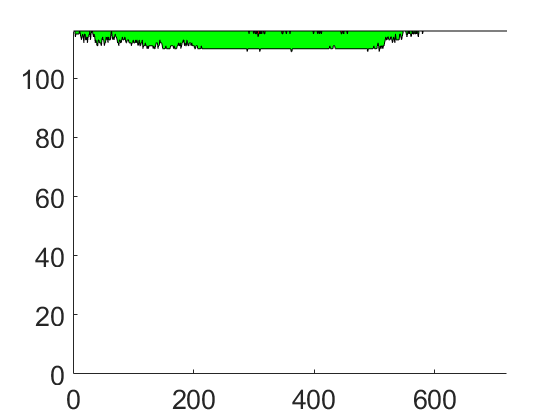}
\includegraphics[width=0.24\textwidth]{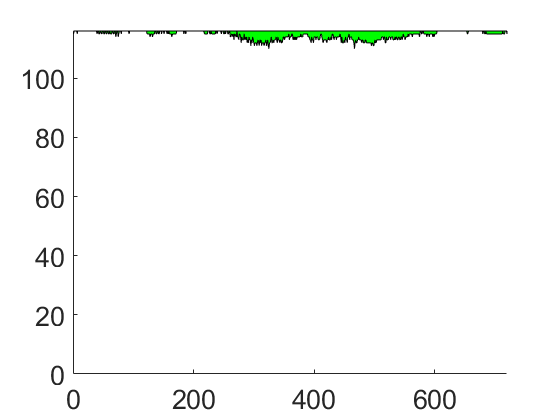}
\includegraphics[width=0.24\textwidth]{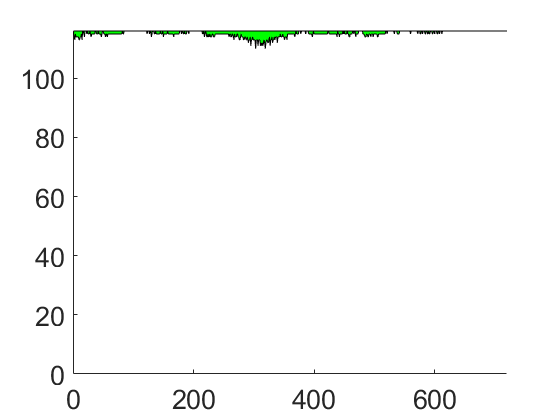}

\includegraphics[width=0.24\textwidth]{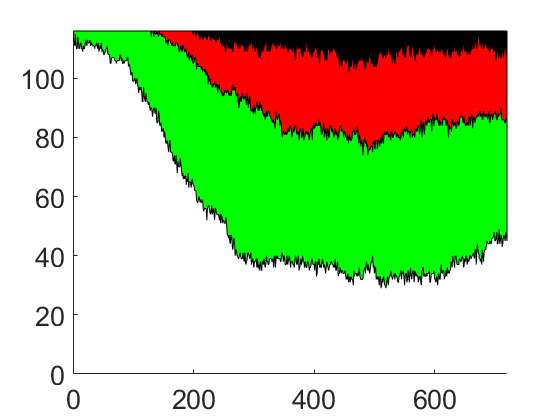}
\includegraphics[width=0.24\textwidth]{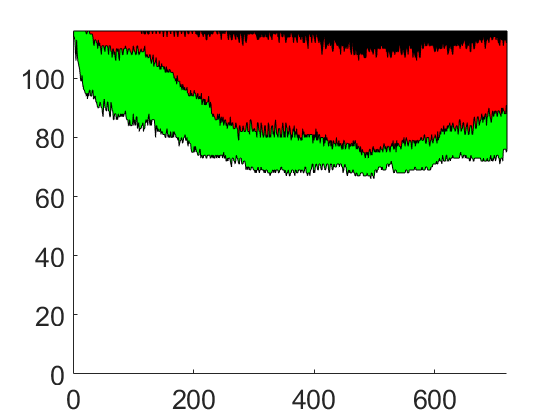}
\includegraphics[width=0.24\textwidth]{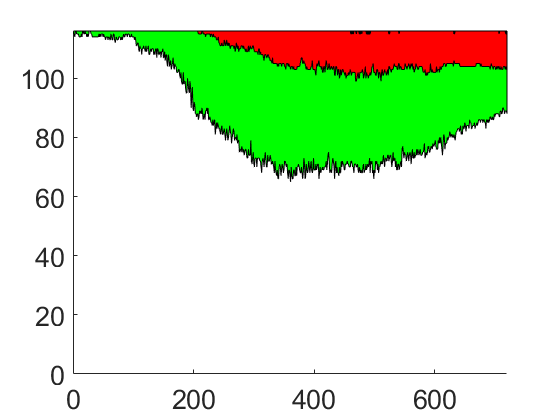}
\includegraphics[width=0.24\textwidth]{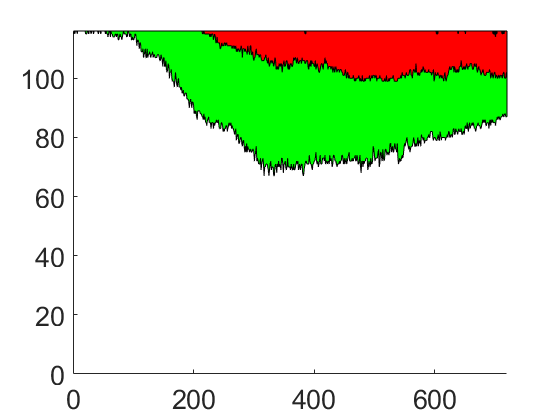}

\includegraphics[width=0.24\textwidth]{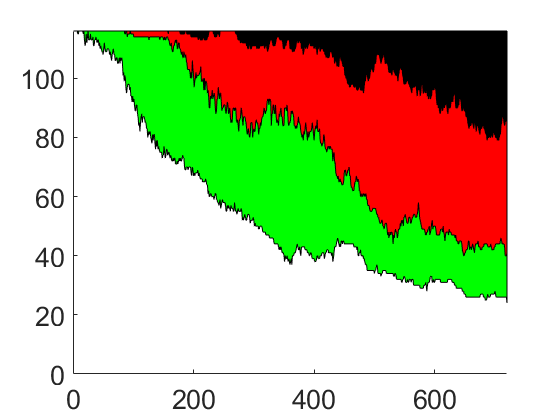}
\includegraphics[width=0.24\textwidth]{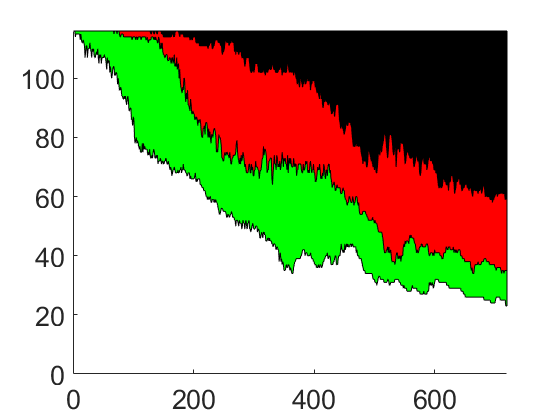}
\includegraphics[width=0.24\textwidth]{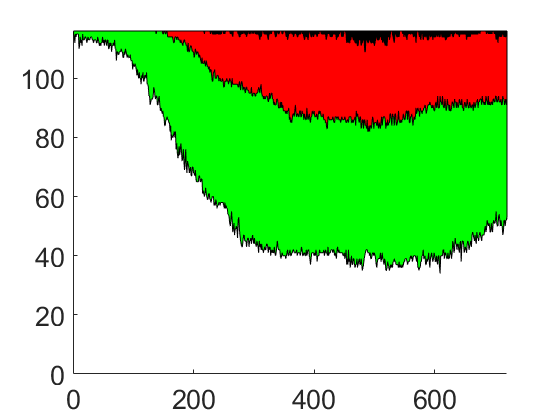}
\includegraphics[width=0.24\textwidth]{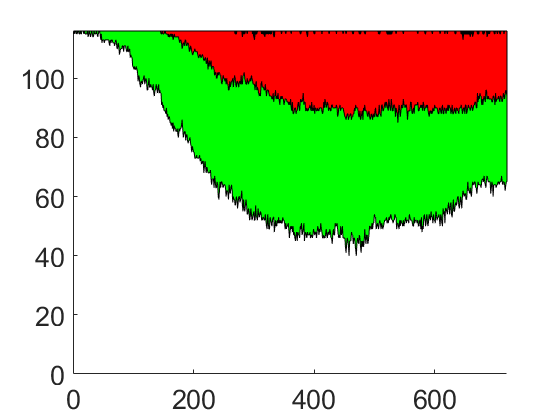}
\caption{The numbers of roads in four different congestion modes over simulation time. Each step in time horizon axis corresponds to $10$ seconds.}\label{fig_compareDensities}
\end{figure}

In order to test the effectiveness of the above control traffic signal control methods in avoiding traffic congestion and reducing travel time of vehicles in the urban network, we compare the simulation results according to five evaluation criteria as follows.
\begin{enumerate}
\item Criterion 1: the total number of vehicles entering the urban network over simulation time.
\item Criterion 2: the total number of vehicles exiting the urban network over simulation time.
\item Criterion 3: the total number of active vehicles crossing internal junctions over simulation time.
\item Criterion 4: the mean of wait time of vehicles in the urban network.
\end{enumerate}
The above four criteria of tested control methods are reported in TABLE. \ref{tb_vb}.
In each cell of this table, we provide both the result of the performance index and the improvement of the control methods compared with the pretimed control method.
\begin{enumerate}\setcounter{enumi}{3}
\item The level of urban congestion in the traffic network is measured by the relative occupation of vehicles in all road links.
This perfomance index for one road link $z$ is computed by $n_z^{rel}(\Delta \tilde{t}) = \frac{n_z(\Delta \tilde{t})}{\overline{n_z}}$ where $\Delta \tilde{t}$ in second.
We consider one road in a low-crowded mode if $n_z^{rel}(\Delta \tilde{t}) \le 0.45$; a medium-crowded mode if $0.45 < n_z^{rel}(\Delta \tilde{t}) \le 0.75$; a high-crowded mode if $0.75 < n_z^{rel}(\Delta t) \le 0.95$; a congested mode if $n_z^{rel}(\Delta t) > 0.95$.
\end{enumerate}
Fig. \ref{fig_compareDensities} represents the numbers of roads with four different modes over simulation time corresponding to four different control methods and three different scenarios.
In this figure, each step in time axis corresponds to $10$ seconds.
The first row corresponds to Scenario 1, the second row corresponds to Scenario 2 and the last row corresponds to Scenario 3.
First, second, third and fourth column presents respectively the results of the pretimed control method, the BP method, the nominal MPC traffic signal control method and the stochastic MPC traffic signal control method.
At each time step, the numbers of roads in low-crowded mode, medium-crowded mode, high-crowded mode and congested mode are described respectively by the region in white, green, red and black colors.

As shown in TABLE \ref{tb_vb} and Fig. \ref{fig_compareDensities}, it is observable that MPC control approaches are successful in reducing the number of roads with high congestion risk and wait time of vehicles in the urban network while enhancing the total throughput over this network.
In Scenario 1, the total throughput corresponding to MPC-based methods is comparable to BP-based control method, which is well-known to provide maximum throughput in the case of low traffic demands. When the traffic demands increase in Scenario 2 and Scenario 3, the performance of the pretimed control method and BP-based method decrease significantly. Specially, the traffic congestion occurs in Scenario 3 under the control of these two methods.
Meanwhile, MPC-based methods not only exploit the capacity of roads better (see Criterion 1, 2) but also enhance the smooth movement of vehicles (see Criterion 3, 4).
Although the nominal MPC traffic signal control method provides a good performance, the stochastic one shows an improvement in control performance as the traffic demands of the urban network increase.
\subsection{Computational load of the proposed control method}
To test the computational effectiveness of the update rule \eqref{eq_updatelaw} when solving \eqref{eq_distributedSMPC}, we use MATLAB to implement the method and measure the execution time.
Our experiments are conducted in a computer with chip Intel Core I5 8500 and $16$ GB RAM.
Fig.\ref{fig_converged} illustrates the evolution of the distance from the estimated solution to the true optimal solution, i.e., $||\textbf{x}(l) - \textbf{x}^{opt}||$, and the termination condition in \eqref{eq_terminated} when $\rho = 0.01$ and $K = 3$.
We can see that after about $1400$ iteration steps, the estimated solution goes very close to the optimal one.
In addition, from Theorem \ref{th_ADMM}, it is reasonable to choose $tol = 10^{-6}$. 
\begin{figure}[htb]
\begin{center}
\includegraphics[width=0.41\textwidth]{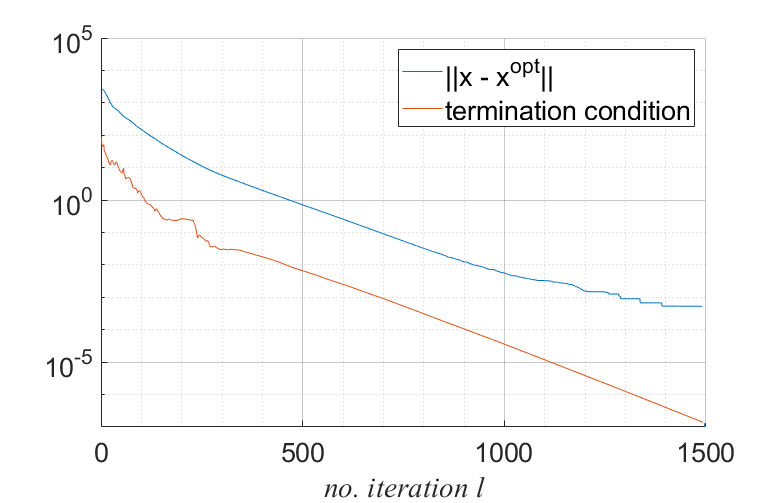}
\caption{The convergence of the update rule \eqref{eq_updatelaw} when $\rho = 0.01$.} 
\label{fig_converged}
\end{center}
\end{figure}
In the previous subsection, we use Algorithm \ref{alg_proposed_control} to solve the distributed stochastic MPC traffic signal control problem \eqref{eq_distributedSMPC} when assuming that the traffic network is divided into multiple subnetworks and each of them corresponds to one junction. The computation time required by all agents corresponding to one control time step is in the range of $14.84 \sim 25.34$ seconds.
We expect significant reduction in the case of there are more computers working in parallel.
In order to test the effectiveness of the proposed control method in distributed setup, we use functions \textit{tic} and \textit{toc} in MATLAB to measure the time taken by one local controller to update its local variables in every iteration, i.e., one loop of Algorithm 1.
For every parallel process which can start at the same time, we add the maximum value of the time length taken by local controllers to the total execution time for running the proposed control method.
\begin{table}[htb]
\centering
\caption{Computational load of the proposed control method for solving stochastic MPC traffic signal control problem (of the urban network in Fig. \ref{fig_testednetwork}) with different $K$.}
\label{tb_convergence1}
\scalebox{0.75}{
\begin{tabular}{c|c|c|c|c|c|c}
\hline
\multirow{2}{*}{K} & \multicolumn{2}{c|}{\# of iteration steps} & \multicolumn{2}{c|}{Centralized time (s)} & \multicolumn{2}{c}{Distributed time (s)}\\
\cline{2-7}
 & average & maximum & average & maximum & average & maximum\\
\hline
1 & $560$ & $896$ & $3.10$ & $4.88$ & $0.15$ & $0.28$\\
\hline
2 & $820$ & $1280$ & $9.89$ & $15.01$ & $0.45$ & $0.67$\\
\hline
3 & $996$ & $1370$ & $18.02$ & $25.34$ & $0.93$ & $1.25$\\
\hline
4 & $1283$ & $1791$ & $39.62$ & $52.51$ & $1.92$ & $2.55$\\
 \hline
5 & $1620$ & $2187$ & $61.42$ & $84.68$ & $3.89$ & $4.25$\\
 \hline
6 & $2005$ & $2975$ & $102.36$ & $152.42$ & $6.21$ & $7.62$\\
 \hline
\end{tabular}}
\end{table}
TABLE~\ref{tb_convergence1} shows the average and maximum computation time of the update \eqref{eq_updatelaw} until the termination condition \eqref{eq_terminated} is satisfied for some different horizontal time $K$.
In this table, we also provide the average and maximum numbers of iteration steps in each case. Note that, these numbers are the same in distributed and centralized setups.
It is easy to see that the execution time is smaller than the time interval between two consecutive control steps in both centralized and distributed setups.
Specially, the execution time required to obtain the optimal stochastic traffic signal splits is only one or two seconds in the distributed setup.
So, our proposed control method can be considered promising when a good communication among local controllers is available.
Furthermore, we apply our proposed control method to solve the nominal MPC traffic signal control problem.
As shown in TABLE \ref{tb_convergence2}, the computational load is significantly reduced when compared to the stochastic case.
It is reasonable to state that our proposed control method is also applicable to the nominal MPC traffic signal control approach when the estimation of traffic model parameters is highly confident.
\begin{table}[htb]
\centering
\caption{Computational load of the proposed control method for solving nominal MPC traffic signal control problem (of the urban network in Fig. \ref{fig_testednetwork}) with different $K$.}
\label{tb_convergence2}
\scalebox{0.75}{
\begin{tabular}{c|c|c|c|c|c|c}
\hline
\multirow{2}{*}{K} & \multicolumn{2}{c|}{\# of iteration steps} & \multicolumn{2}{c|}{Centralized time (s)} & \multicolumn{2}{c}{Distributed time (s)}\\
\cline{2-7}
 & average & maximum & average & maximum & average & maximum\\
\hline
1 & $840$ & $948$ & $1.67$ & $1.88$ & $0.1$ & $0.15$\\
\hline
2 & $1505$ & $1831$ & $6.85$ & $8.11$ & $0.34$ & $0.41$\\
\hline
3 & $1895$ & $2378$ & $11.532$ & $14.17$ & $0.52$ & $0.63$\\
\hline
4 & $2128$ & $2593$ & $22.34$ & $26.04$ & $1.11$ & $1.35$\\
 \hline
5 & $2237$ & $2731$ & $29.02$ & $34.25$ & $1.28$ & $1.82$\\
 \hline
6 & $2875$ & $2987$ & $45$ & $45.81$ & $3.15$ & $3.2$\\
 \hline
\end{tabular}}
\end{table}

\section{Conclusion}
In this paper, we proposed a distributed stochastic MPC traffic signal control method for an urban network when the traffic model parameters cannot be estimated precisely.  
Assuming that the exogenous in/out-flows and the turning ratios of the downstream traffic flows are random parameters with known expected values and variances, the optimal traffic signal splits of the road links are found: 1) to minimize the expectation of a cost function, which corresponds to some performance indexes of the network; 2) to satisfy all hard constraints on the limitation of traffic signal splits and downstream traffic flows; and 3) to guarantee that the probability of traffic congestion is less than a small number $\epsilon_t$ for all road links.
Microscopic simulation results illustrate that our proposed method deals well with uncertainties in traffic model parameters' estimation.

Taking the advantage of the spatial separability of the urban network, we designed a multiagent framework, in which the considered urban network is divided into multiple subnetworks. The stochastic MPC traffic signal control problem of the overall network is transformed into an union of multiple subproblems with some coupled constraints.
Each subproblem corresponds to one subnetwork and is controlled by one agent.
Based on the proximal ADMM scheme, our proposed method requires every agent to use only its own information and the information of its neighboring agents. In addition, each agent uses only simple arithmetic operators in its local computation work. By sharing the computation load among agents, the execution time of our control methods could be reduced significantly in a distributed manner.
In our future works, we would like to consider variable cycle lengths and the weights of performance indexes in the cost function for further optimizing the traffic signal control problems.
\section*{acknowledgement}
This work was supported by the National Research Foundation of Korea (NRF) grant funded by the Korea Government (MSIT) under Grant 2022R1A2B5B03001459.

\bibliographystyle{IEEEtran}
\bibliography{mylib}
\end{document}